\tikzset{
    master/.style={
        execute at end picture={
            \coordinate (lower right) at (current bounding box.south east);
            \coordinate (upper left) at (current bounding box.north west);
        }
    },
    slave/.style={
        execute at end picture={
            \pgfresetboundingbox
            \path (upper left) rectangle (lower right);
        }
    }
}
\renewcommand{\poly}{\ensuremath{{{\sf poly}}}\xspace}
\renewcommand{\TC}{\ensuremath{{\bf TC}}}
\newcommand{\LC}{\ensuremath{{\bf LC}}}
\newcommand{\R}{\ensuremath{\mathbb{R}}}
\newcommand{\In}{\textrm{In}}
\newcommand{\eps}{\varepsilon}
\definecolor{darkgreen}{rgb}{0,0.5,0}
\definecolor{lightblue}{RGB}{0,176,240}
\definecolor{darkblue}{RGB}{0,112,192}
\definecolor{lightpurple}{RGB}{124, 66, 168}
\definecolor{grey}{RGB}{139, 137, 137}
\definecolor{maroon}{RGB}{178, 34, 34}
\definecolor{green}{RGB}{34, 139, 34}
\definecolor{types}{RGB}{72, 61, 139}
\definecolor{gold}{rgb}{0.8, 0.33, 0.0}
\definecolor{darkgray}{gray}{0.3}
\newcommand{\skiptext}[1]{}
\definecolor{darkred}{rgb}{0.5, 0, 0}
\definecolor{darkgreen}{rgb}{0, 0.5, 0}
\definecolor{darkblue}{rgb}{0,0,0.5}
\newcommand\markx[2]{}
\renewcommand{\path}{\ensuremath{\mathsf{path}}\xspace}
\newcommand{\N}{\mathbb{N}}
\newcommand{\ignore}[1]{}
\renewcommand{\source}{{s}}
\newcounter{task}
\newtheorem{thm}{Theorem}[section]      % A counter for Algorithms etc
\newtheorem{theorem}[thm]{Theorem}
\newtheorem{conj}[thm]{Conjecture}
\newtheorem{lemma}[thm]{Lemma}
\newtheorem{claim}[thm]{Claim}
\newtheorem{corollary}[thm]{Corollary}
\newtheorem{fact}[thm]{Fact}
\newtheorem{proposition}[thm]{Proposition}
\newtheoremstyle{boxes}% name  
{2pt}% ?Space above ? 
{0pt}% ?Space below ?
{}% ?Body font ?
{}% ?Indent amount ?
{\bfseries}% ?Theorem head font?
{}% ?Punctuation after theorem head ?
{\newline}% ?Space after theorem head ?
{\thmname{#1}\thmnumber{ #2}:  
\thmnote{#3}}
\theoremstyle{boxes}
\newtheorem{myalgo}[thm]{Algorithm}%{\bf}{} %
\theoremstyle{definition}
\newtheorem{remark}{Remark}
\newcommand{\elaine}[1]{{\footnotesize\color{magenta}[Elaine: #1]}}
\newcommand{\gnote}[1]{{\footnotesize\color{blue}[Gilad: #1]}}
\newcommand{\weikai}[1]{{\footnotesize\color{green}[WK: #1]}}
\renewcommand{\elaine}[1]{}
\renewcommand{\gnote}[1]{}
\renewcommand{\weikai}[1]{}
\newenvironment{MyEnumerate}[1]{\begin{enumerate}\setlength{\itemsep}{0.1cm}
\setlength{\parskip}{-0.05cm} #1}{\end{enumerate}}
\newenvironment{MyItemize}[1]{\begin{itemize}\setlength{\itemsep}{0.1cm}
\setlength{\parskip}{-0.05cm} #1}{\end{itemize}}
\newcounter{cnt:challenge}
\begin{document}
%
% paper title
% Titles are generally capitalized except for words such as a, an, and, as,
% at, but, by, for, in, nor, of, on, or, the, to and up, which are usually
% not capitalized unless they are the first or last word of the title.
% Linebreaks \\ can be used within to get better formatting as desired.
% Do not put math or special symbols in the title.

%\title{\name: A (Somewhat) Tight Oblivious Parallel RAM}
%\title{Path Oblivious Heap}
%\title{On the Circuit Complexity of Compaction and Selection}
\begin{titlepage}
%\title{\bf Linear-Sized Circuits for Compaction and Selection}
\title{\bf Sorting Short Keys in Circuits of Size $o(n \log n)$}
\author{Gilad Asharov \\ Bar-Ilan University %\\ {\tt gilad.ash@gmail.com} 
\and Wei-Kai Lin \\ Cornell University\and Elaine Shi \\ CMU/Cornell}
\date{{\tt Gilad.Asharov@biu.ac.il, wklin@cs.cornell.edu, runting@gmail.com}}

\maketitle

\begin{abstract}
We consider the classical problem of sorting an input array containing $n$ elements, 
where each element is described with a $k$-bit comparison-key and a $w$-bit payload.
A long-standing open problem  is 
whether there exist $(k + w) \cdot o(n \log n)$-sized boolean circuits for sorting. 
A landmark result in this area
is the work by Ajtai, Koml\'{o}s, and Szemer{\'e}di (STOC'83), where they showed how 
to achieve sorting circuits with $(k + w) \cdot O(n \log n)$ boolean gates. 
The recent work of Farhadi et al.~(STOC'19)
showed that if the famous Li-Li network coding conjecture is true, then sorting 
circuits of size $w \cdot o(n \log n)$ do not exist for general $k$; however, no unconditional lower bound 
is known (in fact proving super-linear circuit lower bounds in general is out
of the reach of existing techniques).

In this paper, we show that 
one can overcome the $n\log n$ barrier when the keys to be sorted are short.
Specifically, we prove that 
there is a circuit
with $(k + w) \cdot O(n k) \cdot \poly(\log^*n - \log^* (w + k))$ boolean gates 
capable of sorting any input array containing $n$ elements, each described with  a $k$-bit key
and a $w$-bit payload.
Therefore, if the keys to be sorted are short, say, $k < o(\log n)$, our result is asymptotically better than
the classical AKS sorting network (ignoring $\poly\log^*$ terms); and 
we also overcome the $n \log n$ barrier in such cases. Such a result might be surprising initially 
because it is long known that comparator-based techniques
must incur $\Omega(n \log n)$
comparator gates even when the keys to be sorted are only $1$-bit
long (e.g., see Knuth's ``Art of Programming'' textbook).
To the best of our knowledge, we are the first to achieve
non-trivial results for sorting circuits using non-comparison-based techniques.
We also show that if the Li-Li network coding conjecture is true, 
our upper bound is optimal, barring $\poly\log^*$ terms, 
for every $k$ as long as $k = O(\log n)$.

\end{abstract}

\end{titlepage}

%\thispagestyle{empty}

% !TEX root =  compaction-circuit.tex

\elaine{mention somewhere in intro that aks has been state of the art, and there is zigzag sort but
that's theoretically worse.
some of the techniques come from study of oblivious rams, but it's non-trivial to adapt it to circuit model
and moreover it's interesting that the study of these computational models converge.
reviewer: an algorithmic task that is so basic it's remarkable it has not been completely solved years ago)
} 

\section{Introduction}
\label{sec:intro}
Sorting is one of the most fundamental problems in algorithms design
and complexity theory.
Suppose we want to sort an input array with $n$ elements, each  
described with a $k$-bit comparison key and a $w$-bit payload.
A long-standing open question is whether there exists 
circuits with $(k + w) \cdot o(n \log n)$ boolean gates where each gate
is assumed to have constant fan-in and constant fan-out~\cite{isthereoramlb}.
A landmark result in this space is the work by 
Ajtai, Koml\'{o}s, and Szemer{\'e}di~\cite{aks}, 
where they showed how
to achieve sorting circuits with $O(n \log n)$ comparators --- 
since each comparator can be implemented with $O(k + w)$ boolean gates,
AKS's construction can be implemented as a circuit with $(k + w) \cdot O(n \log n)$ boolean gates.
Although it is well-known that the $n \log n$ barrier is necessary in the 
comparator-based model, a natural question left open by AKS
is whether the $n\log n$ barrier can be overcome in the non-comparison-based model. 
Unfortunately, since AKS, 
this line of work became somewhat stuck in terms of both upper bound and lower bound.

On the upper bound front,
various works have attempted to simplify the AKS construction and/or reduce
its concrete constants~\cite{akspaterson,aksjoel,zigzag} ---  most notably
the recent ZigZag sort of Goodrich (STOC'14)~\cite{zigzag} departed more significantly from the original AKS ---  
unfortunately, none of these works achieved new theoretical improvements and all of them focused on the comparator-based model.
It is worth pointing out that 
in the RAM model, it is long known how to leverage
non-comparison-based techniques to get almost linear-time 
sorting~\cite{sortlinearandersson,Kirkpatricksort,hansort00,hansort01,thorupsort}; but these techniques do not easily translate to the circuit model
since they critically rely on the ability to 
dynamically access memory. Another related work is
that of Lin, Shi, and Xie~\cite{osortsmallkey}, who showed
how to sort short keys on an Oblivious RAM in $o(n \log n)$ time --- 
unfortunately, their algorithm 
is {\it randomized} and there is no obvious way  
to convert it to the circuit model.

%Furthermore, most of these works still fall within AKS's algorithmic paradigm 
%with the exception of 
%the more recent ZigZag sort~\cite{zigzag} (STOC'14); although ZigZag sort proposed interesting  
%new algorithmic techniques

\gnote{Break paragraphs. I think we should also move the $\Omega(n \log n)$ comparators that appears after Theorem 1.1, and put after theorem 1.1 just a reminder}
On the lower bound side, there is also little progress:
in fact, proving any unconditional super-linear circuit lower bounds 
is beyond the reach of existing techniques.
The recent work of Fahardi et al.~\cite{sortinglbstoc19} proved a conditional lower bound: 
they showed that 
%%$(k + w) \cdot O(n \log n)$ boolean gates.
%To date, we know of no construction that 
%can overcome the $n \log n$ barrier; and in fact 
%the recent work of Fahardi et al.~\cite{sortinglbstoc19} 
%showed that 
if the famous Li-Li network coding conjecture is true~\cite{lilinetcoding}, then
the $n\log n$ barrier is inherent for any boolean circuit
that implements sorting. In other words, to construct an $o(n \log n)$ sorting circuit 
would require disproving the Li-Li network coding conjecture.
The Li-Li network coding conjecture, roughly speaking, posits that network coding does not help 
improve the transmission rate relative to multi-commodity flow in undirected graphs, 
in the case of multiple unicast sessions~\cite{lilinetcoding}.
While there is some evidence showing why the conjecture might possibly 
be true~\cite{braverman2016network}, proving or disproving it is beyond the 
reach of existing techniques.
%
%on the other hand, constructing an $o(n \log n)$ upper bound, even if possible, 
%would require overcoming significant technical barriers, 
%especially in light of the recent conditional lower bound by Fahardi et al. (STOC'19)~\cite{sortinglbstoc19}. 
%To date we know of no (non-trivial) unconditional lower bounds on the size of boolean circuits
%for sorting, and in fact proving any 
%super-linear circuit lower bound is out of the reach 
%of existing techniques.
%\elaine{explain what li li network conjecture is very briefly}

%In this paper, we advance  
%the state-of-the-art in 
%our understanding of the complexity of sorting circuits, 
Since sorting is a very basic 
primitive in computer science, understanding its circuit complexity 
is an important open direction in theoretical computer science.
In this paper, we make some new 
in this somewhat stagnant front; specifically 
we show
%progress in advancing our understanding of sorting circuit complexity,
the following novel results.

%\subsection{Our Results and Contributions}

\paragraph{Main result 1: sorting circuits that overcome the $n\log n$ barrier for small keys.} 
We show %that, perhaps surprisingly at first sight, 
a result that might be surprising at first sight:
when the keys to be sorted are short, one can in fact overcome the $n\log n$ barrier.
Specifically, we prove the following theorem:

\begin{theorem}
Let $\epsilon > 0$ be an arbitrarily small constant.
$n$ elements
each $w$-bit in length and tagged with a $k$-bit key can be sorted
by a boolean circuit of
size\footnote{\label{fnt:bigwintro} As will be clear later, the case
when $w + k$ is large is easier. Therefore, 
henceforth we often write the expression
$\max\left(1, (\log^* n - \log^*(w + k))^{2 + \epsilon}\right)$ 
simply as $(\log^* n - \log^*(w + k))^{2 + \epsilon}$ 
(i.e., we assume that the expression
is not smaller than $1$).}
$$(k+w)\cdot O(nk) \cdot \max\left(1, (\log^* n - \log^*(w + k))^{2 + \epsilon}\right).$$
\label{thm:introsort}
\end{theorem}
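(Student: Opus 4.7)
The strategy is a two-step reduction: first I reduce sorting $k$-bit keys to $k$ invocations of $1$-bit stable sort (i.e.\ tight compaction), and then I build a tight compaction circuit of size $(k+w)\cdot O(n)\cdot \poly(\log^* n - \log^*(w+k))$.

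For the outer reduction I would use the standard radix sort: to sort $n$ elements on a $k$-bit key, run a stable sort on each bit of the key in turn, starting from the least significant bit. A $1$-bit stable sort only has to move all elements whose bit is $0$ before all elements whose bit is $1$ while preserving the relative order within each group, after which an element still carries its full $(k+w)$-bit payload. If such a $1$-bit stable sort can be done in $C(n,w+k)$ gates, then the total cost becomes $k\cdot C(n,w+k)$, which matches the claimed bound exactly once $C(n,w+k) = (k+w)\cdot O(n)\cdot \poly(\log^* n - \log^*(w+k))$.

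The hard part is the tight-compaction circuit. The naive implementation via a parallel prefix sum over the $0$-marks gives $O(n\log n\cdot (w+k))$, because each partial count is an $O(\log n)$-bit integer; this is a $\log n$ factor too much. To shave this factor I would adapt the expander-based compaction ideas from the oblivious RAM line of work (e.g.\ the techniques behind \cite{osortsmallkey}) to the circuit model. Concretely, I would partition the input into roughly $n/L$ blocks of size $L$ for some parameter $L = \Theta(\log n)$, locally compact within each block via a small base-case routine (AKS on a block suffices since the block is tiny), and then route the ``excess'' marked elements across blocks along a hardwired bipartite expander. Using an explicit expander family (such as Lubotzky--Phillips--Sarnak) makes the routing pattern fixed and wire-level deterministic. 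Recursing on the block-offset computation yields a recurrence of the rough shape $T(n) \le T(n/\log n) + O(n(k+w))$, unrolling in $O(\log^* n)$ levels; the recursion bottoms out once the sub-problem has elements whose width $k+w$ already dominates the local AKS cost, which is precisely the source of the $-\log^*(w+k)$ correction.

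The main obstacle I expect is turning the expander-based compaction into a fully deterministic boolean circuit without re-introducing a $\log n$ factor. Analogous primitives in the OPRAM setting typically rely on randomized placements or on probabilistic routing-congestion arguments, neither of which is directly compatible with a fixed circuit description. The bulk of the technical work is therefore (i) showing that explicit expander families give the combinatorial routing guarantee for \emph{every} possible input pattern of marks, and (ii) carefully accounting for the $\poly\log^*$ overhead through the recursion so that it factorizes exactly as $\max(1,(\log^* n - \log^*(w+k))^{2+\epsilon})$ rather than blowing up into a $\log$. Once these are in hand, the outer radix-sort reduction finishes the proof mechanically.
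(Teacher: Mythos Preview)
Your outer reduction is broken: radix sort requires \emph{stable} $1$-bit sort, and stable tight compaction cannot be done in $o(n\log n)$ gates. Lin, Shi, and Xie~\cite{osortsmallkey} prove an $\Omega(n\log n)$ lower bound on selector gates for any stable compactor in the indivisible model, and Afshani et al.~\cite{lbmult} remove the indivisibility restriction conditional on the Li--Li conjecture. So the very step you call ``mechanical'' --- do $k$ rounds of stable $1$-bit sort --- is exactly where the $n\log n$ barrier lives. Any linear-sized compaction circuit you build will be non-stable, and then radix sort simply fails to produce a sorted output.

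The paper sidesteps this by a different outer recursion that never needs stability. It first builds a linear-sized \emph{selection} circuit (median-of-medians, using the non-stable compactor for the partitioning step), then recurses on two parameters $(n,K)$ where $K=2^k$ is the number of distinct keys: find the median, compact to split the array in half, observe that at least one half now has at most $\lceil K/2\rceil$ distinct keys, and recurse on both halves with parameters $(n/2,K)$ and $(n/2,\lceil K/2\rceil)$. The recurrence $T(n,K)=T(n/2,K)+T(n/2,K/2)+O(n(w+k))\cdot h(n,w+k)$ solves to $O(nk(w+k))\cdot h(n,w+k)$.

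For the inner compaction primitive your sketch is in the right neighborhood --- expander-based routing in the spirit of Pippenger --- but the paper's actual mechanism is not a single $T(n)\le T(n/\log n)+O(n(k+w))$ recursion. It is a mutual bootstrapping between \emph{loose} compaction (compress a $1/128$-sparse array to half its length) and \emph{tight} compaction: loose gives tight with constant blowup, and a tight compactor with $C\cdot n f(n)$ boolean gates gives a loose compactor with $C'\cdot n f(f(n))$ gates by chunking into $f(n)$-sized pieces and compacting at two scales. Iterating this $O(\log(\log^* n - \log^* w))$ times turns the initial $O(n\log n + nw)$ loose compactor (Pippenger's construction implemented as a circuit) into one of size $O(nw)\cdot\poly(\log^* n - \log^* w)$, and a careful re-parameterization pushes the polynomial down to $2+\epsilon$.
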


Therefore, for short keys $k = o (\log n)$, our construction 
is asymptotically better than the classical result of AKS~\cite{aks} (ignoring
$\poly\log^*$ terms).
A result of this nature might be initially  
surprising due to a couple natural barriers.
First, the famous 0-1 principle
for sorting (see Knuth's famous textbook~\cite{knuthbook}), 
that any {\it comparator}-based sorting circuit
must consume $\Omega(n \log n)$ comparators even when the keys
to be sorted are only 1 bit long.
Due to the 0-1 principle, our result is necessarily non-comparison-based.
To the best of our knowledge, our work is the first  
to achieve non-trivial results for sorting circuits 
using {\it non-comparison-based} techniques; and thus our techniques depart
significantly from all previous sorting circuit constructions 
which are comparison-based~\cite{aks,zigzag,bitonicsort}.
Note that although non-comparison-based techniques have been exploited
in the RAM model to get almost linear-time sort --- e.g., Radix sort, counting sort, and 
others~\cite{sortlinearandersson,Kirkpatricksort,hansort00,hansort01,thorupsort} ---  
all these algorithms depend heavily on the ability to make input-dependent memory accesses 
and thus they do not have equally efficient counterparts in the circuit model.

Besides the comparison-based barrier, 
another subtlety is that 
for such a result to hold, the sorting circuit necessarily cannot preserve {\it stability}
even for 1-bit keys --- recall that in sorting, stability requires that in the output, elements
with the same key preserve their relative ordering in the input array.
More specifically, 
Lin, Shi, and Xie~\cite{osortsmallkey} showed that
any {\it stable} compaction circuit that treats the payload as {\it indivisible}
must have at least $\Omega(n \log n)$ selector gates --- here 
the indivisibility assumptions requires 
that the circuit only move the payload around through selector gates
but not perform boolean computation on the encoding of the payload.
The indivisibility restriction 
in Lin et al.'s lower bound~\cite{osortsmallkey} can be removed 
if one assumes the Li-Li network coding conjecture~\cite{lilinetcoding} 
--- this is implied by Afshani et al.~\cite{lbmult}'s 
recent work\footnote{Although 
their paper~\cite{lbmult} does not explicitly state the $\Omega(n w \log n)$ 
lower bound result for stable
sorting circuits (even for 1-bit keys), the statement is directly
implied by Theorem 2 of their paper.}.

%\gnote{Mention implementing Radix-sort ($O(kn)$ in RAM) and counting sort ($O(n+2^k)$ in RAM) in circuits? what is the size of the circuit as a function of $n,k,w$?}

\paragraph{Main result 2: linear-sized compaction circuit.} 
%For the special case where $k=1$, our upper bound result (Theorem~\ref{thm:introsort})
%implies that there is an $O(nw) \cdot  \poly(\log^* n - \log^* w)$-sized boolean circuit 
%that can sort elements with 1-bit keys and $w$-bit payloads.
\ignore{
Compaction (also called tight compaction) is the following problem: given an input array consisting
of $n$ elements each of bit-width $w$ where a subset of the elements are marked as 
distinguished, output
a permutation of the input array where all the distinguished elements 
are moved to the front, and
all non-distinguished elements moved to the end.
}
When the keys to be sorted are each 1 bit long, i.e., $k = 1$, 
the resulting problem is also called
{\it compaction} (or tight compaction)
%This 1-bit-key special case is also 
%called {\it compaction} (or {\it tight compaction}) 
in the algorithms 
literature~\cite{parallelhash,parallelintegersorting,hanselection,comparetree,beateigen}.
In classical algorithms such as Quick Sort, 
or the linear-time median/selection algorithm by Blum et al.~\cite{blumselect,blummedian}, 
compaction (often called ``partitioning'' in this context) 
is adopted to place elements smaller 
than some pivot to the left, and place elements greater than or equal
to the pivot to the right.

Compaction can be trivially accomplished on a Random Access Machine (RAM)
in linear time 
assuming that each element fits in a memory word: 
one can make a linear scan over the input array
and whenever an element marked with the key $0$ 
is encountered, write it to an output array; and then 
repeat the same for elements marked $1$.
For this reason, compaction was never 
a noteworthy abstraction in the RAM model (although indeed
compaction has been extensively studied 
in other computation models --- see Section~\ref{sec:related} for more details).
%However, in other computation model including  
%the Parallel RAM (PRAM) model~\cite{parallelhash,parallelintegersorting,hanselection}
%as well as in Valiant's comparison-based parallel
%computation model~\cite{comparetree,beateigen}, 
%designing an optimal compaction algorithm is non-trivial 
%and thus has stirred the attention and interest of the algorithms community.
%Given the simplicity 
%of the problem statement and the trivial linear-time algorithm on a RAM,
It almost seems natural to expect that compaction should also be attainable with 
a linear-sized circuit; but somewhat surprisingly, prior to our work it was not known
how to accomplish this! 
More surprisingly, 
to the best of our knowledge,
to date the best-known {\it explicit} compaction circuit  
is none other but {\it sorting} itself which incurs $\Theta(n w \log n)$ gates!
In hindsight, the reason why such a result 
was not known earlier was exactly due to the couple barriers 
mentioned earlier, namely, a linear-sized compaction circuit cannot
be comparison-based and cannot preserve stability; and thus natural classes
of approaches would fail.

We show how to construct
a (non-comparison-based, non-stable) 
compaction circuit that is $O(n w)\cdot \poly(\log^* n - \log^* w)$ in size --- 
this is both a stepping stone towards achieving Theorem~\ref{thm:introsort} and also a special case 
of Theorem~\ref{thm:introsort} when $k = 1$. %\gnote{not clear}.
We summarize the compaction result in the following theorem statement.

\begin{theorem}
For any arbitrarily small constant $\epsilon > 0$, 
there exists a circuit %of size $O(nw \poly (\log^*n - \log^* w))$ 
that can correctly sort  
any input array containing $n$ elements each of $w$ bits and tagged with a 1-bit key; moreover,
the circuit's size is upper bounded 
by\footnote{Similar to Footnote~\ref{fnt:bigwintro}, we often
write the expression 
$\max\left(1,(\log^*n - \log^*w)^{2 + \epsilon}\right)$
simply as $(\log^*n - \log^*w)^{2 + \epsilon}$, i.e., assuming
that $w$ is not too large such that 
$(\log^*n - \log^*w)^{2 + \epsilon} \geq 1$.
}
\[
O(n w) \cdot \min\left( \max\left(1,(\log^*n - \log^*w)^{2 + \epsilon}\right), \ \ 2^w/w \right)
\] 
As a special case, if $w \geq \log^{(c)} n$
for any arbitrarily large constant $c \geq 1$ or if $w = O(1)$, 
then the circuit size is upper bounded by $O(n w)$.
\label{thm:introcompaction}
\end{theorem}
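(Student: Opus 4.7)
The plan is to prove Theorem~\ref{thm:introcompaction} by a multi-level bootstrapping: starting from base-case circuits, we iteratively construct compaction circuits for progressively larger universes, losing only a small multiplicative factor at each level. We will arrange each level to roughly exponentiate the universe size, so the number of levels is $\log^*n - \log^*w$, and a tight bound on the per-level overhead yields the claimed $(\log^*n - \log^*w)^{2+\epsilon}$ factor.

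Before entering the recursion I would dispose of the easy regimes. When $w=O(1)$ there are only $O(1)$ distinct payload values, so compaction reduces to an adder tree of $O(n)$ gates that counts $0$'s and $1$'s, followed by a decoder dispatch; this gives the claimed $O(nw)$. When $w \ge \log^{(c)} n$ for a constant $c$, the classical AKS sorter of cost $O(nw\log n)$ already fits inside the target bound because $\log n$ is swallowed by a constant function of $w$. When $2^w/w$ is small, a bucket-by-value circuit of size $O(n \cdot 2^w)$ yields the second term of the $\min$. The inductive construction below handles the remaining intermediate regime.

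For the inductive step, suppose we have a compaction circuit $\Pi_m$ for arrays of length $m$ with per-element cost $\alpha_m$. To compact an array of length $n \gg m$, I would: (i)~partition the array into $n/m$ blocks of length $m$ and run $\Pi_m$ inside each, costing $n w \alpha_m$ and leaving every block in the form (run of $1$'s)(run of $0$'s); (ii)~produce a length-$n/m$ metadata array whose entries are the per-block $1$-counts, each $O(\log m)$ bits wide, using $O(n)$ gates in adder trees; (iii)~recursively invoke the compaction/sorting machinery on this much shorter metadata array to compute destination pointers for every block's $1$-run in the global output — because the metadata array has only $n/m$ entries, we can absorb a substantially larger per-element cost here, even applying AKS on it if convenient; and (iv)~use the destination pointers as control signals in an oblivious routing network that ships each block's $1$-run to its final global position.

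The main obstacle is step~(iv): a barrel shifter that can realize an arbitrary cyclic shift of an $n$-element array already costs $\Theta(nw\log n)$ gates, which would annihilate any savings. My plan to sidestep this is to decompose the alignment into a coarse block-granular relocation, controlled by the pointers computed in step~(iii) and implemented by a demultiplexer-style network whose cost is $O(nw)$ because the coarse destinations are block-aligned, followed by a residual intra-block realignment whose shift magnitude is at most $m$. The intra-block step can itself be re-expressed as an instance of compaction on a window of $O(m)$ elements, which we solve by reusing $\Pi_m$, contributing another $O(nw\alpha_m)$ gates. Iterating the construction with a parameter schedule roughly of the form $m_{i+1} = 2^{m_i}$ caps the recursion depth at $\log^*n - \log^*w$; provided I can show that each level multiplies the per-element cost by at most $(\log^*n)^{o(1)}$, telescoping over all levels gives the stated $(\log^*n - \log^*w)^{2+\epsilon}$ overhead, and the arbitrarily small slack $\epsilon$ absorbs the overhead introduced by the recursive metadata call in step~(iii).
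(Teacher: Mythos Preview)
Your proposal has the right macroscopic shape (bootstrapping over $\log^*n-\log^*w$ levels, separating cheap metadata from expensive payload movement), but the heart of the argument is step~(iv), and there the plan does not go through as written.

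You assert that the coarse block-granular relocation can be done by ``a demultiplexer-style network whose cost is $O(nw)$ because the coarse destinations are block-aligned.'' Block alignment does not help: after local compaction, block $i$ has $c_i$ ones, and those ones must land in output block $\lfloor p_i/m\rfloor$ where $p_i=\sum_{j<i}c_j$. Many input blocks may map to the same output block (and conversely one input block's run can straddle two output blocks), so you are still solving a data-dependent many-to-one routing of $n/m$ superblocks of width $mw$. A demultiplexer fan-out for each block costs $\Theta((n/m)\cdot mw)=\Theta(nw)$ \emph{per block}, and any Bene\v{s}-style permutation network costs $\Theta(nw\log(n/m))$, exactly the $\log n$ you are trying to kill. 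Saying the residual intra-block shift ``can be re-expressed as compaction on a window of $O(m)$ elements'' also hides the issue: the window boundaries are data-dependent, so you cannot wire up fixed $\Pi_m$ instances without first solving the coarse routing. The paper's whole contribution is precisely here: it uses constant-degree bipartite expanders to build a \emph{loose compactor} (compress a $1/128$-sparse array to half length) and a \emph{loose swapper}, then bootstraps by alternating ``tight $\Rightarrow$ loose'' (chunk, separate dense/sparse chunks, compact sparse ones) and ``loose $\Rightarrow$ tight'' (color misplaced elements, loosely swap, recurse on the residual). The expander is what makes the payload routing genuinely $O(nw)$; without it, your step~(iv) is a restatement of the problem.

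A smaller but concrete error: for $w\ge\log^{(c)}n$ you invoke AKS and claim $\log n$ is ``swallowed by a constant function of $w$.'' That is false already for $c=2$: with $w=\log\log n$ one has $\log n=2^w$, so AKS costs $O(nw\cdot 2^w)$, not $O(nw)$. In the paper this regime is \emph{not} a separate easy case; it falls out of the main construction because $\log^*n-\log^*w\le c=O(1)$.
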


In the above, $\log^{(c)} m$
means taking iterated logarithm of $m$ for a total of $c$ times; and 
$\log^* m$ outputs the minimum $i$ such that $\log^{(i)} m \leq 1$.
Throughout the paper, $\log$ means $\log_2$ unless noted otherwise.
In essence, we show that indeed a linear-sized compaction circuit exists 
for almost all choices of $w$: when $w$ is at least  
$\log^{(c)} n$
for any constant $c$, 
or when $w$ is a constant.
For the narrow regime 
when $w$ is super-constant but asymptotically smaller than
any $\log^{(c)} n$ for a constant $c$,  
our solution is a $\min((\log^* n -\log^* w)^{2+\epsilon}, 2^w/w)$-factor away from optimal.
%and thus we leave open this tiny gap for a tiny regime for future work. 

\paragraph{Main result 3: linear-sized circuit for selection.}
The selection problem aims to select the $m$ 
smallest element given an input array of $n$ elements
each of bit-width $w$.
Given our linear-sized tight compaction 
circuit, it is not too difficult to combine it with the textbook
median-of-median algorithm~\cite{blumselect,blummedian}: as a corollary, we have 
that median/selection can be computed 
with a boolean circuit 
of $O(nw) \cdot \poly(\log^* n - \log^* w)$ too.
%where $w$ denotes the 
%bit-width of the input elements
%and $n$ denotes the number of input elements.
\elaine{maybe elaborate on the problem?}%
%Another interesting corollary of our upper bound result is that {\it median} and {\it selection}
%can be accomplished 
%with an $O(nw) \cdot  \poly(\log^* n - \log^* w)$-sized boolean circuit
%where $w$ denotes the length for encoding each element. 
To put this result in context, recall that 
Blum et al.~\cite{blumselect,blummedian} showed that median/selection can be accomplished 
in $O(n)$ deterministic time on a RAM --- in fact  
this elegant algorithm has been widely adopted in pedagogy.
It seems natural to expect that computing median/selection should be possible
with a linear-sized circuit; but this was not known until our work 
partly because we do not know how to compute
compaction in a linear-sized circuit before. 
%likely due to the natural barriers stated earlier, 
%such a result has eluded the community so far!
A line of work has indeed 
cared about the circuit complexity 
of selection~\cite{yaoselectnet,sortminmem,jimboselect,pippengerselect}, but
all of the prior work focused on comparator-based  circuits.
It is long known that selection circuits 
in the comparator-based model 
suffer from an $\Omega(n \log n)$ size 
lower bound~\cite{yaoselectnet,sortminmem,jimboselect,pippengerselect};
and thus earlier works in this line~\cite{yaoselectnet,sortminmem,jimboselect} 
focused on tightening the constant in front of the $n \log n$.
A very natural question is whether we can construct asymptotically smaller selection circuits 
using non-comparison-based techniques --- it is almost surprising that
no progress has been made along this front given the fundamental nature of the problem! 
%As mentioned,  
%due to the 0-1 principle that was described in Knuth's textbook~\cite{knuthbook}, 
Our result for selection is summarized in the following corollary.

\elaine{TODO: change the technical section, we now select all m smallest}

\begin{corollary}[Linear-sized selection circuit]
For any arbitrarily small constant $\epsilon > 0$, 
there exists a circuit %of size $O(nw \poly (\log^*n - \log^* w))$ 
that can select %the $m$-th smallest element or 
all $m$ smallest elements 
given any input array containing $n$ elements each of $w$ bits;
and moreover its size is upper bounded 
by 
\[
O(n w) \cdot \min\left( \max\left(1,(\log^*n - \log^*w)^{2 + \epsilon}\right), \ \ 2^w/w \right)
\] 
As a special case, if $w \geq \log^{(c)} n$
for any arbitrarily large constant $c \geq 1$ or if $w = O(1)$, 
then the circuit size is upper bounded by $O(n w)$.
\end{corollary}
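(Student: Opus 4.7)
The plan is to adapt the classical median-of-medians selection algorithm of Blum, Floyd, Pratt, Rivest, and Tarjan to the boolean circuit model, substituting the linear-sized tight compaction circuit of Theorem~\ref{thm:introcompaction} for the input-adaptive \emph{partition} routine. I will construct, for each $n$, a fixed circuit that takes $n$ elements of $w$ bits and a target rank $m$ and outputs the $m$ smallest elements in its first $m$ output slots; the corollary then follows by instantiating this circuit at the input size.

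The construction proceeds recursively. First, group the $n$ inputs into blocks of $5$ and compute the median of each block in parallel with a constant-size sub-circuit, incurring $O(nw)$ gates in total. Second, recursively invoke the selection circuit of size $n/5$ on these $n/5$ block-medians with target rank $n/10$ to obtain a pivot $p$; by the standard analysis, the rank of $p$ in the original array lies in $[3n/10,\,7n/10]$. Third, two successive applications of the Theorem~\ref{thm:introcompaction} compaction circuit obliviously rearrange the input into three contiguous blocks keyed by the predicates `element $<p$', `element $=p$', and `element $>p$'; an $O(n)$-gate prefix-sum sub-circuit computes the block sizes $L_{<}, L_{=}, L_{>}$.

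The delicate step is the subsequent recursive call. In the RAM algorithm one branches on which of $m\le L_{<}$, $L_{<}<m\le L_{<}+L_{=}$, or $m>L_{<}+L_{=}$ holds, and recurses into the appropriate block with an adjusted target. A circuit must be non-adaptive, so I would instead always invoke a single fixed sub-circuit on an input of size $\lfloor 7n/10\rfloor$: an $O(nw)$-gate multiplexer, driven by the comparisons of $m$ with $L_{<}$ and $L_{<}+L_{=}$, routes either the `$<p$' or the `$>p$' block (padded with dummies up to $\lfloor 7n/10\rfloor$, which is valid since both blocks have size at most $7n/10$) into the sub-circuit together with the adjusted target. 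The middle case is handled by short-circuiting the recursive output to the appropriate suffix of the `$=p$' block. A final application of the compaction circuit then consolidates the chosen $m$ smallest elements into the first $m$ output positions in $O(nw)$ additional gates.

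The overall circuit size satisfies the recurrence
\[
T(n)\;\le\;T(n/5)\;+\;T(\lfloor 7n/10\rfloor)\;+\;O(nw)\cdot f(n,w),
\]
where $f(n,w)=\min\!\bigl(\max(1,(\log^{*}n-\log^{*}w)^{2+\epsilon}),\,2^{w}/w\bigr)$ is the per-call compaction overhead from Theorem~\ref{thm:introcompaction}. Since $1/5+7/10=9/10<1$ and $f$ is non-decreasing in $n$, summing the geometric series solves the recurrence to $T(n)=O(nw)\cdot f(n,w)$, matching the claimed bound. The part I expect to require the most care is a clean verification that the oblivious three-case dispatch, multiplexing between the candidate sub-blocks, and final output consolidation all fit within $O(nw)$ extra gates per recursion level (rather than hiding a super-linear overhead that would destroy the geometric-sum analysis), and that the padded, fixed-size sub-problem faithfully simulates the adaptive RAM branch in all three cases.
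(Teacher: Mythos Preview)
Your proposal is correct and follows essentially the same approach as the paper: implement Blum et al.'s median-of-medians in circuit form, replacing the adaptive partition step with the tight compaction circuit of Theorem~\ref{thm:introcompaction}, and handle the data-dependent branch by always recursing on a fixed-size sub-array of length about $7n/10$ chosen via a selector. The only cosmetic difference is that the paper first builds a circuit that outputs the single $m$-th smallest element and then applies one extra compaction at the end to collect all $m$ smallest, whereas you fold the ``all $m$ smallest'' requirement directly into the recursion; both decompositions yield the identical recurrence $T(n)\le T(n/5)+T(\lfloor 7n/10\rfloor)+O(nw)\cdot f(n,w)$ and hence the same bound.
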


We note that the very recent
works of Asharov et al.~\cite{optorama} and a subsequent follow-up~\cite{paracompact}
have shown how to achieve compaction in linear-time on a 
deterministic Oblivious RAM\footnote{
A deterministic Oblivious RAM is one in which the algorithm's memory access 
patterns must be determined only by the size of input (but not the content of input nor randomness).
%An Oblivious RAM is a probabilistic RAM  
%that requires the distribution of the algorithm's memory access patterns
%to be computationally or statistically close regardless of the input.
}.
Their result is of a different nature from 
the above Theorem~\ref{thm:introcompaction} since circuits and Oblivious RAM are incomparable
computation models.
As we explain in more detail in Section~\ref{sec:related}, 
an $O(n)$-time algorithm on an Oblivious RAM  does not directly lead to 
an $O(nw)$-sized circuit; and an $O(nw)$-sized circuit does not 
directly lead to an $O(n)$-time algorithm on an Oblivious RAM.

%\elaine{TODO: add a conclusion section to do this. also mention constant large in conclusion, make it smaller in future work.}

\paragraph{Main result 4: near optimality of our sorting circuit.} 
Lin, Shi, and Xie~\cite{osortsmallkey} showed that any circuit in the 
indivisible model that sorts $n$ elements 
each with a $k$-bit key
must have at least $\Omega(n k)$ selector gates.
Note that our sorting circuit is indeed in the indivisibility model
and thus we achieve optimality  
for every $k = O(\log n)$ barring $\poly\log^*$ factors.

In our paper, 
we prove a similar lower bound, 
removing the indivisibility 
restriction on the circuit, 
but additionally 
assuming that the famous Li-Li network coding conjecture~\cite{lilinetcoding} is true.
Specifically, we prove the following theorem:

\begin{theorem}
Suppose that the Li-Li network coding conjecture is true (see Conjecture~\ref{conj:main}). Moreover, suppose that 
each element's payload length $w > \log_2 n - k$, and the key length $k \leq \log_2 n$.
Then, any constant fan-in, constant fan-out boolean circuit that can 
sort $n$ elements each with a $k$-bit key 
and a $w$-bit payload must have 
size at least $\Omega(nk \cdot (w - \log_2 n +k))$.
\label{thm:introlb}
\end{theorem}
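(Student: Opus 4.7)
The plan is to adapt the network-coding-based sorting lower bound framework of Farhadi et al.~(STOC'19) to exploit the short-key structure. Their machinery, which yields $\Omega(nw\log n)$ for full-entropy sorting, views a constant-fan-in boolean circuit as an undirected graph whose wires are edges, shows that solving the sorting problem forces the circuit to realize a multi-commodity flow instance induced by the input permutation, and then invokes Li-Li to convert network-coding rate into ordinary flow rate, so that a gate-counting argument lower bounds the number of edges by the required flow volume. I would instantiate exactly the same pipeline, but with a hard distribution tailored to the $k$-bit-key regime.

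The hard distribution I would use is a uniformly random arrangement of the balanced multiset in which each of the $2^k$ keys appears exactly $n/2^k$ times, with independent uniform $w$-bit payloads. On such an input the circuit must move every element into the contiguous output block reserved for its key; the freedom remaining is only the order within each block. Two quantities drive the bound. First, the \emph{routing entropy}: the number of distinct balanced key arrangements is $\binom{n}{n/2^k,\dots,n/2^k}=2^{\Theta(nk)}$, which (following Farhadi et al.) forces $\Theta(nk)$ terminal-disjoint demand pairs on expectation across random bipartitions of the terminals. Second, the \emph{effective payload}: because the sort is unstable, any of the $(n/2^k)!$ intra-block orderings is legal, so $\log_2\bigl((n/2^k)!\bigr) \approx (n/2^k)(\log_2 n - k)$ bits per block, or $n(\log_2 n - k)$ bits in aggregate, can be ``absorbed'' into the positional encoding of the output and need not be routed explicitly; the genuinely unavoidable payload volume is therefore $n(w-\log_2 n + k)$ bits, which is positive precisely under the hypothesis $w>\log_2 n - k$. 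Multiplying the $\Theta(nk)$ demand pairs by the per-pair effective payload bits and invoking Li-Li to transfer from coding rate to edge-count produces the target $\Omega\!\bigl(nk\cdot(w-\log_2 n + k)\bigr)$ bound; the upper hypothesis $k\le \log_2 n$ keeps the balanced-multiset arrangement a meaningful object.

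The main obstacle I foresee is the bookkeeping that turns the two quantities above into a \emph{product} rather than a sum. Concretely, I need to show that the demand multiset cannot be satisfied by routing only $nk$ total bits (amortising the payload across the routing entropy) nor by routing $w - \log_2 n + k$ bits across only $O(n)$ cuts; instead the circuit must simultaneously carry the full $(w-\log_2 n+k)$-bit effective payload across $\Omega(nk)$ independent bipartitions. The cleanest way to enforce this, I expect, is to condition on the output positions within each bucket (which removes the $n(\log_2 n - k)$ free bits while leaving the bucketing information intact), then apply the Farhadi-style averaging over bipartitions to the residual instance whose per-element payload is exactly $w-\log_2 n + k$ and whose routing entropy is still $\Theta(nk)$. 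If that conditional information-theoretic decomposition is carried out carefully and Li-Li is invoked on the resulting flow instance as in Farhadi et al., the $\Omega(nk(w-\log_2 n + k))$ bound follows for constant fan-in, constant fan-out circuits, matching our upper bound of Theorem~\ref{thm:introsort} up to $\poly\log^*$ factors.
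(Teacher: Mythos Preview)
Your proposal is essentially correct and tracks the paper's route: balanced multiset input (each key appearing $n/2^k$ times), identification of $w - \log_2 n + k$ as the effective per-element payload after accounting for the $(n/2^k)!$ intra-bucket freedom, and an $\Omega(k)$ distance factor combined with that payload via the conjecture. Your ``condition on output positions within each bucket'' is precisely what the paper does, phrased there as hardwiring the first $\log_2 n - k$ payload bits of each element to its rank among same-key elements in the input, which makes every element's destination unique.

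The paper's execution differs from your sketch in two concrete ways you may find cleaner than the bipartition averaging you propose. First, the $\Omega(k)$ factor comes from a direct fan-out count rather than an entropy argument: any input bit reaches at most $2^{k/2}$ nodes within depth $k/2$, so for at least $2^k - 2^{k/2}$ of the $2^k$ possible key values its destination bucket lies at distance $\geq k/2$; a probabilistic-method averaging then fixes one key assignment under which a $1-2^{-k/2}$ fraction of payload-input bits are that far from their buckets. Second, the paper augments the circuit graph with $2^k$ auxiliary \emph{aggregator} nodes (one per key bucket) before invoking Conjecture~\ref{conj:main}, which packages the many-to-one bucket structure into a standard $k$-pairs instance. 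One terminological slip in your write-up: the phrase ``$\Theta(nk)$ terminal-disjoint demand pairs \dots\ times the per-pair effective payload bits'' misattributes the factors. There are $n(w-\log_2 n+k)$ bit-level demand pairs, each of unit demand and $\Omega(k)$ shortest-path length; the $k$ is a \emph{path-length} contribution, not a demand count. Your routing-entropy intuition recovers the same product, but the accounting should read (number of pairs) $\times$ (path length) rather than (demand count) $\times$ (payload).
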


In comparison, the recent work of Fahardi et al.~\cite{sortinglbstoc19}
implied a special case of the above lower bound  
when there are no constraints on the length of the the keys 
to be sorted\footnote{The statement in their
paper~\cite{sortinglbstoc19} is for the RAM model but their techniques imply
a lower bound in the boolean circuit model for the general case without
any constraints on the key length. }.
Note also that our lower bound 
works only when the payload size is not too small, i.e.,
when $w > \log_2 n - k$.
%On the other hand, our upper bound 
%works irrespective of the payload size --- more specifically, our upper bound in fact works in 
%the invisibility model.
%(in other words,
%we construct a family of circuits parametrized by the payload size).
Since our upper bound works irrespective of $w$ --- in fact our upper bound
works in the indivisible model --- 
Theorem~\ref{thm:introlb}
also shows that for sufficiently large $w$, our upper bound result is asymptotically optimal 
barring $\poly\log^*$ terms.  % in light of Theorem~\ref{thm:introlb}. 
For the case of small $w$, 
it remains an interesting open question whether there exists
better upper bounds. However, note that any better upper bound for small $w$
cannot be in the indivisible model
like our algorithm, 
due to 
the $\Omega(n k)$ lower bound 
on the number of selector gates in the indivisible model~\cite{osortsmallkey}.

%it remains an open question whether our lower bound can be further tightened, 
%or whether there exists a better upper bound. If a better upper bound actually
%existed for small $w$, such an upper bound 
%must play tricks that involve encoding and numeric computations on the payload ---  
%this follows as a direct implication of the $\Omega(n k)$ lower bound 
%on the number of selector gates in the indivisible model~\cite{osortsmallkey}.

%The above lower bound shows that 
%any upper bound that works irrespective 
%of the payload size, 

\section{Technical Highlight}
\label{sec:highlight}

In this section, we provide an informal technical roadmap of our main ideas.
To obtain the sorting result stated in Theorem~\ref{thm:introsort}, 
our blueprint is the following:
\begin{MyEnumerate}
\item 
First, 
we will construct a linear-sized compaction 
circuit --- here when we say linear-sized, we omit the $\poly\log^*$ term for convenience.
Recall that compaction is the problem of sorting elements with 1-bit keys. 
\item 
Once we know how to 
solve the 1-bit special case with a linear-sized circuit,
 we then build on top this idea
to construct a linear-sized circuit for computing selection (i.e., selecting
the $m$-th smallest element).

\item 
Finally, using linear-size compaction and selection as building blocks,  
we show how to construct a circuit that sorts $k$-bit keys 
through a clever 2-parameter recursion.
\end{MyEnumerate}

Below, we begin by describing how to accomplish linear-sized compaction.
We stress that even for this 1-bit special case, it is important that the scheme
be {\it non-comparison-based}, since otherwise due to the 
0-1 principle (see Knuth's ``Art of Programming'' textbook~\cite{knuthbook}), 
the circuit must incur at least $\Omega(n \log n)$ comparators.
In our presentation below, we will point out where our scheme
relies on non-comparison-based techniques, since this is why our techniques
fundamentally depart from previous sorting networks (all of which, to the best of our knowledge,
are comparator-based).

\subsection{Warmup: From Pippenger's Super-Concentrator
 to $O(n \log n + nw)$-Sized Compaction Circuit}
\label{sec:firstattempt}

Pippenger~\cite{selfroutingsuperconcentrator} describes an $O(n)$-sized super-concentrator 
construction with $n$ sources and $n$ destinations, such that 
given any subset of $\source$ sources and  
any subset of $\source$ destinations, a set of vertice-disjoint paths 
exist between the sources and destinations. %(where each source can be matched
%with an arbitrary destination out of the $k$ subset). 
If one thinks of the $\source$ sources as clients, 
and the $\source$ destinations as servers, a super-concentrator
allows each client to receive service from a server 
over {\it congestion-free} paths. Note that all servers 
are assumed to 
provide the same type of service, and thus a client can be matched with any server.
Not only does Pippenger show the existence of vertice-disjoint 
paths between any $\source$ sources and any $\source$ destinations, importantly, 
he describes an explicit, efficient algorithm for 
finding a set of such paths when the $\source$ sources and $\source$ destinations 
have been specified --- henceforth we call this the {\it route-finding} algorithm.
%Such a super-concentrator is said to have an explicit 
%and efficient {\it route-finding} algorithm.

To use such a self-routing super-concentrator
to solve the compaction problem, 
one may think of the $\source$ sources as the $\source$ input elements
marked as distinguished, and the $\source$ destinations as the first $\source$  
positions of the output array, i.e., where the distinguished elements want to go.
Now, a trivial observation 
is that in Pippenger's construction,
the route-finding algorithm can be implemented as 
an $O(n)$-time RAM algorithm.  With some extra work, 
one can show
that in fact, the same route finding algorithm 
can be implemented as an $O(n \log n)$-sized 
circuit: the detailed circuit construction 
involves a few non-trivial technicalities but
since this is arguably not the most exciting part of our techniques, 
we defer the details to 
Sections~\ref{sec:lcprelim}, \ref{sec:lswprelim}, and \ref{sec:tightfromloose}.
%we will not spell out the detailed circuit construction  here 
%since our technical sections later in fact need to spell it out as a starting
%point of our bootstrapping and boosting technique. 
%Although the warmup scheme is not 
%the most interesting part of our construction, 
%we stress that a few non-trivial tricks are needed to show this

%but a detailed description is indeed contained in  
%our technical sections later. \elaine{refer}

Once the routes have been found, the $\source$ distinguished elements 
can be routed to their respective destinations 
with an $O(n w)$-sized circuit.
This results in an $O(n\log n + nw)$-sized 
compaction circuit\footnote{We will actually need something
slightly stronger than merely converting
Pippenger's self-routing superconcentrator 
an $O(n\log n + nw)$-sized 
compaction circuit; see Remark~\ref{rmk:piptechnicality}.}. 
%is not good enough to achieve our result stated earlier.
%which is the starting point of our final construction.

\paragraph{The usage of non-comparison techniques.}
We stress that even this warmup scheme must necessarily be non-comparison-based
since for $w \geq \log n$, the circuit size is upper bounded by $O(nw)$.
Thus we now reflect on how non-comparison-based techniques helped us
achieve this result which otherwise would have been impossible.
The warmup algorithm has a {\it metadata phase} 
(i.e., the route-finding phase) 
which involves
looking at only the distinguished/non-distinguished indicator of each element but not the elements themselves, 
and a {\it routing} phase that actually moves 
the elements around according to the plan computed by the metadata phase.
% Importantly, 
The metadata phase relies on counting and hence makes the algorithm inherently non-comparison-based; and 
moreover,  
the cost of the metadata phase 
does not depend on $w$, i.e., the bit-width 
of the elements. 
The routing phase, on the other hand, moves $O(n)$ elements around  
since the super-concentrator 
is linear in size (and moving each element around incurs $O(w)$ number of gates 
when implemented in circuit).
This key insight will continue to help us throughout the remainder of the paper.

%\paragraph{Our approach.}
\subsection{A Slightly Deeper Dive into Pippenger's Construction}
\label{sec:pippenger}
% log^{. c}   C_0^{log c} = poly(c)
% c = log*n - log* w     

Our next goal is to improve the warmup scheme 
to $O(n w) \cdot \poly(\log^* n - \log^* w)$.
Before explaining our approach in the subsequent Section~\ref{sec:roadmapcompact}, 
it helps to dissect Pippenger's construction further.
For ease of understanding, we focus on achieving tight compaction, 
i.e., routing any $\source$ positions in the input to the first $\source$ positions in the output ---
but keep in mind that both Pippenger's and our 
construction can be generalized to %a self-routing super-concentrator capable of 
routing any $\source$ positions in the input to any $\source$ positions in the output.
Furthermore we explain his algorithm with 
a circuit computation model (although their original description is in a distributed
automata model). 

\paragraph{Loose compaction.}
To achieve tight compaction,  
Pippenger first constructs a gadget capable of a relaxed form of compaction
henceforth called {\it loose compaction}.
In loose compaction, we have an input array of length $n$ where at most $n/128$
elements are {\it real} and the rest are {\it dummy}, 
and we would like to 
compress this array to half of the original size, while preserving
the multiset of real elements. 
Specifically, the loose compaction leverages a bipartite expander graph 
with $O(n)$ vertices 
where there are twice as many vertices on the left than on the right. \elaine{use this version?}%
Each vertex has $O(1)$ edges. 
A route-finding
algorithm can be run to find a way to route the real elements on 
the left to the right without causing congestion, compressing
the array by a half in the process.
With a little more work, we show that it is possible to implement their route-finding
algorithm for loose compaction 
with an $O(n \log n)$-sized circuit (to be spelled out in Section~\ref{sec:lcprelim}). 
%\elaine{forward reference}
Once the routes have been found, the actual routing of the elements
from the left to the right 
can be accomplished with an $O(n w)$-sized circuit since the expander
graph has only $O(n)$ edges in total.
Thus,  the entire loose compaction circuit consumes $O(nw + n\log n)$ 
gates.

\paragraph{Upgrading from loose to tight compaction.}
Pippenger then goes to construct a tight compaction algorithm 
given a loose compaction gadget.
In this paper, we will view this part of the algorithm 
as a way to upgrade a loose compactor to a tight one.
Given a loose compactor, to upgrade it to a tight compactor
requires additional use of certain bipartite
expander graphs 
with appropriate vertex expansion. %--- we defer details to Section~\ref{sec:tightfromloose}. 
%Given such a loose compaction gadget, Pippenger then upgrades it to tight compaction 
%through non-trivial techniques, and the resulting
%tight compaction preserves
%the asymptotical cost of the loose compaction.
Pippenger's
original description was for a Distributed Finite
Automata model, but later in Section~\ref{sec:tightfromloose},
we will describe how to accomplish this loose-to-tight upgrade
in a circuit model of computation while preserving efficiency in some technical sense. 
This requires resolving additional technicalities which we briefly mention below.

\begin{remark}
One technicality is the following: given a circuit that accomplishes
loose compaction with ${\sf LC}(n, w)$
gates, it is quite easy to implement
Pippenger's ideas in a circuit model
and upgrade it to a tight compaction circuit 
consuming $O({\sf LC}(n, w) + nw + n\log n)$ gates.
In fact, this would be sufficient to give us the warmup 
result, that is, an $O(nw + n\log n)$-sized circuit for tight compaction.
%An  $O({\sf LC}(n, w) + nw + n\log n)$-sized circuit, 
This na\"ive approach, 
however, turns out not to be enough
for the bootstrapping steps needed later in Section~\ref{sec:roadmapcompact} to improve
the circuit size.
%For the purpose of the boostrapping steps later,  
It will become clear soon that for the bootstrapping steps, 
we need that the loose-to-tight upgrade
be accomplished with a circuit of 
$O({\sf LC}(n, w) + nw)$ gates: showing that this 
is possible is subtle but can be accomplished through standard techniques
--- we thus defer the details to Section~\ref{sec:tightfromloose}. 
\label{rmk:piptechnicality}
\end{remark}

%\elaine{TODO: to discuss later} 
%which we shall defer to the formal technical sections.
%(the circuit variant of the upgrade will be 
%spelled out in Section~\ref{sec:tightfromloose}).  \elaine{refer to later?}

\ignore{
At a very high level, 
the upgrade works as follows --- below we transcribe Pippenger's original ideas
into a form that will be amenable for the circuit model of computation ({\it c.f.} Pippenger's
original description was for a Distributed Finite 
Automata model):
\begin{itemize}
\item 
Suppose there are $m$ 0-elments, and $n-m$ 1-elements in the input array.
Any input 0-element that appears in the latter $n-m$ positions or 
any 1-element that appears in the first $m$ positions are said to be {\it misplaced}. 
One may now think of the task of tight compaction as 
swapping the misplaced 0-elements with the misplaced 1-elements, such that all 
elements fall into place.

First, rely on a bipartite 
expander graph to implement a gadget called a ``loose swapper''.
The loose swapper 
can swap 0-elements and 1-elements in the input array that are misplaced, 
%(henceforth also called {\it misplaced}), 
such that only $1\%$ of the elements remain misplaced.
In other words, a loose swapper can make sure that 
99\% of the input elements are 
in the right place, and only 1\% remains to be swapped\footnote{In our formal description
later, we will use different constants.}.

\item 
At this moment, we rely on the loose compactor 
to compress the array's size by a half, such that  
elements that are misplaced are guaranteed not to be lost 
\end{itemize}
}

\subsection{Linear-Sized Compaction Circuit: Our Approach in a Nutshell}
\label{sec:roadmapcompact}
For simplicity, in the main body of the paper, we first present a construction
with a slightly looser $\poly\log^*$, i.e., not caring about the constant-degree of 
the $\poly(\cdot)$ function.
We shall tighten the constant-degree of the $\poly$ to $2+\epsilon$ in Appendix~\ref{sec:tighten}.

Let $\poly(\cdot)$
denote an appropriate polynomial function;
let $\alpha(n) \leq \log^* n$ be any 
integer function that is upper-bounded by $\log^* n$ everywhere. 
%\gnote{it is not clear what appropriate function is} 
%and
%let $C > 2$ denote an appropriate constant. %that increases very slowly. 
Our construction (implicitly) builds  
an $\big(n %\cdot \log^{(\alpha)}) 
\cdot \poly(\alpha)\big)$-sized super-concentrator
with a route-finding algorithm that can be implemented
as an $\big(n \cdot \log^{(\alpha)} n \cdot \poly(\alpha)\big)$-sized circuit.
Specifically, if we choose $\alpha(n) = \log^* n - \log^* w$, we have
an $\big(n \cdot \poly(\log^* n - \log^* w)\big)$-sized super-concentrator
whose routing-finding algorithm can be computed by  
an $\big(n w \cdot \poly(\log^* n - \log^* w)\big)$-size circuit.
In comparison with Pippenger's super-concentrator, we somewhat significantly 
reduce the size of the circuit implementing the route-finding algorithm, 
at the price of a slightly super-linear super-concentrator.
%paying a slight price in the size of the super-concentrator.

\ignore{
The main cost to reduce is that of the route-finding algorithm.
A key insight is that the route-finding algorithm operates
only on metadata and not the input elements themselves.
}
The key insight is a 
new technique that involves repeated bootstrapping and boosting:
%non-trivial and novel recursion technique: 
we use loose compaction to bootstrap tight compaction 
with a constant blowup in circuit size;
%(this is done in the same
%fashion as Pippenger~\cite{selfroutingsuperconcentrator} but now in circuit);
then in turn, we use tight compaction 
to construct an asymptotically better loose compaction; and 
then in turn, we use the better loose compaction to bootstrap a better  
tight compaction, and so on.
See Figure~\ref{fig:demonstration} for a demonstration. 
More specifically,

\elaine{TODO: MOVE figure to later sections}
% !TEX root = compaction-circuit.tex

\tikzstyle{swap}=[draw, fill=grey!10, text width=10em, 
    text centered, minimum height=2.5em, rounded corners, drop shadow]

\tikzstyle{looseCompaction} = [draw,     text centered, text width=10em, fill=green!60, 
    minimum height=2.5em, rounded corners, drop shadow]
\tikzstyle{tc} = [draw,     text centered, text width=10em, fill=green!20, 
    minimum height=2.5em, rounded corners, drop shadow]
\tikzstyle{tex} = [draw, fill=blue!20, text width=10em, 
    text centered, minimum height=2.5em, rounded corners, drop shadow]
\tikzstyle{mainTight} = [draw, fill=green!20, text width=10em, 
    text centered, minimum height=12em, minimum width=12em, rounded corners, drop shadow]
\tikzstyle{mainLoose} = [draw, fill=green!60, text width=10em, 
    text centered, minimum height=12em, minimum width=12em, rounded corners, drop shadow]
\def\blockdist{1}

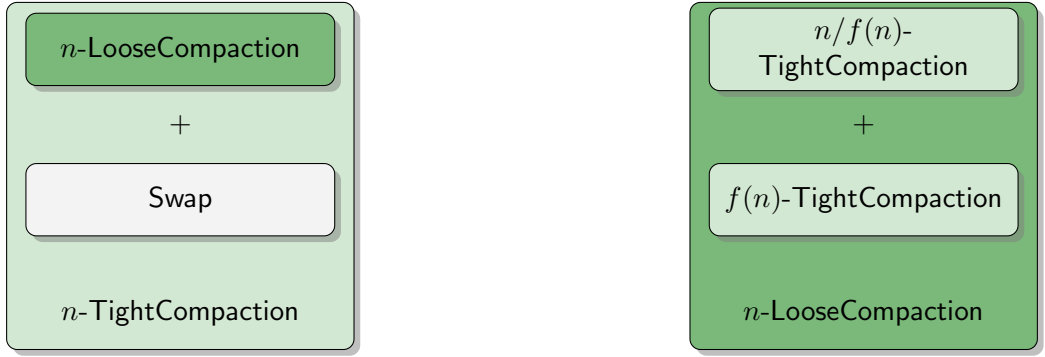
\begin{figure}
%\centering
\begin{subfigure}[b]{0.45\textwidth}    
\begin{center}
%\framebox{
\begin{tikzpicture}[
			auto, 	
			 .style={rectangle, draw, dotted, inner sep= 2pt, text width=2.2cm, text height=5.2cm},
			]
   
    \node (s) [mainTight] {};
    \path (s.south) + (0, 4*\blockdist)  node (looseCompaction) [looseCompaction] {$n$-{\sf LooseCompaction}};
    \path (s.south) + (0,3*\blockdist) node (tex){+}; 
    \path (s.south) + (0,2*\blockdist) node (swap) [swap]{${\sf Swap}$};
    \path (s.south) +(0,0.5*\blockdist) node (asrs) {$n$-{\sf TightCompaction}};
  
\end{tikzpicture}
%}
%
\caption{\small A circuit for {\sf TightCompaction} (Section~\ref{sec:tightfromloose}) consists of {\sf Swap} (Section~\ref{sec:swapper}) and ${\sf LooseCompaction}$ (Section~\ref{sec:looseFromTight})}
\end{center}
\end{subfigure}%
\hfill%
\begin{subfigure}[b]{0.45\textwidth}    
\begin{center}
%\framebox{
\begin{tikzpicture}[
			auto, 	
			 .style={rectangle, draw, dotted, inner sep= 2pt, text width=2.2cm, text height=5.2cm},
			]
   
    \node (s) [mainLoose] {};
    \path (s.south) + (0, 4*\blockdist)  node (tc) [tc] {$n/f(n)$-{\sf TightCompaction}};
    \path (s.south) + (0,3*\blockdist) node (tex) {+}; 
    \path (s.south) + (0,2*\blockdist) node (swap) [tc] {$f(n)$-{\sf TightCompaction}};
    \path (s.south) +(0,0.5*\blockdist) node (asrs) {$n$-{\sf LooseCompaction}};
  
\end{tikzpicture}
%}
%
\caption{\small A circuit for {\sf LooseCompaction} (Section~\ref{sec:looseFromTight}) consists of a $f(n)$-${\sf TightCompaction}$ and ${n/f(n)}$-{\sf TightCompaction} }
\end{center}
\end{subfigure}%
\caption{\small Demonstrating the recursion and the dependency between {\sf LooseCompaction} and {\sf TightCompaction},
where $n$-{\sf X} denotes the {\sf X} circuit taking input size $n$.}
\label{fig:demonstration}
\end{figure}

\begin{MyItemize}
\item {\it Tight compaction from loose compaction.}
Given a function $1 \leq f(n) \leq \log n$,
if loose compaction
can be accomplished with a $C_{\rm lc} \cdot (n f(n) + nw)$-sized circuit,
then we can accomplish tight compaction 
with a $C \cdot C_{\rm lc} \cdot (n f(n) + nw)$-sized circuit 
%(ignoring additive $O(n)$ factors) 
where $C$ and $C_{\rm lc}$ are suitable constants.
This uses 
the same techniques as in Pippenger's work but now applying them to the circuit model;

\item {\it Loose compaction from tight compaction.}
Let $f(n)$ be a function satisfying $1 \leq f(n) \leq \log n$,
and given a tight compaction circuit of $C_{\rm tc} \cdot (n f(n) + nw)$ size, 
we can construct a $C \cdot C_{\rm tc} \cdot (n f(f(n)) + nw)$-sized 
loose compaction circuit as follows where $C$ and $C_{\rm tc}$ are suitable constants:
\begin{enumerate}
\item 
Divide the input into chunks of $f(n)$ elements; for simplicity
we shall assume that $n$ is divisible by $f(n)$ here and the indivisible case
is handled later in our technical sections.
Mark chunks that have more than 
$f(n)/32$ real elements as {\it dense} and those with at most $f(n)/32$ 
real elements as {\it sparse}.

Use tight compaction to move all the dense chunks to the 
front and all the sparse chunks to the end.
Note that this step can be implemented as a $C_{\rm tc} \cdot(n + nw)$-sized circuit.

\item 
It is easy to show that there are at least $\frac{3n}{4f(n)}$ sparse chunks.
Now, for each of the trailing $\frac{3n}{4f(n)}$ sparse chunks,
use a tight compaction circuit
to compress away $31 f(n)/32$ dummy elements.
This step requires a $C_{\rm tc}\cdot (f(n) f(f(n)) + f(n) w)$-sized circuit 
per chunk and thus over all chunks, the total circuit size
is $\left(n/f(n)\right) 
\cdot  C_{\rm tc}\cdot (f(n) f(f(n)) + f(n) w) = C_{\rm tc}\cdot  (n f(f(n)) + nw)$.
\item 
The output is simply a concatenation of the following:
{\it i)} the first $\frac{n}{4f(n)}$ chunks; and {\it ii)} 
the remaining $f(n)/32$ elements for each of the remaining $\frac{3n}{4f(n)}$ chunks.  
It is not hard to show that the resulting array 
is less than half of the length of the original and no real element
is lost during this process.
\end{enumerate}
\end{MyItemize}

Now, as a starting point, we have a loose compaction 
circuit denoted $\LC_0$ whose size is $C_0 \cdot (n \log n + nw)$
for loosely compacting an input array of $n$ elements each of $w$ bits, where $C_0$ is
an appropriate constant --- as mentioned in Section~\ref{sec:pippenger},
we can construct such an $\LC_0$ using Pippenger's loose compactor construction but
implementing his construction in a $O(n \log n + nw)$ sized-circuit involves some 
non-trivial technicalities which will be explained in Section~\ref{sec:gadgets}.
With $\LC_0$, we can construct a 
tight compaction circuit denoted $\TC_1$
of size at most $C \cdot C_0 \cdot (n \log n + nw)$.
Given $\TC_1$, we can in turn construct a loose compaction circuit 
denoted $\LC_1$ whose size is upper bounded by 
of size at most $C^2 \cdot C_0 \cdot (n \log \log n + nw)$.
Given  $\LC_1$, we can in turn construct a tight compaction
circuit denoted $\TC_2$ of size at most  
$C^3 \cdot C_0 \cdot (n \log  \log n + nw)$.
Given $\TC_2$, we can in turn construct a loose compaction
circuit denoted $\LC_2$ of size at most 
$C^4 \cdot C_0 \cdot (n \log^{(4)} n + nw)$.

Let $\alpha \leq \log^* n$, 
and let $2^{d-1} = \alpha$. 
After $d$
%after $d = \log \alpha$ 
iterations of the above recursion, we 
will obtain a tight compaction circuit $\TC_{d}$
whose size is upper bounded by 
%$\log^{(\alpha)} n$
$C^{2d-1} \cdot C_0 \cdot n \cdot (\log^{(2^{d-1})} n + w) 
= O(1) \cdot \poly(\alpha) \cdot n \cdot (\log^{(\alpha)} n + w)$.
Setting $\alpha = \log^*n - \log^* w$ would give us $\log^{(\alpha)}n = O(w)$ and thus we get 
a tight compaction
circuit of size 
$O(nw) \cdot \poly(\log^* n - \log^* w)$.

\ignore{
\begin{itemize}[leftmargin=5mm]
\item 
First, use an $O(n)$-sized expander 
graph to swap misplaced elements in the input, such that 
the remaining fraction of misplaced elements is bounded
by $1/128$.  
More specifically, 
if in the output, a position $i$ ought to contain a distinguished element
but in the input the position $i$ contains a non-distinguished element instead, then 
this element is colored ${\tt blue}$; 
contrarily, 
if in the output a position $i$ ought to contain a non-distinguished element
but in the input the position contains a distinguished element, then the
element is colored ${\tt red}$.
The total number of ${\tt blue}$ and ${\tt red}$ elements must be equal, and in this step
we use an expander graph to swap ${\tt blue}$ elements with ${\tt red}$ ones
such that the remaining density of  
misplaced elements is at most $1/128$. 

\item 
\end{itemize}
}

%in the language of compaction (although both their and our constructions 
%can be interpretted as a superconcentrator).
%To construct a linear-sized network 
%tight compaction, Pip

%\elaine{CONTINUE HERE}

\ignore{
Furthermore, he describes an algorithm for finding such vertice-joint paths
when the $k$ sources have been identified.
With some additional work, one can show that Pippenger's path-finding
algorithm can be implemented
as an $O(n\log n)$-sized circuit that is independent  
of the word-length $w$; and moreover 
once a set of paths have been identified, the routing of the elements
to the respective destinations can be accomplished with an $O(n w)$-sized circuit.
}

\subsection{Sorting Circuits for Small Keys}

Now we have solved the 1-bit special case of Theorem~\ref{thm:introsort}, we would like
to generalize the scheme to sort small keys. 
If the 1-bit sorter were stable, then we can rely on radix sort  
to sort the elements one bit at a time. 
Unfortunately, as mentioned, our 1-bit sorter is not and cannot be stable.
Instead, 
we rely on a non-trivial two-parameter recursion 
that is inspired by Lin et al.~\cite{osortsmallkey} --- in comparison, their
approach relies on randomness; but we show how to  
adapt their ideas to a deterministic approach.

\paragraph{Linear-sized selection circuit.}
To describe our sorting circuit result, we will need yet another building block:
a linear-sized selection circuit (ignoring $\poly\log^*$ terms).
It turns out that 
given a linear-sized compaction circuit, it is not too difficult
to construct a linear-sized selection circuit based on Blum et al.~\cite{blumselect}'s idea.
To select the $m$-th smallest element,
Blum's algorithm first divides the elements into $n/5$ groups each containing 5 elements, 
computes
the median of each group, and then recursively computes the median of the medians
henceforth called a {\it pivot}.
At this moment,
perform a {\it partitioning}:
move all elements smaller than or equal to the pivot
to the left and all other elements to the right and let ${\bf Y}$ be the resulting
array.  Now, depending on the choice of $\source$, recurse
on either the first $7n/10$ elements of
${\bf Y}$ or the last $7n/10$ elements of ${\bf Y}$.

Blum et al's selection algorithm runs in linear time on a RAM.  
How can we implement it in the circuit model with 
linear number of gates?
The crux is solving the {\it partitioning} step 
with a linear-sized circuit since it is not too hard to check 
that all other steps are easy to implement in a circuit while preserving efficiency.
Clearly, the partitioning step can be solved with a compaction circuit: 
we can label each element with a 0-key if it is smaller than or equal to the pivot
and a 1-key otherwise.
Using our linear-sized compaction circuit to realize
partitioning in Blum et al.'s algorithm, we can then get a linear-sized
selection circuit.

\paragraph{Our sorting circuit.}
We are now ready to present our sorting circuit construction.
To upgrade from 1-bit sorting to multiple bits, we can rely
on a recursion with two parameters, $n$ and $K = 2^k$.
This two-parameter recursion trick is inspired by Lin et al.~\cite{osortsmallkey}:
however Lin et al.~\cite{osortsmallkey}
relies on randomness to perform divide-and-conquer and make measure  
concentration arguments in their proofs. 
To get rid of the randomness  
needed in their algorithm, we instead rely on the aforementioned linear-sized selection gadget
to perform the divide-and-conquer (this was not an available building 
block in Lin et al.'s work). 

Initially, we have $n$ elements each tagged with a $k$-bit key, i.e., in total
there can be $K = 2^k$ distinct keys.
We first rely on a linear-sized selection circuit to find the median.  
We then partition the input $n$ elements into two halves: those that are
smaller than or equal to the median, and all other elements. 
Note that to implement this partitioning in the circuit model, 
we will rely on our new linear-sized compaction circuit.
Now, a crucial observation is the following:
at least one of the halves, henceforth called the {\it good half},
will have at most $\ceil{K/2}$ distinct keys, while the other
{\it bad half} may still have up to $K$ distinct keys. 
We now recurse on each half to sort each half --- the good half
of the recursion adopts the parameters $n/2, \ceil{K/2}$, whereas
the bad half adopts the parameters $n/2, K$.
Thus we have the following recurrence where $T(n, K)$ denotes the 
circuit size for sorting $n$ elements each with $\log_2 K$-bit keys:
\begin{align*}
T(n,K) = T(\left\lceil \frac{n}{2}\right\rceil, K) + T(\left\lceil \frac{n}{2}\right\rceil, \ceil{K/2})
 + h(n,w+\log K) \cdot O(n (w + \log K) ),
\end{align*}
where $h(x,y):=\poly(\log^* x - \log^* y)$,
and the base cases are:
if $K = 1$ then $T(n, K) = O(n w) \cdot h(n,w+\log K)$;
moreover, if $n = O(1)$ then $T(n, K) = O(w+k)$.
Solving this recurrence, we will obtain that 
\[T(n, K) = O(n (w + \log K)) \cdot \log K \cdot h(n,w+\log K)\]

\subsection{Additional Related Work}
\label{sec:related}

\gnote{- In the related work section, there was an abrupt switch from discussing previous works on selection to discussing works on compaction. Could you explain briefly what is the connection between the compaction, selection problems and the problem of constructing ORAM? Was there any work that studied the circuit complexity of compaction and selection before this work, except in the context of building ORAMs?
}

\paragraph{Sorting.}
Lin, Shi, and Xie~\cite{osortsmallkey} 
considered the complexity of (randomized) oblivious sorting 
for small keys --- their work is in a randomized Oblivious RAM model of 
computation\footnote{A probabilistic Oblivious RAM is one 
that requires the distribution of the algorithm's memory access
patterns to be computationally or statistically close regardless of the input.}
and they show how to accomplish sorting in $O(n k \log \log n /\log k)$ time 
assuming that the keys to be sorted are only $k$ bits long  and the payload can fit
in a single memory word.
Their result cannot be directly converted the circuit model due to the use of randomness;
instead, they show how to interpret their result in a ``{\it probabilistic} 
circuit family'' model, i.e.,
for every input array, an overwhelming fraction of the circuits in 
the family can correctly sort it. 
In such a probabilistic circuit family model, 
they accomplish $O(n w \cdot k \log \log n/\log k)$ circuit size for $k$-bit keys, which
is asymptotically worse than our deterministic result 
when $k < 2^{\log\log n/(\log^*n)^{2 + \epsilon}}$.

\ignore{
who constructed {\it randomized} algorithms
that sort elements with small keys (and thus their result is not in a circuit model).
More specifically, their construction
leverages a carefully crafted two-parameter
recursion. 
}

It is also long known that using non-comparison-based
techniques, sorting can be accomplished on a RAM 
in $o(n \log n)$ time~\cite{sortlinearandersson,Kirkpatricksort,hansort00,hansort01,thorupsort}.
Unfortunately, all these algorithms inherently  
rely on input-dependent memory accesses and 
we know of no approach to convert these algorithms to the circuit model 
in an efficiency-preserving manner.
By contrast, in the circuit model, the non-comparison-based approach
has not been explored before, even though it was a natural open
problem since Knuth's textbook~\cite{knuthbook}, where it was shown that 
comparison-based techniques have an $\Omega(n \log n)$
lower bound even for sorting 1-bit keys.

Goodrich and Kosaraju~\cite{ppmsort}
show that on a parallel pointer machine, sorting 
$k$-bit integers in time $O(nk)$ using binary search trees. However, in their algorithm, 
the access patterns of the 
program are input data dependent, and thus their techniques do not 
directly lend to our result in the circuit model.

%\elaine{tcc perfect oram paper, optorama, prob circuit family, results in valiant's model}
%We now comment on additional related works. 

\paragraph{Compaction and selection.}
As mentioned, compaction is trivial to accomplish on a RAM; and linear-time 
selection on a RAM is also a classical result~\cite{blumselect} now widely taught
in undergraduate algorithms lectures. 
A line of work was concerned about the circuit complexity
of selection and compaction~\cite{yaoselectnet,sortminmem,jimboselect,pippengerselect};
but all earlier works focused on the comparator-based model.
Due to the famous 0-1 principle described as early as in Knuth's textbook~\cite{knuthbook},
there is an $\Omega(n \log n)$ lower bound for compaction  
with comparator-based circuits.
Furthermore, selection is also known to have an $\Omega(n \log n)$ lower bound 
in a comparator-based circuit model~\cite{yaoselectnet,sortminmem,jimboselect,pippengerselect} 
--- and this explains
why all earlier works~\cite{yaoselectnet,sortminmem,jimboselect,pippengerselect} 
focus on tightening the constant in front of the $n \log n$.
While it is extremely natural to ask whether the $n\log n$ barrier
can be circumvented without the comparator-based restriction, 
it is surprising that so far, no progress has been made along this front!

Several works have considered compaction and selection 
in other incomparable models of computation as explained below --- none of these results
easily give rise to a circuit result.
%that have not been mentioned before
Leighton et al.~\cite{leightonselection}
show how to construct comparison-based, {\it probabilistic} circuit families for selection, 
with $O(n \log \log n)$ comparators; again, here we require that for every input, an overwhelming
fraction of the circuits in the family  
can give a correct result on the input. 
%In their model, there is a family of circuits:
%a random one is sampled without looking
%at the input, and the sampled circuit must select the input array with high probability. 
One interesting observation 
is that although deterministic, comparator-based {\it selection} circuits
have a $\Omega(n \log n)$
lower bound~\cite{leightonselection,sortminmem}, probabilistic circuit families 
do not inherit the same lower bound.
\elaine{why?}%
Subsequent works~\cite{osortsmallkey,odsmitchell} have improved Leighton's result by removing
the restriction that the circuit family must be parametrized
with $m$, i.e., the rank of the element being selected, without increasing the asymptotical
overhead.
These works also imply that compaction 
can be accomplished with in $O(n \log \log n)$
time on a randomized Oblivious RAM~\cite{osortsmallkey,odsmitchell}.
%how to solve compaction 
%either using probabilistic circuit families with $O(n \log \log n)$ comparators, 
%or in $O(n \log \log n)$ time on a randomized Oblivious RAM.

Very recently, Asharov et al. considered how to accomplish  
compaction on {\it deterministic} Oblivious RAMs in linear work~\cite{optorama}.
%\elaine{is the ostrovsky paper deterministic? i don't think so?}
%First, Chan et al.~\cite{perfectopram} show that   
%compaction can be accomplished deterministically on an Oblivious RAM
%in $O(n \log n)$ time --- while this is no better than using AKS's sorting network, 
%their techiques, inspired by  
%Pippenger's super-concentrator~\cite{selfroutingsuperconcentrator}, 
Their work was subsequently extended~\cite{paracompact}
to a PRAM setting.  Moreover, Dittmer and Ostrovsky improve
its concrete constants by introducing randomness back into the model~\cite{DittmerOstrovsky20}.
It turned out that linear-time oblivious compaction 
played a 
pivotal role in the construction of an optimal Oblivious RAM (ORAM) compiler~\cite{optorama}, 
a machine that translates a RAM program to a functionally-equivalent 
one with oblivious access patterns. 
Specifically, earlier ORAM compilers relied on oblivious sorting which requires 
$\Omega(n\log n)$ time 
either assuming the indivisibility model~\cite{osortsmallkey} or 
the Li-Li network coding conjecture~\cite{sortinglbstoc19}; whereas more recent
works~\cite{optorama,panorama} observed that with 
with a lot more additional work, we could replace oblivious sorting
with the weaker compaction primitive.

%Compaction is used in the construction of hash tables, removing dummy elements, interspersing two arrays (by considering the reverse routing of the tight compaction) and more.

%\gnote{Reviewer I: "it would be useful if the authors elaborate on why tight compaction is relevant to the construction of ORAMs". Added: These linear work tight compaction algorithms in the Oblivious RAM model plays a pivotal role in the construction of an optimal oblivious RAM compiler~\cite{optorama}, a machine that translates a RAM program to a functionally-equivalent one with an oblivious access pattern. Compaction is used in the construction of hash tables, removing dummy elements, interspersing two arrays (by considering the reverse routing of the tight compaction) and more. }

We stress that {\it known linear-time compaction algorithms~\cite{optorama,paracompact}
on Oblivious RAMs/PRAMs and our compaction circuit are 
of incomparable nature}: 
\begin{MyItemize}
\item 
First, 
an $O(n)$-time construction on a deterministic Oblivious RAM~\cite{optorama,paracompact} 
does not directly give rise
to an $O(n \cdot w)$-sized compaction circuit in general (assuming that 
$w = O(\log n)$ can be stored in a single memory word). This is because on a RAM,
operations on $\log n$-bit words can be accomplished with unit cost, whereas
in a boolean circuit they typically cost at least $\log n$ gates.
This is the case with the algorithms of Asharov et al.~\cite{optorama,paracompact} too. In fact, these algorithms achieve linear work by heavily and explicitly relying on the fact that operations on $\log n$-bit words can be accomplished with unit cost 
through techniques called {\it packing}.
%They pack information of $O(\log n)$ nodes into a single memory word and relying on SIMD operations to access many vertices at once in $O(1)$ work. 
%As those operations are charged in $O(w)$ boolean gates, 
It might be possible to use the same techniques as in Section~\ref{sec:firstattempt} to convert the 
linear-time compaction algorithms on Oblivious RAMs~\cite{optorama,paracompact} to the circuit model
--- indeed one could try this direction, but even if it works, it would  
result in a circuit of size  $O(n \log n+nw)$, i.e., no better than converting Pippenger's 
self-routing superconcentrator~\cite{selfroutingsuperconcentrator} to the circuit model.

\item 
Second, our $O(n w) \cdot \poly(\log^*n - \log^* w)$ compaction
circuit result also does not give rise to 
an Oblivious RAM algorithm running in time $O(n) \cdot \poly(\log^*n - \log^* w)$,
even if assuming that the word size is $w = O(\log n)$.
In order to see that, consider a circuit of size $O(w)$ boolean gates. One might hope to get an oblivious RAM counterpart that consumes $O(1)$ work---by packing all the $w$-wires of the circuit into a single word in the RAM model and performing operations on the packed word. However, in order to achieve $O(1)$ work, the operations on these $w$-wires in the circuit must fit one of the operations allowed in the RAM model, such as additions, multiplications, etc. Since boolean gates are usually more general and provide a different type of operations on each one of the wires individually, in the most general case we will still have to perform $w$ different operations in the RAM model even when packing the $w$-wires into one memory word in the RAM model. That is, the circuit may not have the SIMD (single instruction, multiple data) structure such that operations on $w$ bits can be packed into operations on the same work on a RAM. We also refer the readers to the work of Boyle and Naor~\cite{isthereoramlb} who discussed the same technical issue. 

\end{MyItemize}

Finally, sorting and selection have also been studied in the 
Parallel RAM (PRAM) model~\cite{parallelhash,parallelintegersorting,hanselection}
as well as in Valiant's comparison-based parallel 
computation model which charged only the comparisons but 
not the boolean computations that decided which elements to compare~\cite{comparetree,beateigen}.

\paragraph{Other related work.}
Besides Pippenger's self-routing 
super-concentrator\cite{selfroutingsuperconcentrator}, 
the work by Arora, Leighton, and Maggs~\cite{alm90}
considered a self-routing {\it permutation} network. Their construction  
is not a sorting network. 
Further, converting their non-blocking network to a permutation 
circuit would incur at least $\Omega(n \log^2 n)$ gates~\cite{bmm}.
Pippenger's work~\cite{selfroutingsuperconcentrator}
adopted some techniques 
from the Arora et al. work~\cite{alm90}.

\ignore{
\subsection{Junk: to massage}
\gnote{Do not forget to remove it}
A couple recent works have considered how to implement
(parts to all of) Pippenger's super-concentrator~\cite{selfroutingsuperconcentrator}
on a deterministic Oblivious RAM.
First, Chan et al.~\cite{perfectopram}
observe that Pippenger's loose compactor~\cite{selfroutingsuperconcentrator} (which
is arguably the most important building block in his super-concentrator)
has a route-finding algorithm that can be computed on an Oblivious RAM 
in $O(n \log n)$ time.
This was improved by   
Asharov et al.~in a very recent work~\cite{optorama},  %in a very recent work, 
where they show how to implement the route-finding algorithm for
Pippenger's entire super-concentrator (and not just the loose compactor part)
on an Oblivious RAM in $O(n)$ time.
Both Chan et al.~\cite{perfectopram} and Asharov et al.~\cite{optorama}
observe and make use of the fact that the route-finding algorithm  
operates only on metadata but not the input elements themselves; and in 
our work, we also leverage this insight 
to enable a new type of recursion that leads to our result.
We note that the Oblivious RAM model is incomparable to the circuit model
of computation, and thus %neither 
% Chan et al.~\cite{perfectopram} nor 
Asharov et al.~\cite{optorama}'s linear-time oblivious compaction scheme  
does not directly imply our result (and neither does our result imply theirs). 
In fact, %with some extra work (similar to the flavor in Section~\ref{sec:gadgets}), 
using the same techniques as in  Section~\ref{sec:gadgets},
one can show that 
Asharov et al.'s algorithm would still incur 
$O(n \log n + n w)$ cost if implemented as a circuit, i.e., no better than our 
warmup scheme in Section~\ref{sec:firstattempt}, obtained 
by directly implementing Pippenger's construction in circuit. 
This also suggests that Asharov et al.~\cite{optorama}
optimizations are tailored specifically for a RAM
model and does not lend directly towards a circuit construction.

}

%\input{intro}
% !TEX root = compaction-circuit.tex

%\section{Introduction}

%\section{Model}
%Throughout the paper, we assume that 

\section{Preliminaries}
\label{sec:preliminaries}

\paragraph{Notations.}
Unless otherwise noted, $\log$ always means $\log_2$.
Throughout the paper, the notation $\log^{(i)} n$ 
means $\log \log \log \ldots \log n$ where $\log$ is taken $i$ times;
the notation $\log^* n$ means the smallest $i \in \N$ such that  
$\log^{(i)} n \leq 1$.

\ignore{
\paragraph{Tight compaction.}
An $(n, w)$-tight compactor solves 
the following problem: given an input array 
consisting of $n$ elements each of bit-width $w$, where each element
is marked with an additional bit 
indicating whether the element is {\it distinguished}, 
output a permutation of the input array where 
all the distinguished elements are moved to the front 
and the non-distinguished elements are moved to the end. 
}
%Whenever convenient, we also consider a variant of tight compaction
%which receives in input array that is promised to 

%Sometimes we also use a variant of the problem formulation: given an input array, 

\elaine{should we define a version that truncates the output? otherwise
there are loose wires in the loose compact alg}

\subsection{Circuit Model of Computation}

%\subsubsection{Definitions}
We adopt the standard boolean circuit model of computation~\cite{savagebook}.
A boolean circuit consists of AND, OR, and NOT gates. Gates are connected through wires. 
Each gate's output can be fed into $O(1)$ number of other gates as input. 
All the gates and their wires form a directed acyclic graph.
By definition, such a boolean circuit has constant fan-in and constant fan-out.
A circuit's size is the number of gates 
the circuit contains.
%We allow a different circuit description for each input length $n$. 
%Formally, a circuit family $\mcal{C} := \{C_1, C_2, \ldots, \}$ is a collection of circuits
%such that there is one circuit corresponding to each input length $n$.

\subsection{Operational Model Used in Our Paper}

For convenience, we use the following operational model for circuits when describing
our algorithms.
\ignore{
If there is a circuit of size $G(n)$ 
that computes a function  
in our operational model, there is also a standard boolean circuit 
consisting of $O(G(n))$ 
AND/OR/NOT gates that compute the same function.
}
In our operational model, we allow the following
gadgets in our circuitry: 1) constant-sized %fan-in, constant fan-out  
gadgets that implement an arbitrary truth table between the $O(1)$ input 
bits and the $O(1)$ output bits; and 2) selector gates.
%Further, while we assume that each gate has constant fan-in, we allow
%arbitrary fan-out (as we explain below, this is without loss of generality). 
When defining our circuits we will count explicitly how many generalized boolean gates the circuit has, and how many selector and reverse selectors are there. 
We give more explanations below and describe how to convert a circuit in our operational model 
to a standard boolean circuit with constant fan-in and constant fan-out.

%\gnote{The following seems to be part of Section~\ref{sec:gadgets}}
\begin{MyItemize}
\item 
{\it Selector and reverse selector gates.}
A $w$-selector gate  (``MUX'') takes in one bit $b$ and two input elements $m_0,m_1$
each of bit-width $w$, and outputs $m_b$.
Each $w$-selector gate can be realized with 
$C \cdot w$ number of AND, OR, and NOT gates where $C > 1$ is a universal constant.

Reverse selector gates are the opposite. A $w$-reverse selector 
gate takes one 
element $m$ of bit-width $w$ and a signal $b \in \{0, 1\}$ as input and 
outputs $(m,0^w)$ if $b=0$ and $(0^w,m)$ if $b=1$. %two elements each of bit-width $w$.
%Depending on the signal $b$, the input element is copied to one of the outputs 
%and the other output obtains a canonical element such as $0^w$. 

Henceforth in the paper whenever we count 
selector and reverse selector gates, we do not distinguish between them and count
both towards selector gates. 

\item 
{\it Generalized boolean gates.}
Generalized boolean gates are allowed to have constant fan-in and constant fan-out.
Without loss of generality, we allow 
each generalized boolean gate to implement {\it any truth table}
between the $O(1)$ input bits to the $O(1)$ output bits. 
Realizing such a generalized boolean gate 
with AND, OR, and NOT gates 
will only incur an $O(1)$ factor blowup in the circuit's size.
Later in our paper, we will use 
generalized boolean gates of fan-in at most 3 and fan-out at most
3. \elaine{adjust these constants in the end. they need to be universal constants} 

\ignore{
\item 
{\it Copy gates.}
Copy gates take a single input bit as input and can replicate the  
bit arbitrarily many times.
Copy gates come for free due to the following lemma. \elaine{need more explanation} 
}

%\item 
%\elaine{explain why this does not matter.}

%\elaine{TODO: get rid of copy gates in whole paper, just call it fan-out}
\end{MyItemize}

\paragraph{Unbounded fan-out is free.}
From standard complexity theory textbooks~\cite{savagebook}, we
know that constant fan-in, unbounded fan-out circuits of size $G(n)$  
can be converted to constant fan-in, constant fan-out 
circuits of size $O(G(n))$. \elaine{double check the precise statement}
Therefore, in our operational model, we 
allow the output of a gate to be fed into unbounded number of gates.

%%Due to a textbook theorem in complexity theory~\cite{savagebook}, 
%Any boolean circuit of size $G(n)$ with 
%unbounded fan-out can be converted to a boolean circuit of size $O(G(n))$ with fan-out 
%at most $2$ (see Savage's textbook~\cite{savagebook}).
%\gnote{do we really need this? I mean, can we just say it in the previous items and have only these two items only?}

\ignore{
\begin{lemma}
Given a circuit consisting of $L$ output wires, 
$S$ $w$-selector gates, %whose input elements have bit-width $w$, 
$B$ boolean gates, and arbitrarily many input wires and copy gates, 
we can construct a functionally equivalent boolean circuit  
of constant fan-in and constant fan-out, whose
total size is upper bounded by $C \cdot (L + w S + B)$
for some universal constant $C > 1$.
\label{lem:circuitmodel}
\end{lemma}
\begin{proof}
\elaine{FILL}
\end{proof}
}

%\elaine{TODO: define opaque payload that can only be manipulated by selector gates}

\begin{remark}[Selector gates and indivisibility model]
Although it may seem like selector gates are just another 
type of circuit gadgets
like those in Section~\ref{sec:gadgets}, we incorporate selector gates into our operational model
for convenience.

In fact, our construction later will use selector gates to move payloads 
around, and our construction will
not perform 
any boolean computation or encoding 
of the payloads.
This also means that our construction works in the 
{\it indivisibility} model.
In some works~\cite{isthereoramlb,osortsmallkey}, the indivisibility model is also referred
to as the {\it balls-and-bins} model, i.e., each element's payload
string behaves like an opaque, indivisible ball. 

Given a circuit in the indivisibility model, 
one construct a mirroring circuit in the \emph{reverse order}
where selector gates are converted to reverse-selector gates, and this reverse
circuit can route the payloads back into their initial positions.
Our construction later will make use of such ``reverse routing''.
%similar to reverse selector gates. 
%Our construction satisfies this indivisible assumption, and we also use the fact that components in our construction satisfy this assumption, and we use those components in the reverse order. 
%Specifically, one part in our construction
%relies on the reversibility of the indivisible model:  
%to bootstrap tight compaction from 
%loose compaction (Section~\ref{sec:tightfromloose}), our techniques
%only work if the loose compaction circuit satisfy the indivisible assumption --- specifically,
%our construction will rely on reversing the routing decisions  
%made by certain selector gates (expressed by reverse selector gates).
\end{remark}

%string behaves like an opaque, indivisible ball. 

% !TEX root = compaction-circuit.tex
\section{Useful Circuit Gadgets}
\label{sec:gadgets}

Besides the operational model described in Sections~\ref{sec:preliminaries}, we describe here some more complicated components for our circuits. Those includes some basic gadgets in Section~\ref{sec:basic-gadgets} (comparators, adders, counting, prefix sum), a circuit for converting a binary number to unary (Section~\ref{sec:binary-to-unary}), and much more complicated components: the basic (slightly inefficient) Loose Compaction circuit (Section~\ref{sec:lcprelim}) and Loose Swap (Section~\ref{sec:lswprelim}).

\subsection{Basic Gadgets}
\label{sec:basic-gadgets}

\paragraph{Comparator.}
A $k$-bit comparator takes two values each of $k$ bits, 
and outputs an answer from a constant-sized result set such as $\{>, <, =\}$, or
$\{\geq, <\}$, or $\{\leq, >\}$. 
Note that the outcome can be expressed as 1 to 2 bits.

\begin{fact}
A $k$-bit comparator can be implemented with $k$ generalized boolean gates.
\label{fct:comparator}
\end{fact}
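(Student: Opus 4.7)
The plan is to build the comparator as a linear chain of $k$ update stages, one per bit position, running from the most significant bit down to the least significant bit of the two inputs $a = a_{k-1} \ldots a_0$ and $b = b_{k-1} \ldots b_0$. Each stage maintains a small ``running state'' encoding the comparison outcome restricted to the bits seen so far, which takes one of three values $\{>, <, =\}$ and therefore fits into two bits. The initial state, entering the MSB stage, is $=$.

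The update rule at stage $i$ is standard: given the incoming state $s_{i+1}$ together with the bit pair $(a_i, b_i)$, the outgoing state $s_i$ equals $s_{i+1}$ whenever $s_{i+1} \in \{>, <\}$ (i.e., the comparison has already been decided by a more significant bit), and otherwise $s_i$ is determined solely by $(a_i, b_i)$: it becomes $>$ if $a_i > b_i$, becomes $<$ if $a_i < b_i$, and remains $=$ if $a_i = b_i$. The final state $s_0$ emerging from the LSB stage is exactly the desired comparison outcome; if the requested output set is one of the binary variants such as $\{\geq, <\}$ or $\{\leq, >\}$ rather than the ternary $\{>,<,=\}$, an additional $O(1)$-sized relabeling gadget at the end suffices.

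Each stage implements a fixed map from four input bits (two state bits plus $a_i,b_i$) to two output state bits, so it is a constant-sized truth table and can therefore be realized by one generalized boolean gate in our operational model; summed over the $k$ stages, this yields the claimed circuit size of $k$ generalized gates. The only mild care needed is to respect the constant fan-in and fan-out bounds: if the permitted fan-in is strictly below four, we split each stage into a constant-sized subcircuit of generalized gates, absorbing that constant into the universal constant hidden by ``generalized boolean gate.'' Each input bit $a_i, b_i$ is consumed by exactly one stage, and each intermediate state pair is consumed by exactly one successor stage, so fan-out is automatically bounded by a small constant. There is no substantive obstacle; the statement is a direct instantiation of the standard ripple-carry comparator in the paper's operational model.
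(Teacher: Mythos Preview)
Your proposal is correct and follows the natural ripple-carry approach: scan the two $k$-bit numbers from the most significant bit down, carrying a two-bit state in $\{>,<,=\}$, and update it at each position from the pair $(a_i,b_i)$. Each stage is a fixed $4$-input, $2$-output function and hence counts as a single generalized boolean gate in the paper's operational model, giving exactly $k$ gates. The paper itself states the fact without proof, so there is nothing to compare against beyond noting that this is the standard construction and clearly what is intended (compare the immediately following Fact~\ref{fct:adder}, which is likewise justified by ``emulating the hand method'').

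One minor wording issue: your fallback ``if the permitted fan-in is strictly below four, split each stage into a constant-sized subcircuit and absorb that constant'' would not preserve the exact count of $k$ gates, only $O(k)$. You do not need this hedge here, since the paper explicitly allows a generalized boolean gate to implement an arbitrary truth table on $O(1)$ input bits, and four is $O(1)$.
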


\paragraph{Adder.}
A $k$-bit adder takes two values each of at most $k$ bits,
and outputs the sum of the two encoded as $k+1$ bits.
The following fact is obvious by emulating the hand method  
of addition.  

\begin{fact}
A $k$-bit adder can be implemented with $k$ generalized boolean gates
each with fan-in $3$ and fan-out $2$. 
\label{fct:adder}
\end{fact}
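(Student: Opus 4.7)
The plan is to implement the textbook ripple-carry adder and charge one generalized boolean gate per bit position. For each position $i \in \{0, 1, \ldots, k-1\}$, I place a single full-adder gate whose inputs are $a_i$, $b_i$ (with $a_i$ or $b_i$ hardwired to $0$ if the corresponding summand has fewer than $k$ bits) and the carry-in $c_i$, with the convention that $c_0 = 0$. The gate's two output bits are the sum bit $s_i = a_i \oplus b_i \oplus c_i$ and the carry-out $c_{i+1} = \mathrm{MAJ}(a_i, b_i, c_i)$. Since this is a fixed truth table on $3$ input bits producing $2$ output bits, it is a generalized boolean gate of fan-in $3$ and fan-out $2$ in the operational model of Section~\ref{sec:preliminaries}.

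For wiring, I route $s_i$ directly to the $i$-th output wire of the circuit and feed $c_{i+1}$ into the full adder at position $i+1$; the final carry $c_k$ becomes the most significant (i.e., $(k+1)$-th) output bit. Each full-adder gate sends one wire to a circuit output and at most one wire to the next full adder, so the fan-out constraint of $2$ is respected throughout. The total gate count is exactly $k$, matching the claim.

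Correctness is the standard loop invariant: after processing positions $0, \ldots, i-1$, the wire $c_i$ equals the carry produced by adding the low $i$ bits of the two summands, so the vector $(s_0, \ldots, s_{k-1}, c_k)$ is the $(k+1)$-bit binary representation of $a + b$. There is no real obstacle — the only thing to be careful about is that we do not charge separately for the constant-$0$ input $c_0$ or for replicating summand bits, both of which are free in the model (hardwired constants and unbounded fan-out, respectively).
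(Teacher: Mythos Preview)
Your proposal is correct and matches the paper's approach: the paper simply says the fact ``is obvious by emulating the hand method of addition,'' which is precisely the ripple-carry adder you spell out, with one fan-in-$3$, fan-out-$2$ full-adder gate per bit position.
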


\paragraph{Counting.}
We consider a simple circuit gadget that counts the number
of $1$s in an input array containing $n$ bits.

\begin{fact}
Given an input array containing $n$ bits, counting the number of 1s in the input
array can be realized with 
a circuit containing $6n$ generalized boolean gates.
\label{fct:countprelim}
\end{fact}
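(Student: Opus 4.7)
The plan is to build a balanced binary tree of adders. View the input as $n$ integers of one bit each (level $0$). At each subsequent level we pair up adjacent integers and add them using the adder circuit from Fact~\ref{fct:adder}; after $\lceil \log_2 n\rceil$ levels only one integer remains, and it equals the number of $1$s in the input. If $n$ is not a power of two we simply pad with $0$'s at the top level, which blows up the cost by only a constant factor.

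The cost analysis is straightforward. At level $i$ there are at most $\lceil n/2^i\rceil$ integers; each represents a sum of at most $2^i$ input bits and is therefore bounded by $2^i$, so it is encoded in $i+1$ bits. Going from level $i$ to level $i+1$ uses at most $\lceil n/2^{i+1}\rceil$ adders with inputs of width $i+1$, each of which consumes $i+1$ generalized boolean gates by Fact~\ref{fct:adder}. Summing over all levels:
\[
\sum_{i=0}^{\lceil \log_2 n\rceil -1} \left\lceil \frac{n}{2^{i+1}}\right\rceil (i+1)
\;\leq\; n\sum_{i=0}^{\infty}\frac{i+1}{2^{i+1}} + O(\log n)
\;=\; 2n + O(\log n).
\]
For sufficiently large $n$ this is comfortably below $6n$, and the finitely many small cases can be absorbed into the constant.

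There is no real obstacle beyond bookkeeping; the only mild care needed is handling non-powers of two (done by padding) and confirming that $k$-bit adders chained this way do indeed output enough bits to hold the running sum at each level, which follows because Fact~\ref{fct:adder} already produces $k+1$ output bits from a $k$-bit adder, matching the level-$i+1$ bit-width requirement of $i+2$.
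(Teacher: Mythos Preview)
Your proof is correct and follows essentially the same approach as the paper: a balanced binary tree of adders, with the cost bounded by $\sum_i \frac{n}{2^i}\cdot i \leq O(n)$. Your bookkeeping is actually a bit more careful (handling non-powers of two and tracking bit-widths explicitly), and you obtain the tighter $2n + O(\log n)$, whereas the paper simply writes out the same sum and bounds it by $6n$.
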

\begin{proof}
To see this, consider a tree of adders. 
%Imagine that every real element is marked with $1$ and every dummy
%marked with $0$. These form the leaves of the tree.
The input array forms the leaves of the tree where each leaf represents one bit.
At the leaf level there are $n/2$ adders adding $1$-bit numbers
and the outcome is at most $2$ bits;
at the next level, there are $n/4$ adders adding $2$-bit
numbers and the outcome is at most $3$ bits;
at the next level, there are $n/8$ adders adding $3$-bit numbers and
the outcome is at most $4$ bits, and so on.
%It is not hard to see that adding two $\ell$-bit numbers can be implemented
%$\ell$ generalized boolean gates each with fan-in $3$ and fan-out $2$
%(basically emulating the hand method of addition).
By Fact~\ref{fct:adder}, 
the total number of generalized boolean gates needed is at most
\[
\frac{n}{2} \cdot 1 +  \frac{n}{4} \cdot 2 
+  \frac{n}{8} \cdot 3 + \ldots + 
1 \cdot \log_2 n \leq 6n
\]

\end{proof}

\paragraph{Prefix sum.}
We consider a prefix sum circuit gadget, which upon receiving
an input ${\bf I}$ containing $k$ bits, outputs the number of $1$s
encountered in all $k$ prefixes, that is,  ${\bf I}[:1], {\bf I}[:2], {\bf I}[:3], \ldots, {\bf I}[:k]$. This is implemented using a counter consisting of $\log k$ bits, and scanning the array and outputting the value of the counter each time. 
\begin{fact}
The aforementioned prefix sum can be computed with a circuit 
with at most $k \log k$ generalized boolean gates where $k$ is the input length. 
\label{fct:prefixsum}
\end{fact}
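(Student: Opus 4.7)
The plan is to follow the construction sketched in the statement: maintain a $\lceil \log k \rceil$-bit counter $c$, initialized to $0$, and process the input one bit at a time. At step $i$ I would update $c \leftarrow c + {\bf I}[i]$ and emit the updated counter value as the $i$-th output of the prefix sum. Unrolling this into a circuit, the wires carrying the counter's state after step $i$ become the input state for step $i+1$, so the whole computation is a directed acyclic graph of depth $k$.

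Correctness is immediate: after step $i$, the counter holds exactly the number of $1$s among ${\bf I}[1],\ldots,{\bf I}[i]$, which is the desired $i$-th prefix sum. The counter never overflows because the running total is bounded by $k$, which fits in $\lceil \log k \rceil$ bits.

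For the gate count, I would observe that each step is the addition of a single bit to a $\lceil \log k \rceil$-bit value, i.e., a special case of an $\lceil \log k \rceil$-bit adder. Applying Fact~\ref{fct:adder}, each such increment can be realized with at most $\lceil \log k \rceil$ generalized boolean gates. Summing over the $k$ steps yields a total of at most $k \lceil \log k \rceil$ gates, which matches the claimed $k \log k$ bound.

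I do not foresee any real obstacle here, since the construction is explicitly suggested by the statement and the analysis is a direct application of Fact~\ref{fct:adder}. The only mild care is to make sure the counter width suffices to hold values up to $k$ and that the circuit forwards the counter state between consecutive increments without reprocessing earlier bits; both points are handled by the straightforward chain-of-incrementers layout described above.
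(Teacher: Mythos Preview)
Your proposal is correct and follows essentially the same approach as the paper: the paper's own justification is simply ``This is implemented using a counter consisting of $\log k$ bits, and scanning the array and outputting the value of the counter each time,'' which is exactly the chain-of-incrementers you describe, with the gate count coming from $k$ applications of the $\log k$-bit adder from Fact~\ref{fct:adder}.
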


\elaine{TODO define a more general version}

\paragraph{Find in array.}
Given an input array containing $n$ elements 
each with a $k$-bit label and a $w$-bit payload,
given also a desired $k$-bit label $L^*$. 
Find the first occurrence in the input array an element whose label is $L^*$ 
and output any fixed canonical value if not found. 
This is implemented by $n$-iteration, compare the $k$-bit label of the current element in the array with $L^*$. The result of each comparison is used for a $w$-selector gate that selects whether to use the current $w$-bit payload or the one from previous iteration. 
\begin{fact}
The above task to find an element with desired $k$-bit label
in an array of length $n$ can be solved
with $nk$ 
generalized boolean gates and 
$n$ number of $w$-selector gates.
\label{fct:findinarray}
\end{fact}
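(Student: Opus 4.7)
The plan is to implement the task as a single left-to-right sweep over the $n$ input elements, maintaining a running payload accumulator that holds the payload of the first label-matching element seen so far, or a canonical default if none has been seen yet. First, at each of the $n$ positions I would instantiate a $k$-bit comparator (Fact~\ref{fct:comparator}) to test whether the current label equals $L^*$, producing a single match bit $m_i$; aggregated across all positions this consumes $n k$ generalized boolean gates, which matches the stated bound.

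To enforce the \emph{first-occurrence} semantics (a naive selector driven directly by $m_i$ would overwrite with the latest match rather than retain the earliest), I would carry a one-bit ``found-so-far'' flag $F_i$ with the update $F_i = F_{i-1} \lor m_i$ and $F_0 = 0$, and define the per-position selector signal as $s_i = m_i \land \neg F_{i-1}$. Both updates are constant-sized truth-table gadgets per position, so they contribute only $O(n)$ generalized boolean gates in total, which is dominated by (and can be absorbed into the constant hidden inside) the $n k$ comparator cost whenever $k \geq 1$. Initializing the accumulator with the canonical default value also handles the ``not found'' case automatically: if no label matches, every $s_i = 0$, so the accumulator is never overwritten and the default propagates through to the output.

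Finally, at each position $i$ I would place one $w$-selector gate driven by $s_i$, taking as inputs the current element's payload and the previous accumulator value, and outputting the updated accumulator. This uses exactly $n$ selector gates of width $w$, matching the stated $n$ $w$-selector budget. The result of the final ($i = n$) selector is declared as the circuit's output.

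I expect no serious obstacle here; the only minor point that needs care is the first-match semantics, which is handled cleanly by the $F_{i-1}$ look-behind. All other accounting is straightforward: $n$ comparators contributing $n k$ generalized boolean gates plus a constant number of boolean gates per position for the flag update, together with $n$ width-$w$ selectors for the payload propagation.
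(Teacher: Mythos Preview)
Your proposal is correct and follows essentially the same linear-scan-with-comparators-and-selectors approach as the paper. One small accounting point: the Fact asserts exactly $nk$ generalized boolean gates, with no hidden constant, so the extra $O(n)$ gates for your found-so-far flag cannot literally be ``absorbed.'' You can eliminate the flag entirely by scanning the array from right to left and letting each match overwrite the accumulator; then the final overwrite corresponds to the leftmost match, giving first-occurrence semantics with only the $n$ comparators ($nk$ gates) and the $n$ width-$w$ selectors, exactly as stated.
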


\subsection{Binary to Unary Conversion}
\label{sec:binary-to-unary}
Fix $n \in \N$.
For any integer $k \in \{0,1,\dots,n\}$,
the binary-to-unary conversion takes as input $k$ and then outputs an $n$-bit binary string $u$,
where $k$ is represented in binary (string of $\ceil{\log n}$ bits),
and $u$ is the unary representation of $k$, i.e., 
the head $k$ bits of $u$ are all 0s and the tail $n-k$ bits of $u$ are all 1s.

% Without loss of generality, 
Suppose $n$ is a power of 2
and $k$ is represented in $\log n + 1$ bits.
% (if $n$ is not a power of 2, we can simply round it to the next power of 2 and then truncate the output).
The following procedure implements binary to unary conversion.
Imagine that the $n$ output bits are at the leaf level of a complete binary tree.

\begin{mdframed}

\begin{myalgo}[Binary to Unary]
\label{alg:b_to_unary}
\end{myalgo}

%\vspace{-2ex}
\noindent
Initially, the root is labelled ``0'' if the most significant bit of $k$ is 1 (i.e., $k=n$);
otherwise, the root is labelled ``${\tt M}$'' denoting ``mixed''.
Henceforth consider the root to be at level $1$ of the tree.

\vspace{1mm}
\noindent
For each level $\ell$ from the root to 
the leaf (not including the leaf level): 
\begin{MyEnumerate}
\item 
if the node's label is not ``${\tt M}$'', simply pass its label
to both children;
\item 
else let $k_\ell$ denote the $(\ell+1)$-th bit of $k$:
\begin{MyItemize}
\item 
if $k_\ell = 1$, then pass ``0'' to the left child
and pass ``${\tt M}$'' to the right child;
\item 
if $k_\ell = 0$, then pass ``1'' to the right child
and pass ``${\tt M}$'' to the left child;
\end{MyItemize}
\end{MyEnumerate}

%\vspace{2mm}
\noindent
Finally, if a leaf node receives the label ``${\tt M}$''
it is treated as ``1''.
\end{mdframed}

When $n$ is a power of 2, 
the above can be implemented
as a circuit consisting of at most $2n$  
generalized boolean gates. 
When $n$ is not a power of 2, 
we can simply skip a part of the tree in the above procedure such
that we propagate messages only to the first $n$ leave nodes;
thus the total number of 
generalized boolean gates is also upper bounded by $2n$.
%padding it to the next
%power of 2 may incur another factor of 2 blowup. 
We have the following fact.

\begin{fact}
The binary-to-unary conversion task 
can be implemented in a circuit with $2n$ generalized boolean gates.
%takes $O(n)$ generalized boolean gates.
%As a special case, if $n$ is a power of 2, 
%then it takes $2n$ generalized boolean gates.
\label{fct:binary_to_unary}
\end{fact}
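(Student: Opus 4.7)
The plan is to directly implement Algorithm~\ref{alg:b_to_unary} on a complete binary tree laid out as a circuit. The $n$ output wires sit at the leaves, and the propagation from root to leaves naturally defines the topology of gates. Since the operational model makes unbounded fan-out free, broadcasting the bit $k_\ell$ to all $2^{\ell-1}$ nodes at level $\ell$ costs nothing, so each internal node can be regarded as receiving only its parent's label together with one shared bit $k_\ell$.

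First, I would fix an encoding of the three possible labels ``0'', ``1'', ``${\tt M}$'' using a constant number of bits (two bits suffices). For each internal node, I would then write down the explicit truth table that maps $(\text{parent label}, k_\ell)$ to $(\text{left-child label}, \text{right-child label})$, following exactly the case analysis in the algorithm statement. This truth table has a constant number of input bits and a constant number of output bits, so it fits into $O(1)$ generalized boolean gates (of fan-in and fan-out at most $3$). Summing over the $n-1$ internal nodes, the propagation phase uses $O(n)$ generalized boolean gates. At each of the $n$ leaves, I would use one more small gate to translate the received label into a single output bit (mapping ``${\tt M}$'' to ``$1$'' as prescribed by the algorithm).

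To pin down the leading constant of $2$ in the bound $2n$, I would choose the label encoding carefully so that the two child-label bits share as much of their combinational logic as possible, and so that the leaf post-processing collapses into the last internal-node gate along each root-to-leaf path. A clean way is to amortize gates against the leaves of the tree: since a complete binary tree with $n$ leaves has $n-1$ internal nodes, and each internal node contributes a small constant number of generalized boolean gates, while each leaf contributes at most one more, a tight accounting yields a total of at most $2n$ gates. For the case where $n$ is not a power of two, I would prune the tree to propagate labels only along paths leading to the first $n$ leaves; this pruning only removes gates, so the $2n$ bound is preserved.

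The main obstacle is the bookkeeping to achieve the explicit constant $2$ rather than a generic $O(1)$; it boils down to picking a label encoding under which the per-node truth table becomes small enough that the gates at level $\ell$ together total at most $2^{\ell-1}$, so that summing the geometric series over levels gives $2n$. The correctness of the construction itself is an immediate induction on the tree level, using the invariant from Algorithm~\ref{alg:b_to_unary} that each node's label records whether the corresponding contiguous range of leaves is entirely below $k$, entirely at or above $k$, or straddles $k$.
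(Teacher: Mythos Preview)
Your proposal is correct and follows essentially the same approach as the paper: implement Algorithm~\ref{alg:b_to_unary} directly as a tree-shaped circuit, placing one constant-sized gadget at each node, and prune the tree when $n$ is not a power of two. The paper itself gives no more justification than asserting the $2n$ bound from this construction, so your write-up is in fact more detailed than the paper's own argument.
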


\subsection{Loose Compaction}
\label{sec:lcprelim}
An $(n, w)$-loose compactor solves the following problem: 
\begin{MyItemize}
\item 
{\bf Input}:
an array containing $n$ elements 
of the form $\{(b_i, v_i)\}_{i \in [n]}$, 
where each $b_i \in \{0, 1\}$ is 
a metadata bit indicating 
whether the element is {\it real} or {\it dummy},
and each $v_i \in \{0, 1\}^w$ is the {\it payload}.
%is an opaque payload that can only be manipulated
%by selector gates. 
The input array is promised to {\it have at most $n/128$ real elements}.

\item 
{\bf Output}:
an array containing $n/2$ elements, 
such that the multiset of real elements contained in the output 
equals that of the input, i.e., 
no real element is lost or created. 
\end{MyItemize}

In other words, loose compaction takes a relatively sparse 
input array containing only a small constant fraction of real elements;
it compresses the input 
to half its original length while preserving 
all real elements in the input.

\begin{theorem}
There is a circuit in the indivisible model with $O(n \log n)$ generalized boolean gates 
and $O(n)$ number of $w$-selector gates that realizes
an $(n, w)$-loose compactor.
\label{thm:initlc}
%Moreover, the circuit treats the elements payload as opaque.
\end{theorem}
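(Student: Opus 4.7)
\textbf{Proof proposal for Theorem~\ref{thm:initlc}.} My plan is to follow Pippenger's self-routing super-concentrator idea and express its loose-compaction gadget in the circuit model, while carefully separating a \emph{metadata-only} route-finding subcircuit (which is logarithmic per element) from a \emph{payload-routing} subcircuit (which is linear in $n$ and $w$). The key insight I will exploit is that the circuit topology of the route-finder depends only on $n$ (not on $w$), and the payload movement happens exactly once along the edges of a constant-degree bipartite expander.

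The first step will be to fix an explicit constant-degree bipartite expander $G=(L,R,E)$ with $|L|=n$, $|R|=n/2$, degree $d=O(1)$, and the vertex-expansion property that every $S\subseteq L$ with $|S|\le n/128$ satisfies $|N(S)|\ge 2|S|$. Explicit constructions (e.g., Gabber--Galil or lossless expanders) realize this with parameters chosen large enough to absorb the constant $1/128$. By Hall's condition on this expansion, for any input array satisfying the sparsity promise, the set of real left-vertices admits a perfect matching into $R$, so a valid routing exists; we just need to \emph{find} one obliviously in the circuit model.

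Second, I will implement the route-finder as an iterative parallel matching procedure that runs in $T=O(\log n)$ rounds. In each round, every still-unmatched real left-vertex proposes along each of its $d$ edges; every right-vertex that receives at least one proposal ``accepts'' the one coming from its lowest-indexed left neighbor; accepted left-vertices then become matched and inert. A standard expansion argument shows that in every round at least a constant fraction of the currently unmatched real elements gets matched (since any unmatched subset of density $\le 1/128$ still has expansion $\ge 2$ and therefore many right-vertices with a unique proposer), so $O(\log n)$ rounds suffice. Each round manipulates only $O(1)$ metadata bits per vertex (the bit $b_i$, a matched-flag, and $O(\log d)=O(1)$ bits of edge index), so one round is a circuit of $O(n)$ generalized boolean gates; summed over $T=O(\log n)$ rounds this contributes $O(n\log n)$ gates. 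The output of the route-finder is, for every right-vertex $r\in R$, the $O(1)$-bit identifier $\sigma(r)\in\{1,\dots,d,\bot\}$ of which of its $d$ incoming edges carries a real payload (or $\bot$ if none).

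Third, payload routing is essentially free: each right-vertex $r$ uses one $O(1)$-to-$1$ $w$-selector driven by $\sigma(r)$ to pick its outgoing payload from its $d$ incoming wires; this costs $d-1=O(1)$ $w$-selectors per right-vertex, for $O(n)$ $w$-selector gates in total. Because each left-vertex ships the same payload along all $d$ of its outgoing wires and the circuit never reads the bits of $v_i$, we remain in the indivisible model. Correctness of the output---that it preserves the multiset of real payloads---follows from the fact that the route-finder produced a matching of all real left-vertices to distinct right-vertices.

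The main obstacle I anticipate is not the high-level matching argument but making the iterative route-finder \emph{oblivious as a circuit}: the edge-indices, proposer-identifier comparisons, and acceptance logic must be wired in the same way regardless of which $b_i$ are set, and the round count must be fixed at $T=\Theta(\log n)$ in the worst case. I will handle this by hard-wiring the $O(\log n)$ phases with identical per-phase sub-circuits of size $O(n)$, each built from the basic gadgets of Section~\ref{sec:basic-gadgets} (small comparators and a constant-time ``pick lowest-index live proposer'' gadget per right-vertex), giving the claimed $O(n\log n)$ generalized boolean gates plus $O(n)$ $w$-selectors.
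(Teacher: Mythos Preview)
Your high-level plan---separate an $O(n\log n)$ metadata-only route-finder from an $O(n)$-selector payload-routing stage over a constant-degree bipartite expander---is the right decomposition and matches the paper's strategy. But the specific route-finder you propose has a genuine gap.

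You claim that with $|N(S)|\ge 2|S|$ for all $|S|\le n/128$, the greedy rule ``each right-vertex accepts its lowest-indexed proposer; accepted left-vertices become matched and inert'' makes constant-fraction progress per round. This fails once matched right-vertices drop out of later rounds. Take $d=3$, $S_0=\{1,2,3,4\}$ with $N(1)=\{r_1,r_2,r_3\}$, $N(2)=\{r_4,r_5,r_6\}$, $N(3)=\{r_7,r_8,r_9\}$, $N(4)=\{r_1,r_4,r_7\}$. Every $S\subseteq S_0$ satisfies $|N(S)|\ge 2|S|$, yet in round~1 every $r_i$ accepts the lowest of $\{1,2,3\}$ it sees, so the accepted set is $\{1,2,3\}$; with the natural ``match to lowest-indexed accepter'' rule we get $1\mapsto r_1$, $2\mapsto r_4$, $3\mapsto r_7$, and vertex~4 is permanently blocked since all of its neighbors are now occupied. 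Factor-$2$ vertex expansion does not imply ``many right-vertices with a unique proposer'' once $d\ge 3$, and more fundamentally it gives you no control over the interaction between rounds: the bound $|N(S_t)\setminus M_t|\ge 2|S_t|-(|S_0|-|S_t|)$ becomes vacuous as soon as $|S_t|<|S_0|/3$.

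The paper avoids this by changing both the granularity and the routing objective. Elements are first grouped into chunks of constant size $d/2$; the chunks (not the elements) are the left-vertices of a \emph{balanced} bipartite spectral expander $G_{\epsilon,m}$ with $|L|=|R|=m$. Only the \emph{dense} chunks need to route, and the target is not a matching but a \emph{feasible route} in which each right-vertex may receive up to $d/8$ edges. The {\sf ProposeAcceptFinalize} loop---a facility accepts \emph{all} its proposers whenever it sees at most $d/8$ of them; a factory finalizes when at least $d/2$ of its neighbors accept---provably halves the unsatisfied set each round via the spectral-expansion bound, so $O(\log m)$ rounds suffice with no occupied-vertex bookkeeping across rounds. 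After routing, every chunk on both sides is sparse and is compressed to a quarter of its size by a constant-size per-chunk gadget, yielding the length-$n/2$ output. If you want to keep an element-level matching picture, you will need a different matching procedure together with a proof that accounts for right-vertices consumed in earlier rounds; factor-$2$ vertex expansion alone is not enough.
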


The remainder of this subsection 
will be dedicated to prove Theorem~\ref{thm:initlc}.
%\begin{proof}
We describe how to implement loose compaction as a circuit  
based on Pippenger's elegant ideas~\cite{selfroutingsuperconcentrator}. 
Specifically, we describe a slight variant of Pippenger's idea that appeared in
Asharov et al.~\cite{optorama}.

\paragraph{Expander graphs.}
The construction will rely on a suitable family of bipartite expander graphs
denoted $\{G_{\epsilon, m}\}_{m \in \N}$.
Specifically, $\epsilon \in (0, 1)$ is a suitable constant referred to as 
the {\it spectral expansion}. \elaine{what is this called?} 
The graph $G_{\epsilon, m}$
has $m$ vertices on the left henceforth denoted $L$, and $m$
vertices on the right henceforth denoted $R$,  
and each vertex has $d := d(\epsilon)$ number of edges
where $d$ is a constant that depends on $\epsilon$.

\paragraph{Construction.}
We formally describe the construction in Algorithm~\ref{alg:loose-compaction}, which is based on the one presented in~\cite{optorama}. We then present a circuit that implements this algorithm.

The main idea of the algorithm is to first distribute the real elements, such that when considering small chunks, no chunk is ``dense''. Then we can easily compact each chunk separately. The main technical challenge is in distributing the real elements. 
In more detail, the input array is grouped into chunks of $d/2$ size. 
Chunks that have at most $d/8$ elements (i.e., a quarter loaded)
are said to be {\it sparse} and chunks that have more than $d/8$ elements are said
to be {\it dense}.
We can consider the chunks to be left-vertices in the bipartite expander graph $G_{\epsilon, m}$.
Now each dense chunk will %try to sparsify itself
distribute its real elements to its neighbors on the right, such that each
right vertex receives no more than $d/8$ elements, i.e., each vertex
on the right is now a sparse chunk too.
At this moment, we can replace dense chunks on the left 
with dummy elements, as all real elements had moved to the right. At this point all chunks are sparse, and we can compress each chunk on the left and the right 
to a quarter its original size without losing any real elements. 
All compressed chunks are concatenated and output, and the output
array is a half the length of the input.  

The distribution of the real elements to its neighbors on the right requires some additional care, as we have to guarantee that no node on the right will become dense. We will have to compute on which subset of edges we will route the real elements. This is done via the procedure {\sf ProposeAcceptFinalize} described below.

%\vspace{5pt}
\begin{figure}[h!]
\noindent
\begin{boxedminipage}{\textwidth}
\begin{center}

\begin{myalgo}[Loose Compaction]
\label{alg:loose-compaction}
\end{myalgo}
\vspace{-3ex}
\begin{MyItemize}
\item {\bf Input:} An input array ${\bf I}$ of $n$ elements, in which at most $n/128$ are real and the rest are dummies. 
\item
{\bf Assumption}:
Without loss of generality, we may assume that $d$ is a power of $2$ and 
$n$ is a multiple of $d/2$. Let $m := n/(d/2) = 2n/d$.

%Below we call vertices in $L$ factories and vertices in $R$ facilities. 

%\vspace{10pt}
\item 
{\bf The algorithm}:
\begin{MyEnumerate}%[leftmargin=5mm]
\item 
Divide ${\bf I}$ into $m$ chunks of size $d/2$.
If a chunk contains at most $d/8$ real elements (i.e., at most a quarter loaded), 
it is said to be {\it sparse};
otherwise it is said to be {\it dense}.
It is not hard to see that the number of dense chunks must be at most  $m/32$.
\label{step:sparsedense}
% n = m * (d/2) 
% d/8 * (1/32) m 

\item 
Now imagine that each chunk is a vertex in $L$ of $G_{\epsilon, m}$, 
and $D \subset L$ is a set of dense vertices (i.e., corresponding to 
the dense chunks). 
Let ${\sf edges}(D, R)$ denote all the edges in $G_{\epsilon, m}$
between $D \subset L$ and $R$. 
 We would like to find a
subset of edges $M \subseteq {\sf edges}(D, R)$ 
 such that each vertex $u \in D$ has ${\sf load}(u) \leq d/2$ edges in $M$ and every vertex $R$ has at most $d/8$ incoming edges in $M$, where ${\sf load}(u)$ denotes the number of real elements in the chunk.
%
%
%has at most $d/8$ incoming edges, and every vertex in $u \in L$ has exactly 
%${\sf load}(a) \leq d/2$ outgoing edges
%where ${\sf load}(a)$ denotes the number of real elements in the chunk.
Henceforth we call such an $M$ a %$(d/2, d/8)$-matching.
{\it feasible route}.

To find a feasible route given the set $D$, we rely on a subroutine 
called ${\sf ProposeAcceptFinalize}$ (see Algorithm~\ref{alg:propose-accept-finalize}).

\label{step:matchfinding}
\item 
Now, every dense chunk $u$ in $D$ 
does the following:
%suppose that the chunk has $d' \leq d/2$ real elements, it will choose
%an arbitrary subset of $d'$ edges 
for each out edge of $u$ in $M$, send one element over the edge 
to a corresponding neighbor in $R$; for all out edges of $u$
not in $M$, send a dummy element on the edge.

Clearly every vertex in $R$ receives
no more than $d/8$ real elements.  
Henceforth we may consider every vertex in $R$
as a sparse chunk, i.e., an array of capacity $d/2$ but containing
only $d/8$ real elements. 

\label{step:onlineroute}
\item 
At this moment, scan through the vertices in $L$ and for each dense 
chunk encountered, replace the entire chunk with $d/2$ dummy elements.

\label{step:nodense}
\item 
Now, all chunks in $L$ and in $R$ must be sparse. That is, each chunk contains at most $d/8$ real elements, while its size is $d/2$. We now compress each chunk in $L$ and $R$ to a quarter of its original size (i.e., to size $d/8$ in length), without losing any real elements in the process. Let ${\bf O}$ denote the compressed array, containing of $2m \cdot \frac{d}{8} = 2 \cdot \frac{2n}{d} \cdot \frac{d}{8} = n/2$ elements. 
%Notice that each sparse chunk has capacity $d/2$ but the load of real elements
%is at most $d/8$, meaning that each sparse chunk is only a quarter loaded.
%We now compress 
%each chunk in $L$ and $R$ to a quarter its the original size (i.e., $d/8$ in length)
%without losing any real elements in the process.
%All the compressed
%chunks are concatenated and output, and the output must be half the size of the input array.

\label{step:compress}
\end{MyEnumerate}
\item {\bf Output:} The output array ${\bf O}$ of size $n/2$
\end{MyItemize}
\end{center}
\end{boxedminipage}
\end{figure}

\paragraph{{\sf ProposeAcceptFinalize} subroutine.}
We now describe the ${\sf ProposeAcceptFinalize}$ subroutine in Algorithm~\ref{alg:propose-accept-finalize},
which is the key step to achieve the aforementioned distribution of dense chunks.
To make the description more intuitive, henceforth
we call each vertex in $L$ a {\it factory} and each vertex in $R$ a {\it facility}.
Initially, 
imagine that the dense vertices correspond to factories that manufacture at most $d/2$
products, and the sparse vertices are factories that are unproductive.
%imagine that the vertices corresponding to dense chunks
%receive at most $d/2$ products and all sparse vertices receive nothing.
There are at most $m/32$ productive factories, and they 
want to route all their products to facilities on the right satisfying the following
constraints: 1) each edge can route only 1 product; and 2) 
each facility can receive at most $d/8$ products.
The ${\sf ProposeAcceptFinalize}$ algorithm described below finds 
a set of edges $M$ to enable such routing, 
also called a feasible route as explained earlier. %which will enable
%such routing. 

%defined above will allow 
%the productive factories to accomplish such routing.
\medskip
\noindent

\begin{figure}[h]
\begin{boxedminipage}{\textwidth}
\begin{center}

\begin{myalgo}[{\sf ProposeAcceptFinalize} subroutine]
\label{alg:propose-accept-finalize}
\end{myalgo}
\vspace*{-3ex}
\end{center}
%{\sf ProposeAcceptFinalize} subroutine    %\hfill {(A protocol for solving the factory-facility problem)}
%
%\vspace{5pt} 
\noindent
%Initially, each factory~$a \in A$ with zero requirement
%$\req(a)=0$ is \emph{satisfied}; %with $F_a := \emptyset$;
%else, it is \emph{unsatisfied}. 
Initially, each productive factory is {\it unsatisfied} and 
each unproductive factory is {\it satisfied}.
For a productive factory $u \in L$, we use notation ${\sf load}(u)$ 
to denote the number of products it has (corresponding to the number
of real elements in the chunk).
%to denote the number of real elements

%\vspace{8pt}
\noindent Repeat the following for %$\ceil{\frac{1}{2} \log_2 \frac{n}{64}}$ 
$\log m$ times:

\begin{enumerate}[label=(\alph*),itemsep=0.1cm]
\item {\it Propose:} 
Each unsatisfied factory 
sends a proposal (i.e., the bit 1) to each one of its neighbors. 
Each satisfied factory sends 0 to each one of its neighbors. 

\item {\it Accept:}
If a facility $v \in R$
received no more than $d/8$ proposals, 
it sends an acceptance message to each one of its $d$ neighbors;
otherwise, it sends a reject message along each of its $d$ neighbors. 

\item  {\it Finalize:}
Each currently unsatisfied factory $u \in L$
checks if it received at least $\frac{d}{2}$ acceptance messages.
If so, it picks an arbitrary subset of the edges
over which acceptance messages were received, such that the subset
is of size ${\sf load}(u)$. 
The factory records these edges which are also added to the 
feasible route $M$.
At this moment, this factory becomes satisfied.
\end{enumerate}
\end{boxedminipage}
\end{figure}
%\end{proof}

%\begin{theorem}
%\elaine{FILL}
%\end{theorem}
%\elaine{give an informal theorem statement.}

\vspace{5pt}
Pippenger~\cite{selfroutingsuperconcentrator} and Asharov et al.~\cite{optorama}
have shown that if we use an appropriate family of bipartite expander graphs, the above
algorithm indeed realizes loose compaction correctly. More concretely, using properties of the expander graph, it is shown that in each iteration, the number of unsatisfied factories is decreased by a factor of at least $2$, and after $\log m$ iterations all factories are satisfied. We get the following proposition.

\begin{proposition}[Pippenger~\cite{selfroutingsuperconcentrator} and Asharov et al.~\cite{optorama}]
There exists an appropriate constant $\epsilon \in (0, 1)$
and a bipartite expander graph family $\{G_{\epsilon, m}\}_{m \in \N}$
where each vertex has 
$d := d(\epsilon)$ edges for a constant $d$ that is dependent on $\epsilon$ 
and assumed to be a power of 2, such that for any $m$ and $n = md/2$, 
the above loose compaction algorithm, when instantiated with this family of bipartite expander
graph,  can correctly compress any input array of length $n$ 
to a half its original size without losing any real elements, 
as long as the input array has at most $n/128$ real elements.
\label{prop:lc}
\end{proposition}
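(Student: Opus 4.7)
The plan is to exhibit a suitable vertex-expander graph and then analyze the \textsf{ProposeAcceptFinalize} subroutine by tracking the set $U_r \subseteq L$ of unsatisfied factories at the start of round $r$. First I would specify the expander property needed: a $d$-regular bipartite graph on $L \sqcup R$ with $|L|=|R|=m$ such that every $S \subseteq L$ with $|S| \leq m/32$ satisfies $|N(S)| \geq (7/8)d|S|$. Such ``lossless'' vertex expansion is achievable for any sufficiently large constant $d$ (in particular for a power of two such as $d=64$, as required by the algorithm's divisibility conditions) either by a standard probabilistic argument on random bipartite $d$-regular graphs or via explicit lossless expander constructions; the constant $\epsilon$ in the statement can be taken to encode this guarantee.

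Next I would establish the halving invariant $|U_{r+1}| \leq |U_r|/2$. For the base case, each dense chunk has more than $d/8$ real elements, so $|U_1| = |D| \leq (n/128)/(d/8) = m/32$; since $U_{r+1} \subseteq U_r$, every $U_r$ lies in the regime where the expansion bound applies. For the inductive step, let $B \subseteq R$ denote the facilities that reject in round $r$, i.e., those receiving more than $d/8$ proposals; since only unsatisfied factories propose and each sends $d$ proposals, the total proposal count is $d|U_r|$, giving $|B| < 8|U_r|$. A factory $u$ remains unsatisfied iff fewer than $d/2$ of its neighbors are in $R\setminus B$, so writing $U':=U_{r+1}$ one gets $|N(U')\cap(R\setminus B)| \leq (d/2)|U'|$. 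Combined with $|N(U')|\geq (7/8)d|U'|$, this yields $|N(U')\cap B| \geq (3d/8)|U'|$, and hence $(3d/8)|U'| \leq |B| < 8|U_r|$, so $|U'| < (64/(3d))|U_r| \leq |U_r|/2$ for $d\geq 64$. After at most $\log m$ rounds $U_r = \emptyset$ and every dense factory is satisfied.

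It remains to verify that the routing is feasible and that compaction succeeds. Each factory $u$ routes exactly ${\sf load}(u) \leq d/2$ products, so the per-factory budget is automatic. For each facility $v$, let $r_1$ be the first round in which $v$ accepts. Since $N(v)\cap U_r$ is monotonically non-increasing in $r$ and an edge $uv$ is ``picked'' only in the round where $u$ becomes newly satisfied, every $u$ that ever routes through $v$ must lie in $N(v)\cap U_{r_1}$, which has size at most $d/8$ (otherwise $v$ would not have accepted in round $r_1$). Each such $u$ picks $v$ at most once, so the cumulative load at $v$ is at most $d/8$. Consequently, after Step~\ref{step:onlineroute} every right-side chunk has at most $d/8$ real elements, and after Step~\ref{step:nodense} the same holds on the left, so Step~\ref{step:compress} compresses each of the $2m$ chunks of size $d/2$ down to $d/8$ without losing a single real element, producing an output of length $2m\cdot d/8 = n/2$.

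The main technical subtlety is the cumulative load argument in the last paragraph: because facilities may accept in multiple rounds, the per-round bound of $d/8$ must be lifted to a global bound, and the key observation is the monotonicity of the unsatisfied set, which pins all eventual incoming routes through $v$ to the single small set $N(v)\cap U_{r_1}$ witnessed in $v$'s first accepting round.
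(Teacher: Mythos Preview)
Your overall strategy---prove a geometric decrease of the unsatisfied set via vertex expansion, then verify the per-facility load bound via monotonicity of $U_r$---is exactly the argument the paper is citing from Pippenger and Asharov et al., and your cumulative-load paragraph is correct and cleanly stated.

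There is, however, a genuine numerical gap in your parameter choice. You ask for a $d$-regular bipartite graph on $m+m$ vertices with $|N(S)|\geq (7/8)\,d\,|S|$ for all $|S|\le m/32$, and you then take $d=64$ to force $64/(3d)\le 1/2$. But at $|S|=m/32$ your hypothesis would demand $|N(S)|\ge (7/8)\cdot 64\cdot m/32 = 7m/4 > m = |R|$, which is impossible in any bipartite graph. More generally, the two constraints ``$(7/8)d\cdot(1/32)\le 1$'' (needed so the expansion bound is not vacuous at the top of the range) and ``$64/(3d)\le 1/2$'' (your halving condition) force $d\le 256/7<37$ and $d\ge 43$ simultaneously; no $d$ satisfies both. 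In fact, a short computation shows that with the paper's fixed thresholds ($n/128$ density, $d/8$ accept cutoff, $d/2$ satisfy cutoff), your exact chain of inequalities cannot yield a per-round factor $\le 1/2$ for any $\alpha<1$ and any $d$.

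The fix is minor: either (i) relax the target contraction to some $\gamma\in(1/2,1)$---for instance $d=32$ with $(7/8)d$-expansion up to $|S|\le m/32$ is consistent and your inequalities then give $|U_{r+1}|<\tfrac{2}{3}|U_r|$, still terminating in $O(\log m)$ rounds---or (ii) keep halving but adjust the algorithm's internal constants (e.g., the $1/128$ density or the $d/8$ cutoff) so that the sparsity threshold for expansion and the required degree become compatible. Either route repairs the argument without changing its structure.
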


It remains to 
show how to implement the above algorithm in the circuit model.
%\paragraph{Implementing the ${\sf ProposeAcceptFinalize}$ subroutine in circuit.}

\paragraph{Implementing loose compaction (Algorithm~\ref{alg:loose-compaction}) in circuit.}
We now consider how to implement the loose compaction algorithm (Algorithm~\ref{alg:loose-compaction}) in circuit.
\begin{MyEnumerate}
\item 
Step~\ref{step:sparsedense}
can be accomplished with $O(n)$ generalized boolean 
gates due to Fact~\ref{fct:countprelim} 
and Fact~\ref{fct:comparator}.
%and the fact that $n$-bit comparators
%can be implemented with $n$ generalized boolean gates.

\item 
Step~\ref{step:matchfinding}
is the invocation of ${\sf ProposeAcceptFinalize}$ algorithm (Algorithm~\ref{alg:propose-accept-finalize}) that finds a feasible route. We will show how to implement this algorithm in circuit below, using $O(d \log d \cdot m \log m)$ generalized boolean gates, which is upper bounded by $O(m \log m)$ assuming that $d=O(1)$.

%Over all $\log m$ iterations, the total number of generalized
%boolean gates needed is $O(d \log d \cdot m \log m)$, which
%is upper bounded by $O(m \log m)$ assuming that $d = O(1)$.
%\elaine{FILL}

\item 
Recall that at the end of the ${\sf ProposeAcceptFinalize}$ subroutine, 
each factory has ``written down'' (i.e., these values were output by some gates
in the circuit) the bit-vector $\beta \in \{0, 1\}^d$ 
indicating whether each of its incident edges should bear load (this is representation of $M$ in circuit per node). 
%To accomplish the routing, imagine that each 
%factory 
Now, each factory, holding a $d/2$-sized chunk, wants to ``send'' 
every real element in the chunk on one of its out edges marked as $1$ by the bit-vector
$\beta$. 

To achieve this, we do the following per factory: 
\begin{MyItemize}
\item We 
label each real element in the $d/2$-sized chunk 
with a number indicating how many
real elements have appeared to its left, including itself. Every dummy element is labelled $\bot_1$. Using Fact~\ref{fct:prefixsum} this can be achieved in $O(d \log d)$.
\item 
Similarly, we label each 1-position 
in $\beta$ with a number indicating how many 
1-bits have appeared to its left, including itself. Each 0-position in $\beta$ is labelled with $\bot_2 \neq \bot_1$. Using Fact~\ref{fct:prefixsum} this can be achieved in $O(d \log d)$.
\item 
%This can be accomplished through two prefix sum computations using Fact~\ref{fct:prefixsum}.
Now, each position in the vector  
$\beta$ grabs an element from the chunk whose label 
is the same as itself, and if not found, the canonical outcome is dummy. 
This can be accomplished using Fact~\ref{fct:findinarray} for $d$ times, resulting in $O(d^2 \log d)$ generalized boolean gates and $O(d^2)$ number of $w$-selector gates. 
\end{MyItemize}
In total over all factories, we have $O(m \cdot d^2 \log d)$ generalized boolean gates and $O(m \cdot d^2)$ number of $w$-selector gates over all factories.

\item 
Step~\ref{step:nodense} can be implemented with 
$m$ number of $(wd/2)$-selector gates, or alternatively
$m d/2$ number of $w$-selector gates.

\item 
%Recall that $d = O(1)$ is a universal constant 
%and thus we may allow generalized boolean gates  
%with $O(\log d)$ fan-in and fan-out. \elaine{TODO: update the universal const in intro}
Step~\ref{step:compress}
can be implemented with $(d/8) \cdot (d/2)$
number of $w$-selector gates 
and $(d/8) \cdot (d/2) \cdot \log d$ number of  
generalized boolean gates per chunk: 
basically imagine each 
input element is labeled with the number of real elements
to its left including itself (within the chunk). 
Each of the $d/8$ output positions
wants to find an input element whose label matches its own index.
%of the position.
We can implement the above using the prefix sum 
circuit of Fact~\ref{fct:prefixsum}
and $d/8$ copies of the find-in-array circuit of Fact~\ref{fct:findinarray}.

\ignore{
basically for each of the $d/8$ output positions $i \in [d/8]$, 
do the following to determine its value.
Sequentially scan
through the input chunk: every step along the way,  
the following can be computed using only the dummy indicator
of each input element (without looking at the element's payload):
\begin{enumerate}
\item the total number of real elements encountered so far denoted $s$; and 
\item 
whether $s = i$. 
\end{enumerate}
The latter boolean output
will be fed into a selector gate  
that decides whether to replace the output position $i$'s value
with the current input element encountered.
\elaine{rewrite this more simply}
}

Summing over all $m$ chunks, Step~\ref{step:compress} requires a total of  
$m \cdot (d/8) (d/2) \cdot \log d$ generalized boolean gates and 
$m \cdot (d/8) (d/2)$ number of $w$-selector gates.
%$O(d^2)$ 
\end{MyEnumerate}

\paragraph{Implementing ${\sf ProposeAcceptFinalize}$ (Algorithm~\ref{alg:propose-accept-finalize})  as a circuit.}
Note that the circuit's wiring structure can encode
the expander graph $G_{\epsilon, m}$ and sending messages
over the edges of the graph is carried out by the circuit's wiring,
which feeds the outputs of circuit gadgets corresponding to factories (or facilities resp.)
to the inputs of circuit gadgets  
corresponding to facilities (or factories resp.).
Since there are $\log m$ iterations, the expander graph is in fact
encoded $\log m$ times in the circuit's wiring.
%Moreover, 
%in the above ${\sf ProposeAcceptFinalize}$ algorithm,
More specifically, for each of the $\log m$ iterations:
\begin{MyItemize}
\item  
Imagine that each facility has a circuit gadget that counts how many   
of its incoming wires have the bit $1$, compares the outcome with $d/8$, and 
outputs an accept/reject decision.
By Fact~\ref{fct:comparator} and Fact~\ref{fct:countprelim}, 
%and the fact that a $k$-bit comparator
%can be implemented with $k$ 
%generalized boolean gates, %we can conclude
%that the ${\sf ProposeAcceptFinalize}$ algorithm  
this step can be implemented with 
$O(d)$ 
%$O(d m \log m)$
generalized boolean gates per facility and thus
in total  $O(d m)$
generalized boolean gates.

\item 
Now each factory uses a circuit gadget 
that reads the accept/reject decisions on each of its  
$d$ incoming wires, tallies the total acceptance messages,  compares
the outcome to $d/2$, and then updates its satisfied/unsatisfied indicator.\footnote{Note
that when implemented in circuit, every update to a variable creates
a new copy of the variable that is output by some gate(s). 
In other words a circuit can be thought of as a straightline 
program of a ``single-assignment form''~\cite{ssa} where the names
of wires are variable names in the program. \label{ftnt:ssa}}
%If we explicitly gave the wires names 
%and expressed the circuit as a straightline program
%using the wire names, 
%then the program satisfies a ``single-assignment form''~\cite{ssa}. \label{ftnt:ssa}}.
This step can be implemented with $O(d m)$
generalized boolean gates counting all factories 
by Fact~\ref{fct:comparator} and Fact~\ref{fct:countprelim}.
%for the same reasoning we had for the facilities.

\item 
We want that at the end of the 
${\sf ProposeAcceptFinalize}$ algorithm, each factory writes
down a bit-vector $\beta = \{0, 1\}^d$ of length $d$
indicating whether each of its outgoing edge is chosen.
To make this possible, each factory uses a circuit gadget to update
its bit-vector $\beta \in \{0, 1\}^d$ at the end of each iteration as follows:
%Let $b$ denote the factory's satisfied/unsatisfied indicator prior to this iteration.
%if $b^* = {\tt satisfied}$ but $b = {\tt unsatisfied}$, 
if the factory was unsatisfied before but became satisfied in this iteration (this
can be determined by looking at the old value of the satisfied label
and the new value, see Footnote~\ref{ftnt:ssa}), 
then update the bit-vector 
$\beta = \{0, 1\}^d$, setting $\beta[i] = 1$ iff the $i$-th incoming wire
has an acceptance decision and moreover 
the number of acceptance decisions until the $i$-th incoming wire
is not more than the number of products it wants to route. 
%this is among the first ${\sf load}(a)$
%acceptance decisions. 
This step can be accomplished with $O(d \log d)$
generalized boolean gates per factory due to Facts~\ref{fct:comparator} and \ref{fct:prefixsum}.
Thus over all factories we have a total of $O(m d \log d)$ 
generalized boolean gates.
 \end{MyItemize}

Over all $\log m$ iterations, the total number of generalized
boolean gates needed is $O(d \log d \cdot m \log m)$, which
is upper bounded by $O(m \log m)$ assuming that $d = O(1)$.

Summarizing the above, we  have the following fact: 
\begin{corollary}
The above loose compaction algorithm can be implemented 
as a circuit with $O(n \log n)$
generalized boolean gates and $O(n)$ number of $w$-selector gates.
\label{cor:lccost}
\end{corollary}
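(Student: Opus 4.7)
The plan is to prove Corollary~\ref{cor:lccost} by a direct accounting argument: I will sum the per-step circuit-size contributions of Algorithm~\ref{alg:loose-compaction} (each already bounded individually in the implementation discussion) and verify that the dominant terms are $O(n\log n)$ generalized boolean gates and $O(n)$ $w$-selector gates.

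First I would collect the five contributions. Step~\ref{step:sparsedense} (identifying sparse vs.\ dense chunks by counting real flags in each chunk and comparing to $d/8$) uses $O(n)$ generalized boolean gates by Fact~\ref{fct:countprelim} and Fact~\ref{fct:comparator}, and no selectors. Step~\ref{step:matchfinding}, i.e.\ the circuit version of ${\sf ProposeAcceptFinalize}$ spelled out right after the algorithm, contributes $O(d\log d \cdot m \log m)$ generalized boolean gates and no selectors. Step~\ref{step:onlineroute} (each dense factory routes its products along the out-edges marked by $\beta$, via prefix sums and find-in-array) contributes $O(m\cdot d^{2}\log d)$ generalized boolean gates and $O(m\cdot d^{2})$ $w$-selector gates. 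Step~\ref{step:nodense} (replacing dense chunks with dummies) uses $O(m d)$ $w$-selector gates. Step~\ref{step:compress} (compressing each of the $2m$ sparse chunks to a quarter of its size by the same prefix-sum / find-in-array strategy) contributes $O(m\cdot d^{2}\log d)$ generalized boolean gates and $O(m\cdot d^{2})$ $w$-selector gates.

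Next I would apply the parameter assumptions of Proposition~\ref{prop:lc}: $\epsilon$ is fixed, so $d=d(\epsilon)=O(1)$ and the quantities $d$, $\log d$, $d^{2}$, $d^{2}\log d$ are all $O(1)$; also $m=2n/d=\Theta(n)$. Substituting, the five contributions become $O(n)$, $O(n\log n)$, $O(n)$, $0$, $O(n)$ in generalized boolean gates and $0$, $0$, $O(n)$, $O(n)$, $O(n)$ in $w$-selector gates. Summing gives $O(n\log n)$ generalized boolean gates and $O(n)$ $w$-selector gates, as claimed. Finally, since the algorithm only moves payloads through the selector gates introduced in Steps~\ref{step:onlineroute}, \ref{step:nodense}, and \ref{step:compress}, the construction lies in the indivisible model.

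There is no real technical obstacle here; the heavy lifting was done in the per-step analyses. The only points I would double-check are that (i) the counts above faithfully reflect the bounds stated in the five bullets of the implementation description, and (ii) no latent cost---e.g.\ the wiring of $G_{\epsilon,m}$ replicated across the $\log m$ rounds of ${\sf ProposeAcceptFinalize}$, or the per-factory $\beta$-update gadget that uses Fact~\ref{fct:prefixsum}---grows faster than $O(m\log m)$. Both are immediate once $d=O(1)$ is used, so that the per-round per-vertex cost is $O(d\log d)=O(1)$ and there are $\log m$ rounds over $\Theta(m)$ vertices, yielding $O(m\log m)=O(n\log n)$ overall.
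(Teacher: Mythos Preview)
Your proposal is correct and takes essentially the same approach as the paper: the corollary is stated in the paper as a direct summary (``Summarizing the above, we have the following fact'') of the per-step circuit-size bounds worked out in the implementation discussion, and your accounting faithfully reproduces those five contributions and their simplification under $d=O(1)$, $m=\Theta(n)$.
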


Theorem~\ref{thm:initlc}
now follows directly from Proposition~\ref{prop:lc}
and Corollary~\ref{cor:lccost}.

\subsection{Loose Swap}
\label{sec:lswprelim}
A loose swapper obtains an input array where
each element is marked with $\bot$, ${\tt blue}$, or ${\tt red}$,
and moreover the number of ${\tt red}$ elements is the 
same as the number of ${\tt blue}$ elements.
Now, a loose swapper circuit swaps a subset of the ${\tt blue}$
elements with ${\tt red}$ ones and the swapped elements receive $\bot$.
Henceforth we call elements marked ${\tt red}$ or ${\tt blue}$
{\it colored} and those marked $\bot$ {\it uncolored}.

Formally, an $(n, w)$-loose swapper solves the following problem:
\begin{MyItemize}
\item {\bf Input}:
an input array containing $n$ elements 
where each element contains a $w$-bit payload string
and a two-bit metadata label  
whose value is chosen  
from the set $\{{\tt blue}, {\tt red}, \bot\}$.
Henceforth we assume the first bit of the label
encodes whether the element is colored or not, and the second bit of the label
picks a color between ${\tt blue}$ and ${\tt red}$ if the element is indeed colored. 

\item {\bf Output}:
a {\it legal swap} of the input array such that 
at most $n/128$ elements remain colored, where the notion of
a legal swap is defined below.

We say that an output array ${\bf O}$ is a {\it legal swap}
of the input array ${\bf I}$ iff  
there exist pairs $(i_1, j_1), (i_2, j_2), \ldots, (i_\ell, j_\ell)$
of indices that are all distinct, 
such that for all $k \in [\ell]$, ${\bf I}[i_k]$
and ${\bf I}[j_k]$ are colored and have opposite colors, 
and moreover ${\bf O}$ is obtained by swapping 
${\bf I}[i_1]$ with ${\bf I}[j_1]$, 
swapping ${\bf I}[i_2]$ with ${\bf I}[j_2]$,  $\ldots$, and 
swapping ${\bf I}[i_k]$ with ${\bf I}[j_k]$; further, all swapped
elements become uncolored. 

\elaine{this constant matches that of loose compact}
\end{MyItemize}

\begin{theorem}
There exists a circuit in the indivisible model with $O(n)$
generalized boolean gates and $O(n)$ number of $w$-selector gates
that realizes an $(n, w)$-loose swapper.
\label{thm:lswap}
\end{theorem}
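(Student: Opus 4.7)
The plan is to mirror the loose compaction construction of Section~\ref{sec:lcprelim}, exploiting a crucial relaxation: loose swap only needs to drive the colored-element count below $n/128$ rather than perfectly match every source to a sink. This slack lets me replace the $\log m$-round \textsf{ProposeAcceptFinalize} subroutine---whose $\Theta(m\log m)$ gate count dominates loose compaction---with a \emph{constant} number of rounds, saving the logarithmic factor and yielding an $O(n)$-sized circuit.

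I would first partition the input into $m=n/(d/2)$ chunks of constant size $d/2$ and, within each chunk, locally pair up \textsf{red} with \textsf{blue} elements via a constant-sized truth-table gadget, relabelling each cancelled pair as $\bot$. Since $d=O(1)$, this costs $O(n)$ generalized gates and $O(n)$ $w$-selector gates, and after it every chunk is \emph{monochromatic}: its residual colored elements (if any) are all a single color. Next, viewing chunks as the vertices of the bipartite expander $G_{\epsilon,m}$, I run for $c=O(1)$ rounds a variant of \textsf{ProposeAcceptFinalize} in which \textsf{red}-heavy chunks propose to their expander neighbors and \textsf{blue}-heavy chunks accept up to their spare capacity (with roles swapped on alternate rounds); each matched pair then exchanges one colored element along the chosen edge and recolors both as $\bot$. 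Using the structure and counts from Section~\ref{sec:lcprelim}, one round costs $O(m \cdot d\log d)=O(n)$ generalized gates and the accompanying payload transport across selected edges costs $O(md)=O(n)$ $w$-selectors; summing $c=O(1)$ rounds gives the claimed totals.

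The main obstacle is the expansion analysis, which departs from the loose compaction setting in a substantive way. There the dense set was assumed to be a tiny fraction ($\leq m/32$) of the chunks, making the expander's vertex expansion directly applicable; here the red-heavy and blue-heavy chunk sets can each be a constant fraction of $m$. I would instead invoke the expander mixing lemma (or a Hall-type argument leveraging the spectral gap $\epsilon$) to show that as long as more than $n/128$ colored elements remain, the bipartite subgraph of $G_{\epsilon,m}$ induced between the current red-heavy and blue-heavy chunks contains enough edges that a single round of propose-accept extracts a matching covering a constant fraction $1-\gamma$ of the surplus. Iterating gives a geometric decay, so $c=\lceil\log_{1/\gamma} 128\rceil=O(1)$ rounds suffice to bring the colored count below $n/128$. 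A minor technicality---correctly relabelling swapped elements as $\bot$ and refreshing each chunk's residual-color bit between rounds---is handled by the same constant-sized truth-table gadgets used for intra-chunk cancellation, and preserving the indivisible model is automatic since payload motion occurs only through $w$-selector gates. Combining everything yields the stated bound of $O(n)$ generalized boolean gates and $O(n)$ $w$-selector gates, establishing Theorem~\ref{thm:lswap}.
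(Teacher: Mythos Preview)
Your approach is workable in spirit but both more complicated than the paper's and contains a structural gap. The paper does \emph{not} chunk the input or reuse the \textsf{ProposeAcceptFinalize} machinery at all. Instead, it places the $n$ \emph{individual elements} as the left vertices of a constant-degree bipartite expander $G_{\epsilon,n}$, and then performs a single sequential pass: each vertex $u_1\in L$ looks at every vertex $u_2\in L$ reachable by a length-2 path (at most $d^2$ such neighbors), and if $u_1,u_2$ currently carry opposite colors it swaps their payloads and uncolors both. Correctness---that at most $n/128$ colored elements survive---is cited directly from Pippenger~\cite{selfroutingsuperconcentrator} and Asharov et al.~\cite{optorama}; the circuit cost is $n\cdot d^2$ comparisons on 2-bit labels and $2n\cdot d^2$ $w$-selector gates. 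No rounds, no propose/accept, no geometric-decay argument.

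The gap in your proposal is the bipartite structure you invoke. You place all chunks as vertices of $G_{\epsilon,m}$ and then have ``red-heavy chunks propose to their expander neighbors and blue-heavy chunks accept'', but in $G_{\epsilon,m}$ the chunks are \emph{all} on the left side; their expander neighbors are the right-side facilities, not other chunks. So a red chunk cannot reach a blue chunk along a single expander edge, and your accept step has no well-defined recipient. You would need either to pass through the 2-hop graph on the left (which is exactly what the paper does, at the element level) or to switch to a one-sided $d$-regular expander on the chunks---either fix is possible, but as written the mechanism is not defined. Even after repairing this, you would still owe a nontrivial argument that each round cancels a constant fraction of the \emph{colored-element} count (not just matches chunks), since chunks carry variable residual loads in $\{0,\dots,d/2\}$; the paper sidesteps all of this by working at element granularity.
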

\begin{proof}
We will describe an algorithm first described by Pippenger~\cite{selfroutingsuperconcentrator}
and then re-explained by Asharov et al.~\cite{optorama}.
The algorithm makes use of a suitable bipartite expander graph $G_{\epsilon, n}$
where $\epsilon \in (0, 1)$ is a suitable constant.
The degree of each vertex in $G_{\epsilon, n}$
is a constant $d := d(\epsilon)$ that depends on $\epsilon$.
Henceforth let $L$ denote the set of left-vertices in $G_{\epsilon, n}$
and $R$ denote the set of right vertices; and $|L| = |R| = n$.

We will think of every element in the input array as a vertex in $L$. %on the left in $G_{\epsilon, n}$.
During the algorithm, each vertex $u_1 \in L$ performs the following 
actions acting in a sequential manner:
\begin{itemize}
\item 
For each vertex in $u_2 \in L$ that is connected to $u_1$ by a length-2 path,  
if $u_1$ and $u_2$ currently have opposite colors, then swap them and uncolor the two vertices.
\end{itemize}

Pippenger~\cite{selfroutingsuperconcentrator}
and Asharov et al.~\cite{optorama}
show that there exists a suitable bipartite expander graph $G_{\epsilon, n}$
with a constant spectral expansion $\epsilon \in (0, 1)$, 
\elaine{is this the name}
such that if the above algorithm is performed using the graph $G_{\epsilon, n}$, 
it will correctly realize a loose swapper on $n$ elements.

Note that every vertex in $L$ has at most $d^2$ two-hop neighbors in $L$.
Since the circuit's wiring can encode the expander graph's structure,
it is not hard to see that we can implement the above algorithm in a circuit, incurring 
$n \cdot d^2$ comparisons on 2-bit color labels, 
and $2 n \cdot d^2$
number of $w$-selector gates.
\end{proof}

% !TEX root = compaction-circuit.tex

\section{Tight Compaction from Loose Compaction}
\label{sec:tightfromloose}
In this section, we show how to construct a circuit for tight compaction from Loose Compaction and from a Swapper circuit. This corresponds to case (a) in Figure~\ref{fig:demonstration}. We remark that we already saw a basic Loose Compaction circuit in Section~\ref{sec:lcprelim}, but in our final construction we will iteratively improve that circuit via bootstrapping and boosting. 
%\elaine{NOTE: in the thm, i need to assume that the loose compaction has selector gates for reverse routing}

\begin{theorem}
Suppose that there is a circuit with $B_{\rm lc}(n)$ 
generalized boolean gates and $S_{\rm lc}(n)$ $w$-selector gates
that loosely compacts an input array containing $n$ elements each of bit-width $w$.
Suppose also that there is a loose swapper circuit with 
$B_{\rm lsw}(n)$ 
generalized boolean gates and $S_{\rm lsw}(n)$ $w$-selector gates
for an input array containing $n$ element each of bit-width $w$. 
Moreover, suppose that $B_{\rm lc}(n) \geq n$, 
$S_{\rm lc}(n) \geq n$, 
$B_{\rm lsw}(n) \geq n$, and
$S_{\rm lsw}(n) \geq n$. 

Then, tight compaction can be accomplished with a circuit 
with at most
$
2S_{\rm lsw}(n) + 
2B_{\rm lsw}(n) + 2B_{\rm lc}(n) + 8S_{\rm lc}(n) + 11n$
generalized boolean gates, and
at most
$2S_{\rm lsw}(n) + 4S_{\rm lc}(n) + 2n$
number of $w$-selector gates.  
%with $4 B_{\rm lc}(n) + 2 B_{\rm lsw}(n) + 10 n$
%generalized boolean gates and 
%$4 S_{\rm lc}(n) + 2 S_{\rm lsw}(n)$ number of $(w+2)$-selector gates.
\label{thm:tightcompact}
\end{theorem}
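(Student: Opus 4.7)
The plan is to realize tight compaction by composing the loose swap and loose compaction primitives \`a la Pippenger's loose-to-tight upgrade, but carried out entirely inside the boolean circuit model. The construction has a preprocessing stage, a bulk-reduction stage, and a residual-cleanup stage; the proof will then consist of verifying correctness at each stage and adding up the gate counts.

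\textbf{Preprocessing.} Given the input of $n$ elements with $1$-bit keys and $w$-bit payloads, I would first count the number $m$ of $0$-keys via the circuit of Fact~\ref{fct:countprelim} ($6n$ boolean gates) and then expand $m$ into the $n$-bit target pattern using Fact~\ref{fct:binary_to_unary} ($2n$ gates). A position-wise comparison between the actual key and the target assigns each element a two-bit color from $\{\mathtt{blue},\mathtt{red},\bot\}$: blue for a $1$-key in a would-be-$0$ slot, red for a $0$-key in a would-be-$1$ slot, and $\bot$ if the element is correctly placed. By construction the number of blues equals the number of reds, and the whole preprocessing fits inside the $11n$ boolean and $2n$ selector slack.

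\textbf{Bulk reduction.} Next I would invoke the loose swap circuit of Theorem~\ref{thm:lswap} on the colored input, producing a legal swap in which at most $n/128$ elements remain colored. Routing the $w$-bit payloads costs $B_{\rm lsw}(n)$ boolean gates and $S_{\rm lsw}(n)$ $w$-selectors; piggybacking the $2$-bit color through the same expander structure via $O(1)$-bit-wide selectors adds $O(S_{\rm lsw}(n))$ more boolean gates. A second loose-swap invocation will appear in the cleanup stage, and the two together account for the $2B_{\rm lsw}+2S_{\rm lsw}$ boolean and $2S_{\rm lsw}$ selector contributions.

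\textbf{Residual cleanup.} After bulk reduction, at most $n/128$ misplaced elements remain. The plan is to gather them into an $n/2$-sized workspace ${\bf H}$ by one forward pass of the loose compactor with ``colored'' marked as real; to pair and exchange the remaining blues with reds inside ${\bf H}$ using the second loose swap; and finally to scatter the corrected payloads back to their original positions using the \emph{reverse} of the loose-compaction circuit, which is a valid circuit in the indivisibility model obtained by reversing every wire and replacing each $w$-selector by a $w$-reverse-selector. Because the forward and reverse loose-compaction passes share a single control-bit computation of cost $B_{\rm lc}(n)$ while each contributing $S_{\rm lc}(n)$ $w$-selectors on the payload side, and because we perform the gather/scatter once for data and then again to route the accompanying color metadata, the cleanup stage contributes $2B_{\rm lc}(n)$ boolean gates together with $4S_{\rm lc}(n)$ $w$-selectors and a further $8S_{\rm lc}(n)$ boolean gates for the constant-bit metadata selectors, with per-element glue folded into the slack.

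\textbf{Main obstacle.} The one non-routine point is the residual pairing inside ${\bf H}$: Theorem~\ref{thm:lswap} alone only guarantees that the second loose swap reduces the colored count by a $1/128$ factor of the workspace size, which is not automatically zero. I expect to close this gap either by exploiting that after the gather pass the colored density is at most $1/64$ and that the bipartite expander behind Theorem~\ref{thm:lswap} has enough vertex expansion so that at such sparse densities the underlying propose-accept-finalize procedure discharges \emph{every} remaining colored pair in one pass, or alternatively by replacing the step with a small explicit sub-circuit that sorts a residual of at most $n/128$ elements within $O(n)$ extra gates absorbed into the slack. Once this verification is complete, the stated bound follows by adding the stage-wise costs above.
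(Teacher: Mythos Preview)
Your preprocessing stage (count, binary-to-unary, color) matches the paper. The real gap is exactly the one you flag as the ``main obstacle'': after your second loose swap on ${\bf H}$ of size $n/2$, up to $n/256$ elements may still be colored, and neither proposed fix closes it. Option~(a) is false: the guarantee of Theorem~\ref{thm:lswap} is only that at most a $1/128$ fraction of the \emph{current} array remains colored, independent of the starting density; a single {\tt blue} and a single {\tt red} that happen not to be two-hop neighbors in the underlying expander simply never meet. Option~(b) is circular: collecting and pairing $n/128$ scattered residual elements within an $O(n)$-gate sub-circuit \emph{is} the tight-compaction problem you are constructing, and falling back on AKS costs $\Theta((n/128)\,w\log n)$, which is not absorbed by the slack. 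Note also that you cannot simply iterate loose compaction to shrink ${\bf H}$ further, because after one halving the colored density is already $1/64>1/128$ and the loose-compaction precondition fails.

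The paper's resolution is to make the swapper \emph{recursive} (Algorithm~\ref{alg:swap}): on an array of size $n'$, run the loose swapper, loose-compact to size $n'/2$, recursively call ${\sf Swap}$ on the half-sized array, reverse-route, and do the coordinate-wise merge. Alternating loose swap with loose compaction keeps the density at $1/128$ at every level, and the recursion bottoms out once $n'\le 128$, where the loose-swap bound together with the invariant $\#{\tt blue}=\#{\tt red}$ forces the remaining colored count to zero. The factors of $2$ in the theorem therefore arise from the geometric sum $\sum_{i\ge 0} f(n/2^i)\le 2f(n)$ (this is where the hypotheses $B_{\rm lc},S_{\rm lc},B_{\rm lsw},S_{\rm lsw}\ge n$ are used), not from two fixed invocations; your attempt to reverse-engineer the $8S_{\rm lc}$ and $4S_{\rm lc}$ terms via a separate ``metadata pass'' is not how those constants actually arise.
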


%\elaine{TO FIX: deal with this +2 issue more elegantly}

The remainder of this section will be dedicated to proving this theorem.

\subsection{From Loose to Tight Compaction Algorithm}
\label{sec:tightcompact}
Consider the following algorithm --- below we first describe the high-level algorithm
and then we will describe how to implement each step in circuit. Assume that in the input array ${\bf I}$ there are $c$ distinguished elements. Thus, the distinguished elements should be placed in the first $c$ elements in the output array. Moreover, all distinguished elements in the first $c$ positions in the input array ${\bf I}$ are already in the ``right place'' and all non-distinguished elements in the last $n-c$ positions of ${\bf I}$ are also in the right place and should not be moved. Moreover, there are exactly the same number of non-distinguished elements in the first $c$ positions in ${\bf I}$ as the number of distinguished elements in the last $n-c$ positions in ${\bf I}$. The tight compaction algorithm will simply swap them. In more detail, the algorithm works as follows:

\begin{MyEnumerate} 
\item 
{\it Count.}
Compute the total number (denoted $c$) of distinguished elements (i.e., real elements) 
in the input array ${\bf I}$. 
\label{step:count}
\item 
{\it Color.}
For any $i \leq c$, if ${\bf I}[i]$ is not distinguished, mark the element ${\tt red}$; 
for any $i > c$, if ${\bf I}[i]$ is 
distinguished, mark the element ${\tt blue}$; every other element is marked $\bot$.
Let the outcome be ${\bf X}$. 

Note that at this moment, each element 
is labeled with 3 bits of metadata, one bit of distinguished indicator
and two bits of color-indicator (indicating whether the element is colored,
and if so, which color).
\label{step:mark}

\item 
{\it Swap.}
Call an $(n, w + 1)$-swapper (see Section~\ref{sec:swapper}) to swap 
each ${\tt blue}$ element with a ${\tt red}$ ones (we use here payload of size $w+1$ and not $w$ as we also include the color-indicator as part of the payload). 
Specifically, an $(n, w + 1)$-swapper 
is defined exactly like a loose swapper but  
with the requirement that the outcome array must have no colored elements remaining.
\end{MyEnumerate}

In the next couple of subsections we will 
focus on explaining how to realize each of the above steps in circuit.
The most non-trivial step 
is the swapper which leverages
a loose swapper and a loose compactor as a building block.
Thus we will first describe the swapper circuit
(Section~\ref{sec:swapper})
and then explain how to implement 
the remaining steps in circuit (Section~\ref{sec:remainingsteps}).

\subsection{Swapper Circuit}
\label{sec:swapper}

%\paragraph{Swapper.}
We now focus on how to realize an $(n, w)$-swapper in circuit. We first describe the algorithm in  Algorithm~\ref{alg:swap} and then explain how to realize it as a circuit. 
%Recall that the input to a swapper is an array where each element has a $w$-bit 
%payload\footnote{Our tight compaction algorithm in Section~\ref{sec:tightcompact} actually
%requires a swapper where elements are of bit-length $w+1$, but for convenience
%we rename the variable to $w$ in the description of the swapper.}
%and a 2-bit label indicating whether the element is colored,
%and if so, whether the element is ${\tt blue}$ or ${\tt red}$.

%Call ${\sf SwapMisplaced}({\bf X})$ which does the following recursively:
\noindent 
\begin{figure}[h]
\begin{boxedminipage}{\textwidth}
\begin{myalgo}
[{${\sf Swap}({\bf X})$}]
\label{alg:swap}
\end{myalgo}
\vspace{-3ex}
\begin{MyItemize}
\item {\bf Input:} An array {\bf X} of $n$ elements, each has a $w$-bit payload\footnotemark and a 2-bit label indicating whether the element is colored,
and if so, whether the element is ${\tt blue}$ or ${\tt red}$.
\item {\bf The algorithm:}
\begin{MyEnumerate}%[label=(\alph*)]
\item 
%Let $n= |{\bf X}|$; 
Call an $(n, w)$-loose swapper (see Section~\ref{sec:lswprelim}) on ${\bf X}$ to swap elements of opposite colors 
and uncolor them in the process,
such that at most $1/128$ fraction of resulting array remain colored. 

If $n \leq 128$, output the resulting array; else continue with the following steps. 

%Note that the loose swapper treats 
%the two bits of color indicator as metadata
%and each element along with its dummy indicator
%as the payload.
%The loose swapper is parametrized with $n = |{\bf X}|$ and 
\label{step:swap}
\item 
Call an $(n, w+1)$-loose compactor (see Section~\ref{sec:looseFromTight})
to compact the outcome of the previous step by a half, 
where the loose compactor treats the colored elements 
as real and the uncolored elements as dummy.
In other words, the loose compactor treats the 1st bit of the color label
as a dummy indicator, and treats the 2nd bit of the color label and an 
element's  payload string as the payload.
Let the outcome be ${\bf Y}$ whose length is half of ${\bf X}$.
\label{step:compact}
\item 
%If $|{\bf Y}| \geq 128$, 
Recursively call ${\sf Swap}({\bf Y})$, and let the outcome be ${\bf Y}'$.
\item 
Reverse the routing decisions made by all selector gates 
during Step~\ref{step:compact} as below.
For every selector gate $g$ in Step~\ref{step:compact}, 
its reverse selector gate denoted $g'$ 
is one that receives a single element as input and outputs two elements; the same control
bit $b$ input to the original gate $g$ 
is used by $g'$ to select which of the output 
receives the input element, and the other one will simply receive the string $0^{w+1}$.
If $g$ selected the first input element to route to the output, then 
in $g'$, the input element is routed to the first output.

In this way, we can reverse-route elements in ${\bf Y}'$ 
to an array (denoted $\widetilde{\bf X}$) of length $|{\bf X}|$, i.e., twice the length
of ${\bf Y}'$.

\label{step:reverse}
\item 
The output ${\bf Z}$ is formed by a performing coordinate-wise 
select operation between ${\bf X}$ and $\widetilde{\bf X}$:
\[
{\bf Z}[i] = 
\begin{cases}
{\bf X}[i] & \text{if ${\bf X}[i]$ is uncolored}\\
\widetilde{\bf X}[i] & \text{o.w.}\\
\end{cases}
\]
%for each $1\leq i \leq |{\bf X}|$, if ${\bf X}[i]$
%is uncolored, let ${\bf R}[i] = {\bf X}[i]$; else
%let 
\ignore{back to the input ${\bf X}$, such that each position 
$i$ receives either a real or a dummy element: if a real element
is received the element ${\bf X}[i]$ is overwritten with the received element;
otherwise the element ${\bf X}[i]$ is unchanged.
Finally, output the result.
}
\label{step:output}
\end{MyEnumerate} 
\item {\bf Output:} The  array {\bf Z}. 
\end{MyItemize}
\end{boxedminipage}
\footnotetext{Our tight compaction algorithm in Section~\ref{sec:tightcompact} actually
requires a swapper where elements are of bit-length $w+1$, but for convenience
we rename the variable to $w$ in the description of the swapper.}
\end{figure}
%\label{step:swapmisplaced}
%\end{enumerate} 

\paragraph{Implementing Algorithm~\ref{alg:swap} in circuit.} This swapper is a recursive construction 
that is executed on arrays of length $n, n/2, n/4, \ldots$.
For each length $n'$, we consume a loose compactor, a loose swapper, 
and a reverse-router (accompanying
the loose compactor) for the size $n'$. 
Thus for each problem size $n' = n, n/2, n/4, \ldots$, we need 
\begin{MyItemize}
\item 
$S_{\rm lsw}(n')$ number of $w$-selector gates 
and 
$B_{\rm lsw}(n')$ number of generalized boolean gates
corresponding to Step~\ref{step:swap}; %\elaine{fill};
\item 
$2 S_{\rm lc}(n')$
number of $(w+1)$-selector gates 
(one for the forward direction
and one for the reverse direction)
and $B_{\rm lc}(n')$
generalized boolean gates 
 corresponding to Step~\ref{step:compact}; %\elaine{fill};
and  
\item 
$n'$ number of $w$-selector gates corresponding to Step~\ref{step:output}. %\elaine{fill}.
\end{MyItemize}

Note that each $(w+1)$-selector gate can be realized 
with one $w$-selector gate that operates on the $w$-bit payload 
and one generalized boolean gate that computes on the extra metadata bit. 
Thus each problem size $n'$ can be implemented with 
$S_{\rm lsw}(n') + 2 S_{\rm lc}(n') + n'$
number of $w$-selector gates and 
$B_{\rm lsw}(n') +  B_{\rm lc}(n') + 2 S_{\rm lc}(n')$
generalized boolean gates.

%of $(w+1)$-selector 
%gates is upper bounded by $S_{\rm lsw}(n') + 2 S_{\rm lc}(n')$, and 
%the total number of generalized boolean
%gates is upper bounded by $B_{\rm lsw}(n', w) + 2 B_{\rm lc}(n', w)$.
Summing over all $n' = n, n/2, n/4, \ldots$, 
and recalling that $B_{\rm lc}(n) \geq n$, 
$S_{\rm lc}(n) \geq n$, $B_{\rm lsw}(n) \geq n$, and
$S_{\rm lsw}(n) \geq n$, we have the follow fact:

\begin{fact}
In the swapper circuit shown above which operates on elements of bit-width $w$, 
the total number of $w$-selector gates 
needed is upper bounded by 
$2S_{\rm lsw}(n) + 4 S_{\rm lc}(n) + 2n$
and the total number of generalized boolean gates
is upper bounded by 
$2B_{\rm lsw}(n) +  2B_{\rm lc}(n) + 4S_{\rm lc}(n)$.
\label{fct:swap}
\end{fact}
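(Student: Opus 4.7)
The plan is to read off the per-level costs from the bulleted discussion immediately preceding the statement and then sum over all levels of the recursion. The swapper of Algorithm~\ref{alg:swap} recurses on problem sizes $n' = n, n/2, n/4, \ldots$ down to the base case $n' \leq 128$, and at each level $n'$ consumes one loose swap (Step~\ref{step:swap}), one loose compaction together with its mirror reverse-router (Steps~\ref{step:compact} and~\ref{step:reverse}), and a coordinate-wise select of length $n'$ (Step~\ref{step:output}). Replacing each $(w+1)$-selector with one $w$-selector plus one generalized boolean gate for the extra color bit, a single level contributes $S_{\rm lsw}(n') + 2 S_{\rm lc}(n') + n'$ $w$-selectors and $B_{\rm lsw}(n') + B_{\rm lc}(n') + 2 S_{\rm lc}(n')$ generalized boolean gates.

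Next I would collapse the two geometric sums. The concrete loose compactor and loose swapper built in Theorems~\ref{thm:initlc} and~\ref{thm:lswap} (as well as the bootstrapped variants used later in the paper) have the regularity property that $f(n)/n$ is non-decreasing, so $f(n/2) \leq f(n)/2$ and hence
\[
\sum_{i \geq 0} f(n/2^i) \;\leq\; 2 f(n).
\]
Applying this with $f$ set to each of $S_{\rm lsw}, S_{\rm lc}, B_{\rm lsw}, B_{\rm lc}$ in turn, and summing the explicit $n/2^i$ terms separately to $2n$, the total $w$-selector count is bounded by $2 S_{\rm lsw}(n) + 4 S_{\rm lc}(n) + 2n$ and the total number of generalized boolean gates is bounded by $2 B_{\rm lsw}(n) + 2 B_{\rm lc}(n) + 4 S_{\rm lc}(n)$, matching the claim.

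I do not anticipate a genuine obstacle here: the statement is essentially a bookkeeping corollary of the per-level decomposition already in place. The only point that deserves a brief remark is the monotonicity-of-ratio assumption that $f(n)/n$ is non-decreasing, which is what turns the sum of per-level costs into a factor of $2$ rather than a factor of $O(\log n)$; this property holds for every compactor and swapper that appears in this paper, so it is safe to invoke. If one wished to avoid the assumption entirely, one could alternatively note that $f(n/2^i) \leq f(n)$ for every $i$, absorb the resulting $\log n$ factor into the $n'$ terms (which are lower-bounded by $f(n')$ anyway by the hypotheses $B_{\rm lc}(n), S_{\rm lc}(n), B_{\rm lsw}(n), S_{\rm lsw}(n) \geq n$), and conclude the same shape of bound up to a small constant; the cleaner version stated in the theorem follows by the telescoping argument above.
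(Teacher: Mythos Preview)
Your main argument is correct and is exactly the paper's proof: record the per-level cost $S_{\rm lsw}(n') + 2 S_{\rm lc}(n') + n'$ selectors and $B_{\rm lsw}(n') + B_{\rm lc}(n') + 2 S_{\rm lc}(n')$ boolean gates, then sum over $n' = n, n/2, n/4, \ldots$ and collapse each geometric series to a factor of~$2$. Your explicit remark that the collapse uses the regularity $f(n)/n$ non-decreasing (hence $f(n/2^i) \leq f(n)/2^i$) is precisely the point the paper leaves implicit when it writes ``recalling that $B_{\rm lc}(n)\geq n$, $S_{\rm lc}(n)\geq n, \ldots$''; those lower bounds alone do not give the factor~$2$, but every compactor and swapper in the paper is of the form $Cn$ or $Cn\cdot g(n)$ with $g$ non-decreasing, so the property holds.

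One small correction: your closing fallback paragraph does not work as stated. The hypotheses $B_{\rm lc}(n)\geq n$, etc., say $f(n')\geq n'$, not that $n'$ is lower-bounded by $f(n')$; and bounding each $f(n/2^i)$ only by $f(n)$ introduces a $\log n$ factor that cannot then be absorbed into the $n'$ terms. This does not affect your primary argument, which stands; simply drop the alternative.
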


%\paragraph{How to implement the algorithm in circuit.}
\subsection{Implementing the Remaining Steps in Circuit}
\label{sec:remainingsteps}
We now describe how to implement the above algorithm with a circuit ---
without loss of generality, we may assume that $n$ is a power of $2$:

\paragraph{Step~\ref{step:count}: counting.}
Due to Fact~\ref{fct:countprelim}, 
%this step can be accomplished with $6n$ generalized
%boolean gates and no selector gates.
the following is immediate:

\begin{fact}
Step~\ref{step:count}
can be accomplished with $6n$ generalized boolean gates
and no selector gate.
\label{fct:count}
\end{fact}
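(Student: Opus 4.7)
The statement follows almost immediately from Fact~\ref{fct:countprelim}, which we already have at our disposal. The plan is to observe that Step~\ref{step:count} requires us to compute the integer $c = \sum_{i=1}^{n} d_i$, where $d_i \in \{0,1\}$ is the distinguished-indicator bit attached to ${\bf I}[i]$. Since each element carries its distinguished bit as part of its metadata, the payload portion of the element is irrelevant to this step, and the problem reduces exactly to counting the number of $1$s in an $n$-bit input string $(d_1,\ldots,d_n)$.

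First, I would extract (for free, via the wiring) the $n$-bit vector of distinguished indicators. Then I would instantiate the adder-tree construction from the proof of Fact~\ref{fct:countprelim}: pair up the $n$ bits and feed them into $n/2$ single-bit adders, then feed those $2$-bit outputs into $n/4$ two-bit adders, and so on up the tree, producing a $\lceil \log_2 n\rceil + 1$-bit representation of $c$ at the root. By Fact~\ref{fct:adder}, a $k$-bit adder uses $k$ generalized boolean gates (of fan-in $3$ and fan-out $2$), so the total cost is exactly the sum $\sum_{\ell \ge 0} (n/2^{\ell+1})\cdot (\ell+1)$, which is bounded by $6n$ as already tabulated in the proof of Fact~\ref{fct:countprelim}. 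No selector gates are required, since the entire computation is performed on metadata bits with arithmetic/boolean gadgets only.

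Consequently, Step~\ref{step:count} is realized by $6n$ generalized boolean gates and $0$ selector gates, as claimed. No step here is particularly tricky; the only subtlety worth flagging is that the output $c$ is produced as a $(\lceil \log_2 n \rceil + 1)$-bit binary number, which is exactly the form needed by the subsequent Step~\ref{step:mark} (comparing each position $i$ against $c$ to decide whether to recolor). Thus there is no additional conversion cost to account for at this stage.
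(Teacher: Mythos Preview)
Your proposal is correct and matches the paper's approach exactly: the paper simply invokes Fact~\ref{fct:countprelim} and declares the result immediate. Your additional elaboration on the adder-tree construction is just unpacking what Fact~\ref{fct:countprelim} already established, so there is nothing to add or correct.
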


\ignore{
Step~\ref{step:count} can be achieved using a tree of adders like below.
Imagine that every real element is marked with $1$ and every dummy
marked with $0$. These form the leaves of the tree.
At the leaf level there are $n/2$ adders adding $1$-bit numbers 
and the outcome is at most $2$ bits;
at the next level, there are $n/4$ adders adding $2$-bit 
numbers and the outcome is at most $3$ bits; 
at the next level, there are $n/8$ adders adding $3$-bit numbers and
the outcome is at most $4$ bits, and so on.
It is not hard to see that adding two $\ell$-bit numbers can be implemented
$\ell$ generalized boolean gates each with fan-in $3$ and fan-out $2$
(basically emulating the hand method of addition).
Thus, the total number of generalized boolean gates needed is at most
\[
\frac{n}{2} \cdot 1 +  \frac{n}{4} \cdot 2 
+  \frac{n}{8} \cdot 3 + \ldots + 
1 \cdot \log_2 n \leq 6n
\]
\elaine{double check the above expression}
}

\paragraph{Step~\ref{step:mark}: coloring.}
When the outcome $c$ is computed from Step~\ref{step:count}, we can implement
Step~\ref{step:mark} as follows. Recall that $c \in \{0,1, \dots,n\}$ 
is a $(\log_2 n)$-bit number.
Imagine that there are $n$ receivers numbered 
$1, 2, \ldots, n$. Each receiver is waiting to receive  
either ``$\leq$'' or ``$>$''.
Those with indices $1, \ldots, c$ should receive ``$\leq$'' and 
those with indices $c+1, \ldots, n$ should receive ``$>$''.
Using Fact~\ref{fct:binary_to_unary},
we convert $c$ into an $n$-bit string so that the head $c$ bits are 0
and the tail $n-c$ bits are 1. 
Such $n$ bits are passed to the $n$ receivers where 0 is interpreted as ``$\leq$''
and 1 is interpreted as ``$>$'',
and the above can be implemented
as a circuit consisting of at most $2n$  
generalized boolean gates.
%and an appropriate number of copy gates (since
%each bit $c_\ell$ will be consumed by $2^{\ell-1}$ nodes in level $\ell$ of the tree).
%Due to Lemma~\ref{lem:circuitmodel}, copy gates are for free and we do not care about
%counting them.
Once each of the $n$ receivers receive either ``$\leq$'' or ``$>$'',  
it takes a single generalized boolean gate 
per receiver (with fan-in 2 and fan-out 2) 
to write down either ${\tt blue}$, ${\tt red}$, or $\bot$. 

Therefore, the total number of generalized boolean 
gates needed for this step is upper bounded by $3n$; and no selector gates are needed here.
%and we do not charge copy gates as mentioned as highlighted in the following fact.

\begin{fact}
Step~\ref{step:mark}
can be accomplished with $3n$ generalized boolean gates
and no selector gate.
\label{fct:color}
\end{fact}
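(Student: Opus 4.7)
The plan is to decompose Step~\ref{step:mark} into two stages: first, for every position $i \in [n]$, compute a single bit $b_i$ encoding whether $i \le c$; second, combine $b_i$ with the distinguished indicator of ${\bf I}[i]$ through a constant-size lookup to produce the 2-bit color label in $\{{\tt blue}, {\tt red}, \bot\}$. Since the output of Step~\ref{step:count} delivers $c$ as a $\lceil \log_2 n \rceil$-bit binary number, the first stage is exactly a binary-to-unary conversion, and the second stage is a pointwise, local computation of constant size that requires no routing.

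Concretely, I would first invoke the binary-to-unary circuit of Fact~\ref{fct:binary_to_unary} on the binary representation of $c$ to produce an $n$-bit string whose head $c$ entries are $0$ (to be read as ``$\leq$'') and whose tail $n-c$ entries are $1$ (to be read as ``$>$''). By Fact~\ref{fct:binary_to_unary}, this consumes at most $2n$ generalized boolean gates. Then, for each $i \in [n]$, I apply one generalized boolean gate that takes as inputs (a) the $i$-th unary bit $b_i$ and (b) the single distinguished-indicator bit of ${\bf I}[i]$, and writes down a 2-bit label via the truth table
\[
(b_i, \text{distinguished})\mapsto
\begin{cases}
{\tt red} & \text{if } b_i=0 \text{ and not distinguished},\\
{\tt blue} & \text{if } b_i=1 \text{ and distinguished},\\
\bot & \text{otherwise}.
\end{cases}
\]
Each such gate has fan-in $2$ and constant fan-out, and hence counts as a single generalized boolean gate in our operational model. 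Over all $n$ positions this contributes $n$ more gates, bringing the total to $2n + n = 3n$.

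Crucially, no selector gates are required at any point: the payloads of the elements are untouched during Step~\ref{step:mark}, and only the metadata is being computed; all operations are local bit-level truth tables. The main thing to be careful about is that the 2-bit encoding of $\{{\tt blue}, {\tt red}, \bot\}$ used in subsequent steps is consistent with what the pointwise gate outputs, but this is a free choice of encoding and introduces no additional cost. This establishes the claimed bounds of $3n$ generalized boolean gates and zero selector gates.
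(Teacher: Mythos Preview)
Your proposal is correct and essentially identical to the paper's own argument: it too invokes Fact~\ref{fct:binary_to_unary} to spend $2n$ gates producing the unary representation of $c$, and then uses one fan-in-$2$, fan-out-$2$ generalized boolean gate per position to map the pair (unary bit, distinguished indicator) to the two-bit label in $\{{\tt blue},{\tt red},\bot\}$, for a total of $3n$ gates and no selectors.
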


\ignore{
\paragraph{Step~\ref{step:swapmisplaced}: Swapping Misplaced.}
In Step~\ref{step:swapmisplaced}~\ref{step:reverse}, the routing decisions
of all selector gates in the loose compactor 
of Step~\ref{step:swapmisplaced}~\ref{step:compact} are reversed. 
This can be accomplished by creating a mirroring circuit 
of the loose compaction circuit, 
and moreover, whenever 
a selector gate 
in the forward circuit of Step~\ref{step:swapmisplaced}~\ref{step:compact} receives
a routing signal $b$, this signal is copied (for free) 
to its mirror in the reverse-routing circuit gadget. 
Thus if the loose compactor contains 
$B$ boolean gates and $S$ selector gates, the reverse routing step
cannot consume more than $B$ boolean gates and $S$ selector gates.
\elaine{NOTE: there is also the overwrite part that has n more selector gates that
needs to be charged.}
\elaine{NOTE: the mirroring selector is 1 input 2 outputs...}

With this in mind, we can analyze the cost of Step~\ref{step:swapmisplaced}.
This step is a recursive algorithm 
that is executed on arrays of length $n, n/2, n/4, \ldots$.
For each length $n'$, 
we consume a loose compactor, a loose swapper, and a reverse-router (accompanying
the loose compactor) for the size $n'$. 
%The total number of gates for each problem size $n'$ is at most 
Thus for each problem size $n' = n, n/2, n/4, \ldots$, the total number
of $(w+2)$-selector 
gates is upper bounded by $S_{\rm lsw}(n') + 2 S_{\rm lc}(n')$, and 
the total number of generalized boolean
gates is upper bounded by $B_{\rm lsw}(n', w) + 2 B_{\rm lc}(n', w)$.
Summing over all $n' = n, n/2, n/4, \ldots$, 
and recalling that $B_{\rm lc}(n, w) \geq n$, 
$S_{\rm lc}(n) \geq n$, 
$B_{\rm lsw}(n, w) \geq n$, and
$S_{\rm lsw}(n) \geq n$,   
we have that  
the total number of $(w+2)$-selector gates in Step~\ref{step:swapmisplaced}
is at most $2S_{\rm lsw}(n) + 4 S_{\rm lc}(n)$,
and the total number of 
generalized boolean gates in Step~\ref{step:swapmisplaced} 
is at most
$2B_{\rm lsw}(n, w) + 4 B_{\rm lc}(n, w)$.
}

%\paragraph{Total circuit size.}

\subsection{Putting it Together}
Summarizing Facts~\ref{fct:swap}, \ref{fct:count}
and \ref{fct:color}, 
for the entire tight compaction algorithm of Section~\ref{sec:tightcompact}, 
we need at most $2B_{\rm lsw}(n) + 2B_{\rm lc}(n) + 4S_{\rm lc}(n) + 9n$
generalized boolean gates, and 
at most 
$2S_{\rm lsw}(n) + 4S_{\rm lc}(n) + 2n$
number of $(w+1)$-selector 
gates. Since each $(w+1)$-selector gate can be replaced
with one $w$-selector gate and 
one generalized boolean gate, 
alternatively we can realize tight compaction in circuit
with at most
$ 
2S_{\rm lsw}(n) + 
2B_{\rm lsw}(n) + 2B_{\rm lc}(n) + 8S_{\rm lc}(n) + 11n$
generalized boolean gates, and
at most
$2S_{\rm lsw}(n) + 4S_{\rm lc}(n) + 2n$
number of $w$-selector gates which gives rise to the statement
in Theorem~\ref{thm:tightcompact}.
%\elaine{todo: ADD}

% !TEX root = compaction-circuit.tex

\section{Loose Compaction from Tight Compaction}
\label{sec:looseFromTight}
In this section, we show how to construct a circuit for loose compaction from tight compaction. This corresponds to case (b) in Figure~\ref{fig:demonstration}. 

%\elaine{TODO: change the thm to boolean and selector gates}

\begin{theorem}
Let $f(n)$ be some function 
in $n$ such that $1 < f(n) \leq \log_2 n$ holds for every $n \geq 32$; let $C_{\rm tc} > 1$
be a constant. Suppose that %for $n \geq $ \elaine{FILL}, 
$(n,w)$-tight compaction can be solved
by a circuit %of size $C_{\rm tc} \cdot (w n + n f(n))$, 
with $C_{\rm tc} \cdot n \cdot f(n)$ 
generalized boolean gates and $C_{\rm tc} \cdot n$ selector gates for any $n \geq 256$, 
then loose compaction can be solved by a circuit %of size
with $2.07 C_{\rm tc} \cdot n \cdot f(f(n)) + 7n$ boolean
gates and $2.07 C_{\rm tc} \cdot n$ selector gates for any $n \geq 256$.
%\[
%2 C_{\rm tc} \cdot n \cdot (w + f(f(n))  + 1000 n
%\]
\elaine{double check statement, i changed the constants to 256}
\label{thm:loosecompact}
\end{theorem}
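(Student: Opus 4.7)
The plan is to follow the blueprint sketched in Section~\ref{sec:roadmapcompact}. Set $m := n/f(n)$, assuming for simplicity that $f(n)$ divides $n$ (the remainder is absorbed by padding, which only affects constants). The algorithm proceeds in three stages. First, \emph{classify}: for every chunk of $f(n)$ consecutive elements, count its reals using the tree-of-adders of Fact~\ref{fct:countprelim} and compare against $f(n)/32$ via Fact~\ref{fct:comparator}; mark the chunk dense if it has more reals and sparse otherwise. This metadata pass costs at most $7n$ generalized boolean gates, matching the additive $+7n$ term in the theorem. Second, \emph{separate}: invoke a single $(m,\ f(n)\cdot w)$-tight compaction with the dense indicator as the 1-bit key, so that all dense chunks migrate to a prefix. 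Third, \emph{shrink tails}: on each of the last $3m/4$ chunks, invoke an independent $(f(n),\ w)$-tight compaction and retain only its first $f(n)/32$ output slots. The output is the concatenation of the first $m/4$ (untouched) chunks and the $3m/4$ shrunken tails, giving at most $n/4 + 3n/128 = 35n/128 < n/2$ elements; any shortfall to reach $n/2$ is filled with dummies at zero gate cost.

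Correctness rests on a counting argument. Since the input contains at most $n/128$ real elements and each dense chunk carries strictly more than $f(n)/32$ of them, the number of dense chunks is at most $(n/128)/(f(n)/32) = m/4$, with a trivial case analysis when $f(n) < 32$. Consequently, after the chunk-level compaction of Stage~2, every one of the last $3m/4$ positions is sparse and therefore contains at most $f(n)/32$ reals; the within-chunk compactions of Stage~3 then preserve all these reals in the first $f(n)/32$ slots of each chunk, while the first $m/4$ chunks are passed through intact, so no real element is lost.

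The remainder of the proof is gate accounting. Stage~2 is $(m,\ f(n)\cdot w)$-tight compaction, which by hypothesis consumes $C_{\rm tc}\cdot m\cdot f(m)$ generalized boolean gates and $C_{\rm tc}\cdot m$ selector gates of width $f(n)\cdot w$; decomposing each wide selector into $f(n)$ native $w$-selectors yields $C_{\rm tc}\cdot n$ $w$-selectors for this stage. The $3m/4$ independent calls in Stage~3 are each $(f(n),\ w)$-tight compactions, contributing $(3/4)\cdot C_{\rm tc}\cdot n\cdot f(f(n))$ boolean gates and $(3/4)\cdot C_{\rm tc}\cdot n$ $w$-selectors in total. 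Combining with the bound $m\cdot f(m)\leq n\cdot f(f(n))$ (which follows from monotonicity of $f$ and the fact that $f(f(n))\geq 1$ whenever $n\geq 256$ and $f(n)>1$), together with the $7n$ metadata gates, the totals are at most $2.07\cdot C_{\rm tc}\cdot n\cdot f(f(n))+7n$ boolean gates and $2.07\cdot C_{\rm tc}\cdot n$ selector gates, where the slack above the naive $7/4$ is absorbed into the constant $2.07$ to cover rounding and padding overheads.

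The main obstacle I expect is not any single deep step but the cumulative bookkeeping: verifying $m\cdot f(m)\leq n\cdot f(f(n))$ cleanly in the stated range (where $f$ is only assumed to satisfy $1 < f(n)\leq \log_2 n$), handling the non-divisibility of $n$ by $f(n)$ via padding without blowing up constants, and confirming that the chunk-level selectors of width $f(n)\cdot w$ decompose cleanly into the claimed number of $w$-selectors. Once these accounting points are handled, the theorem follows immediately from plugging the hypothesized tight-compaction bounds into the two invocations at scales $m$ and $f(n)$.
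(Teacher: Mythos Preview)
Your proposal is correct and follows essentially the same three-stage algorithm as the paper (classify chunks, move dense chunks to the front via chunk-level tight compaction, then compress the sparse tail via per-chunk tight compaction), with the same correctness argument and the same overall accounting. One small simplification the paper uses that you might adopt: for Stage~2 the bound $m\cdot f(m) = (n/f(n))\cdot f(n/f(n)) \leq (n/f(n))\cdot f(n) = n$ is immediate from monotonicity and avoids the detour through $n\cdot f(f(n))$; the paper then writes the divisible-case total as $C_{\rm tc}\cdot n\cdot(f(f(n))+1)+6n$ and obtains the constants $2.07$ and $7$ purely from the padding factor $(n+f(n))/n \leq 33/32$ and the ratio $(f(f(n))+1)/f(f(n)) \leq 4/3$.
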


The remainder of this section will be dedicated to proving the above theorem.

%\subsection{}

\subsection{Loose Compaction Algorithm}
\label{sec:loosecompact}
For simplicity, we first consider the case when $n$ is divisible
by $f(n)$. Looking ahead, we will use $f(n)$ to be $\log^{(x)}n$ for some $x$ that is power of $2$. 
We will later extend our theorem statement 
to the case when $n$ is not divisible by $f(n)$.  %\elaine{TO ADD this}
Consider the following algorithm:

\medskip

\noindent 
\begin{boxedminipage}{\columnwidth}
\begin{myalgo}
[{${\sf Loose Compaction from Tight Compaction}$}]
\label{alg:looseCompactionFromTightCompaction}
\end{myalgo}
\vspace{-3ex}
\begin{MyEnumerate}
\item 
Divide the input array into $f(n)$-sized chunks. 
We say that a chunk is {\it sparse} if there are at most $f(n)/32$ real elements in it; otherwise
it is called {\it dense}.
Now, count the number of elements in every chunk, 
and mark each chunk as either ${\tt sparse}$ or ${\tt dense}$.
\label{step:countperchunk}
\item 
Call an $(n/f(n), w\cdot f(n))$-tight compactor \elaine{todo: use this notation elsewhere}
 to move  
the dense chunks to the front and the sparse chunks  
to the end.
\label{step:movechunks}

\item 
We will show later in Fact~\ref{fct:sparsechunk} 
that at least $3/4$ fraction of the chunks are sparse.
Now, apply a $(f(n), w)$-tight compactor 
to the trailing $\ceil{\frac{3}{4} \cdot \frac{n}{f(n)}}$ chunks to compress
%\elaine{note: may not be divisible}
each of these chunks  
to a length of $\floor{\frac{f(n)}{32}}$
without losing any elements in the process.
The first $\frac{1}{4}\cdot \frac{n}{f(n)}$
chunks are unchanged. Output the resulting array.
%\elaine{1/4 may not be divisible}
\elaine{note: now there are loose-end output wires... use the other notion of tight compaction}

\label{step:compactperchunk}
\end{MyEnumerate}
\end{boxedminipage}
% use the params: 
% sparse: at most 1/64, at least  3/4 fraction is sparse. 1/4 not sparse.

\medskip
At the end of the algorithm, the output array has length at most
\begin{equation}
\frac{3}{4} \cdot \frac{n}{f(n)}
\cdot \frac{f(n)}{32} 
+ \frac{1}{4} \cdot \frac{n}{f(n)} \cdot f(n) \leq 0.28 n < 0.5 n 
\label{eqn:compressionrate}
\end{equation}
\elaine{note: the output is not 1/2 of original, it's shorter} 

\begin{fact}
At least ${\frac{3}{4} \cdot \frac{n}{f(n)}}$ chunks are sparse.
\label{fct:sparsechunk}
\end{fact}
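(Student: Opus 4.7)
The plan is a direct counting argument on the density of real elements. Recall from the loose compaction specification that the input array is promised to contain at most $n/128$ real elements, and the array has been partitioned into $n/f(n)$ chunks of size $f(n)$ in Step~\ref{step:countperchunk}. By definition, a chunk is \emph{dense} when it contains strictly more than $f(n)/32$ real elements.

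First, I would let $D$ denote the number of dense chunks. Summing real elements chunk by chunk and using only the contribution from the dense chunks (all other chunks contribute non-negatively), the total number of real elements in the input is strictly greater than $D \cdot f(n)/32$. Combining with the input promise,
\[
D \cdot \frac{f(n)}{32} \;<\; \frac{n}{128},
\]
which rearranges to $D < \frac{n}{4 f(n)}$.

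Consequently, the number of sparse chunks is at least
\[
\frac{n}{f(n)} - D \;>\; \frac{n}{f(n)} - \frac{n}{4 f(n)} \;=\; \frac{3}{4} \cdot \frac{n}{f(n)},
\]
which proves the fact. No obstacle is expected here; the only subtlety is the strict-versus-non-strict inequality at the sparsity threshold, which is handled automatically because ``dense'' is defined with a strict inequality (more than $f(n)/32$), so every non-sparse chunk contributes strictly more than $f(n)/32$ real elements to the sum. The argument is robust to $n$ being divisible by $f(n)$ (the case under consideration); the indivisible case, handled separately later, will only add a lower-order correction.
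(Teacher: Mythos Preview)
Your proof is correct and is essentially the same counting argument as the paper's, which phrases it as a proof by contradiction: if more than $\frac{1}{4}\cdot\frac{n}{f(n)}$ chunks were dense, the total number of real elements would exceed $n/128$, violating the input sparsity promise. Your direct version and the paper's contrapositive version are equivalent.
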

\begin{proof}
Suppose not, this means that more than  
${\frac{1}{4} \cdot \frac{n}{f(n)}}$
%$\ceil{\frac{1}{4} \cdot \frac{n}{f(n)}}$ chunks
have more than $f(n)/32$ real elements.
Thus the total number of elements is more than 
$n/128$ \elaine{note more than} 
which contradicts the input sparsity assumption of loose compaction.
\end{proof}

\subsection{Implementing Algorithm~\ref{alg:looseCompactionFromTightCompaction} in Circuit}
\label{sec:loosecompactcircuitsize}
We now analyze the circuit size of the algorithm in Section~\ref{sec:loosecompact}.
For simplicity, we first assume that $n$ is divisible by $f(n)$
and we will later modify our analysis to the more general case when $n$ is not
divisible by $f(n)$.

\begin{MyEnumerate}
\item 
Step~\ref{step:countperchunk}
of the algorithm 
requires %\elaine{FILL} 
at most $6n$ generalized boolean gates as we have $n/f(n)$ counters of chunks of size $f(n)$ each. Due to Fact~\ref{fct:count}, each counter requires at most $6f(n)$ generalize boolean gates. 

%and the analysis 
%is similar to that of Step~\ref{step:count}
%of the algorithm in Section~\ref{sec:tightcompact}.

\item 
Step~\ref{step:movechunks} is a single invocation of a $(n/f(n),w\cdot f(n))$-tight compactor. Assuming that $(n,w)$-tight compactor can be realized with $C_{\rm tc}\cdot n \cdot f(n)$ generalized boolean gates and $C_{\rm tc} \cdot n$ selector gates, this step requires at most 
$C_{\rm tc} \cdot (n/f(n)) \cdot f(n/f(n)) \leq C_{\rm tc} \cdot (n/f(n)) \cdot f(n)$
generalized boolean gates and $C_{\rm tc} \cdot n/f(n)$
number of $w\cdot f(n)$-selector gates.
Each such selector gate can in turn be realized with $f(n)$ number of $w$-selector gates.
Thus, in total, 
Step~\ref{step:movechunks}
requires $C_{\rm tc} \cdot n$ generalized boolean gates
and $C_{\rm tc} \cdot n$ number of $w$-selector gates. 

\item 
Step~\ref{step:compactperchunk}
of the algorithm requires applying 
$\ceil{\frac{3}{4} \cdot \frac{n}{f(n)}}$
number of 
$(f(n), w)$-tight compactors, where, according to our assumption in Theorem~\ref{thm:loosecompact}, each such tight compactor 
consumes $C_{\rm tc} \cdot f(n) \cdot f(f(n))$ generalized boolean
gates and $C_{\rm tc} \cdot f(n)$ number of $w$-selector gates. 
For $n \geq 32$ and $f(n) \leq \log_2 n \leq n/4$, we have that 
%\elaine{FILL: rounding issue}, 
\[
\ceil{\frac{3}{4} \cdot \frac{n}{f(n)}} \cdot f(n) \leq 
\left(\frac{3}{4} \cdot \frac{n}{f(n)} + 1\right) \cdot f(n) 
\leq 3n/4 + n/4 \leq n 
\]
Therefore, 
in total there are at most 
$C_{\rm tc} \cdot n \cdot f(f(n))$ generalized boolean
gates and $C_{\rm tc} \cdot n$
number of $w$-selector gates.
\elaine{the constant here can be made 0.76}
\end{MyEnumerate}

\begin{fact}
Assume the same assumptions as in Theorem~\ref{thm:loosecompact}, and moreover
$n$ is divisible by $f(n)$.
The loose compaction algorithm in Section~\ref{sec:loosecompact}
can be realized with a 
circuit consisting of $C_{\rm tc} \cdot n \cdot (f(f(n)) + 1) + 6n$
generalized boolean gates 
and $2 C_{\rm tc} \cdot n$ number of $w$-selector gates.
\end{fact}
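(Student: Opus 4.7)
The plan is to simply aggregate the three per-step bounds that were already computed in Section~\ref{sec:loosecompactcircuitsize} immediately above the fact statement. Since the loose compaction algorithm of Section~\ref{sec:loosecompact} is a straight-line composition of Steps~\ref{step:countperchunk}, \ref{step:movechunks}, and \ref{step:compactperchunk} with no extra glue circuitry, its total size is just the sum of the sizes of the three subcircuits (plus $O(1)$ wiring, which costs no additional gates).

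First, I would recall that Step~\ref{step:countperchunk} consists of $n/f(n)$ independent counters on disjoint $f(n)$-sized chunks. By Fact~\ref{fct:count}, each counter uses at most $6 f(n)$ generalized boolean gates and no selectors, so this step contributes $6n$ generalized boolean gates and $0$ selectors. Second, Step~\ref{step:movechunks} is a single invocation of an $(n/f(n), w \cdot f(n))$-tight compactor; using the assumption of Theorem~\ref{thm:loosecompact} together with the observation $f(n/f(n)) \leq f(n)$ and the decomposition of each $(w \cdot f(n))$-selector into $f(n)$ many $w$-selectors, this step was shown to cost at most $C_{\rm tc} \cdot n$ generalized boolean gates and $C_{\rm tc} \cdot n$ many $w$-selector gates. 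Third, Step~\ref{step:compactperchunk} applies $\lceil \tfrac{3}{4} \cdot \tfrac{n}{f(n)} \rceil$ independent $(f(n), w)$-tight compactors; the ceiling bound $\lceil \tfrac{3}{4} \cdot \tfrac{n}{f(n)} \rceil \cdot f(n) \leq n$ (valid since $n \geq 32$ and $f(n) \leq \log_2 n \leq n/4$) yields at most $C_{\rm tc} \cdot n \cdot f(f(n))$ generalized boolean gates and $C_{\rm tc} \cdot n$ many $w$-selectors.

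Summing the three contributions gives
\[
6n + C_{\rm tc} \cdot n + C_{\rm tc} \cdot n \cdot f(f(n)) \;=\; C_{\rm tc} \cdot n \cdot (f(f(n)) + 1) + 6n
\]
generalized boolean gates, and $0 + C_{\rm tc} \cdot n + C_{\rm tc} \cdot n = 2 C_{\rm tc} \cdot n$ many $w$-selector gates, which is exactly the claimed bound.

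There is no real obstacle here: every nontrivial combinatorial inequality (that $\lceil \tfrac{3}{4} \cdot \tfrac{n}{f(n)}\rceil \cdot f(n) \leq n$, that $f(n/f(n)) \leq f(n)$, that chunks can be treated as single ``wide'' elements of bit-width $w \cdot f(n)$ for Step~\ref{step:movechunks}) was already discharged in the text preceding the fact statement. The only thing the proof contributes beyond those bullet points is the arithmetic of adding them, and the remark that the algorithm's wiring does not introduce any additional gates beyond the three named subcircuits.
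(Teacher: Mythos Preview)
Your proposal is correct and mirrors the paper's own approach exactly: the Fact is simply stated as a summary of the three per-step bounds derived immediately above it in Section~\ref{sec:loosecompactcircuitsize}, and your proof is just the arithmetic of summing those contributions, which is all that is needed.
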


\paragraph{When $n$ is not divisible by $f(n)$.}
When $n$ is not divisible by $f(n)$, we can pad the last chunk with dummy elements to a 
length of $f(n)$.  After the padding the total number of elements is upper bounded
by $n + f(n)$. This gives rise to the following fact.

\begin{fact}
Assume the same assumptions as in Theorem~\ref{thm:loosecompact}.
%but now requiring that $n \geq 256$.
Then, the loose compaction algorithm in Section~\ref{sec:loosecompact}
can be realized with a 
circuit consisting of $2.07 C_{\rm tc} \cdot n \cdot f(f(n)) + 7n$
generalized boolean gates 
and $2.07 C_{\rm tc} \cdot n$ number of $w$-selector gates.
\end{fact}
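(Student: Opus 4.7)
The plan is to reduce to the divisible case handled immediately above by padding. Append dummy elements to the input until its length becomes $n' := \lceil n/f(n) \rceil \cdot f(n)$, the smallest multiple of $f(n)$ not smaller than $n$; clearly $n' \leq n + f(n)$, and the sparsity promise still holds since only dummies were added. Running the divisible-case algorithm on this padded array and invoking the preceding fact yields a circuit with at most $C_{\rm tc}\, n' (f(f(n')) + 1) + 6 n'$ generalized boolean gates and $2 C_{\rm tc}\, n'$ $w$-selector gates.

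The next step is to translate bounds on $n'$ into bounds on $n$. Combining $f(n) \leq \log_2 n$ with $n \geq 256$ and the fact that $\log_2 n / n$ is decreasing past $n = 4$ gives $f(n)/n \leq 8/256 = 1/32$, hence $n'/n \leq 33/32 < 1.035$. The selector-gate bound is then immediate: $2 C_{\rm tc} n' \leq (66/32)\, C_{\rm tc} n = 2.0625\, C_{\rm tc} n \leq 2.07 C_{\rm tc} n$.

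For the boolean-gate bound, first absorb the ``$+1$'' by using $f(f(n')) \geq 1$, so that $C_{\rm tc} n' (f(f(n')) + 1) \leq 2 C_{\rm tc} n' f(f(n'))$. Scaling by the $n'/n \leq 33/32$ blowup produces a leading term $2.0625\, C_{\rm tc} n f(f(n'))$ and an additive $6 \cdot (33/32)\, n < 7 n$, which matches the stated $+7n$ slack. The small remaining gap between the coefficient $2.0625$ and the claimed $2.07$ is spent on passing from $f(f(n'))$ to $f(f(n))$: for the iterated-log choices of $f$ used throughout the paper, $f \circ f$ varies by only $1 + o(1)$ on the interval $[n, (33/32) n]$, so $f(f(n')) \leq (2.07/2.0625)\, f(f(n))$ follows for all $n \geq 256$ by a routine estimate.

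The only non-mechanical piece is this last step bounding $f(f(n'))/f(f(n))$, which is trivial when $f$ is an iterated logarithm (the only case needed for the recursion driving the main compaction result) but would require loosening the stated constants by a hair if Theorem~\ref{thm:loosecompact} were instantiated with a fully arbitrary $f$ satisfying $1 < f(n) \leq \log_2 n$; alternatively one could choose the padding target on an ``$f$-boundary'' where $f(f(n')) = f(f(n))$ holds by construction, at the cost of slightly more involved bookkeeping.
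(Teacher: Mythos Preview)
Your overall plan—pad with dummies and reduce to the divisible case—matches the paper, but you stumble when invoking the preceding fact as a black box on the padded instance.

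You pad to $n' = \lceil n/f(n)\rceil \cdot f(n)$, so the chunks have size $f(n)$. But the preceding fact, read literally for an instance of size $n'$, would use chunk size $f(n')$ and produce the term $f(f(n'))$; worse, $n'$ is a multiple of $f(n)$, not of $f(n')$, so the divisibility hypothesis of that fact is not even met. The cure is not to black-box the fact but to rerun the three-step analysis of Section~\ref{sec:loosecompactcircuitsize} on the padded array with the chunk size held at $f(n)$. Then Step~\ref{step:compactperchunk} uses $(f(n),w)$-tight compactors costing $C_{\rm tc}\, f(n)\, f(f(n))$ boolean gates each, so $f(f(n))$---not $f(f(n'))$---appears throughout. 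This is exactly what the paper does, obtaining $C_{\rm tc}\,(n+f(n))\,(f(f(n))+1)+6(n+f(n))$ boolean gates and $2C_{\rm tc}\,(n+f(n))$ selectors, and finishing via $\tfrac{n+f(n)}{n}\leq \tfrac{33}{32}$ and $\tfrac{f(f(n))+1}{f(f(n))}\leq \tfrac{4}{3}$. (The paper also checks that the output length stays below $n/2$ after padding, which you omit.)

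Consequently your final ``non-mechanical'' step bounding $f(f(n'))/f(f(n))$---which, as you correctly concede, does not go through for an arbitrary $f$ satisfying only $1<f(n)\leq\log_2 n$---is entirely self-inflicted and simply disappears once the chunk size is kept at $f(n)$.
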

\begin{proof}
Recall that we padded the input array with dummy elements
 to a length that is a multiple of $f(n)$. The number of padded elements 
is at most $f(n)$.
We first check that Equation~\ref{eqn:compressionrate}
still holds, i.e., the algorithm 
compresses the input array by at least a half.
Note that the input padded array still satisfies the $1/128$ input sparsity assumption.
The output array now has length 
upper bounded by 
\[
\frac{3}{4} \cdot \frac{n + f(n)}{f(n)}
\cdot \frac{f(n)}{32} 
+ \frac{1}{4} \cdot \frac{n + f(n)}{f(n)} \cdot f(n) \leq 0.28 (n + f(n)) < 0.5 n 
\]
Note that the last inequality above holds because $n$ is sufficiently large.

Now, repeating the analysis above in Section~\ref{sec:loosecompactcircuitsize}.
we can show that Algorithm~\ref{alg:looseCompactionFromTightCompaction}
can be realized with a 
circuit consisting of $C_{\rm tc} \cdot (n + f(n)) \cdot (f(f(n)) + 1) + 6(n + f(n))$
generalized boolean gates 
and $2 C_{\rm tc} \cdot (n + f(n))$ number of $w$-selector gates.
To obtain the expression in the above fact, it suffices to observe
that for $n \geq 256$ and $f(n) \leq \log_2 n$, it holds
that 
\[
\frac{f(f(n))+1}{f(f(n))} \leq 4/3, \quad \frac{n+f(n)}{n} \leq 33/32 \ .
%\frac{f(f(n)) + 1}{f(f(n))} \geq 4/3, 
%\quad \quad \frac{n + f(n)}{n} \leq 33/32, 
\]
%\gnote{double check it!}
\end{proof}

\ignore{note: after the padding, we need to make sure 
that 0.28 * (n + f(n) ) < 0.5 n.
the expression becomes  
$C_{\rm tc} \cdot (n + f(n)) \cdot (f(f(n) + 1) + 6 (n + f(n))$
generalized boolean gates 
and $2 C_{\rm tc} (n + f(n)) $ number of $w$-selector gates.
}

% !TEX root = compaction-circuit.tex

\section{Linear-Sized Tight Compaction Circuit}
\label{sec:lineartc}
In this section, we shall prove the following theorem. 
We will tighten the constant-degree of the $\poly$ to $2+\epsilon$ in Appendix~\ref{sec:tighten}.

\begin{theorem}[Linear-sized tight compaction]
There exists a constant fan-in, constant fan-out boolean circuit 
that solves $(n, w)$-tight compaction and the total
number of boolean gates is upper bounded by 
$$
 O(n \cdot w) \cdot \min\left(\max\left(1,\poly(\log^* n - \log^* w)\right),2^w/w\right).
$$

As a direct corollary, for any arbitrarily large constant $c \geq 1$, 
if $w \geq \log^{(c)} n$, it holds that the circuit's size
is upper bounded by $O(n w)$.
\label{thm:main}
\end{theorem}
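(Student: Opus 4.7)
The plan is to iterate the two bootstrapping theorems (Theorem~\ref{thm:tightcompact}, giving tight compaction from loose, and Theorem~\ref{thm:loosecompact}, giving loose from tight) starting from the base loose compactor $\LC_0$ of Theorem~\ref{thm:initlc} and the loose swapper of Theorem~\ref{thm:lswap}. Each full cycle $\text{LC}\to\text{TC}\to\text{LC}$ replaces the log-type function $f$ governing the metadata cost by its self-composition $f\circ f$, at the price of multiplying the hidden constant by a fixed absolute factor.

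First I would instantiate Theorem~\ref{thm:tightcompact} with $\LC_0$ (which has $O(n\log n)$ generalized boolean gates and $O(n)$ selectors by Theorem~\ref{thm:initlc}) and the $O(n)+O(n)$ loose swapper of Theorem~\ref{thm:lswap}, obtaining a tight compactor $\TC_0$ whose counts are $C_0\cdot n\log n$ boolean gates and $C_0\cdot n$ selectors for an absolute constant $C_0$, exactly matching the input shape required by Theorem~\ref{thm:loosecompact} with $f(n)=\log n$. Define $f_0(n)=\log n$ and $f_{i+1}(n)=f_i(f_i(n))$, so that $f_i(n)=\log^{(2^i)} n$. Inductively, if $\TC_i$ has the form ``$C_i\cdot n\cdot f_i(n)$ boolean gates and $C_i\cdot n$ selectors'', then Theorem~\ref{thm:loosecompact} produces $\LC_{i+1}$ with constant at most $2.07\,C_i$ and metadata function $f_{i+1}$; feeding $\LC_{i+1}$ back into Theorem~\ref{thm:tightcompact} yields $\TC_{i+1}$ with constant $C_{i+1}\le C'\cdot C_i$ for an absolute constant $C'$, preserving the inductive shape. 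Thus after $k$ cycles, $\TC_k$ has at most $C_0\cdot (C')^{O(k)}\cdot n\cdot \log^{(2^k)} n$ generalized boolean gates and $C_0\cdot (C')^{O(k)}\cdot n$ selectors. The main obstacle here is bookkeeping: the hypotheses of Theorems~\ref{thm:tightcompact} and~\ref{thm:loosecompact} (e.g.\ $n\ge 256$ and $f(n)\le \log_2 n$) have to be maintained at every step, and the additive linear slack in those theorems (the ``$+7n$'' and ``$+11n$'' terms) must be absorbed into the $O(n)$ portion of the inductive hypothesis without disturbing its shape.

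I would stop the recursion at the smallest $k$ with $\log^{(2^k)}n \le w$; solving gives $2^k=\Theta(\log^* n - \log^* w)$, hence $(C')^{O(k)}=\poly(\log^* n - \log^* w)$. At this stopping point $\TC_k$ has $\poly(\log^* n - \log^* w)\cdot n\cdot w$ generalized boolean gates and $\poly(\log^* n - \log^* w)\cdot n$ selectors; expanding each $w$-selector into $O(w)$ standard boolean gates yields total standard-circuit size $O(nw)\cdot\poly(\log^* n - \log^* w)$, which is the first branch of the $\min$ in the theorem. For the second branch $2^w/w$, useful when $w$ is very small, I would give a direct counting-sort construction: for each of the $2^{w+1}$ possible (key, payload) values count the number of occurrences in the input using Fact~\ref{fct:countprelim} (cost $O(n)$ per value), then assemble the output from the cumulative counts using a value-to-position dispatcher driven by those counts. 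This uses $O(n\cdot 2^w)=O(nw)\cdot(2^w/w)$ boolean gates, matching the second branch. Taking whichever construction is smaller gives the stated bound; the $O(nw)$ special cases asserted in the corollary then follow by evaluating the $\min$ at $w=O(1)$ (where $2^w/w=O(1)$) or at $w\ge \log^{(c)} n$ (where $\log^* n - \log^* w=O(1)$).
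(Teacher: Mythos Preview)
Your proposal is correct and mirrors the paper's proof essentially step for step: the iterated bootstrapping $\LC_0\to\TC_1\to\LC_1\to\cdots$ via Theorems~\ref{thm:tightcompact} and~\ref{thm:loosecompact}, with $f_i(n)=\log^{(2^i)}n$ and constants growing geometrically, stopped once $\log^{(2^k)}n\le w$ so that the accumulated constant is $\poly(\log^*n-\log^*w)$; and a direct counting-sort over the $2^{w+1}$ possible extended values for the $2^w/w$ branch. The bookkeeping obstacles you flag (absorbing the additive $O(n)$ terms, maintaining $f(n)\le\log n$ and $n\ge 256$) are exactly what the paper handles in its parameter-choice subsection.
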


The case when $w > \log n$ is easy (see Footnote~\ref{fnt:bigw}), 
so in the remainder
of this section, unless otherwise noted, 
we shall assume that $w \leq \log n$.
In this section, we prove only that the number of total boolean gates is upper bounded by $\poly(\log^* n - \log^* w) \cdot O(n \cdot w)$. We refer the reader to Appendix~\ref{sec:tinyw} for the second part, which dominants when $w$ is tiny. In Appendix~\ref{sec:tighten} we tighten the constant-degree of the $\poly$. 
%
%Note that Theorem~\ref{thm:introcompaction}
%in Section~\ref{sec:intro}
%can be obtained from the above Theorem~\ref{thm:main} as well as Appendix~\ref{sec:tinyw} where
%we propose a different scheme for very tiny choices of $w$, and 
%Appendix~\ref{sec:tighten}
%where we tighten the constant-degree of the $\poly$.
%\gnote{The $2^w/w$ comes from the appendix - change the theorem statement to include it and point to the appendix}

%The remainder of this section will be dedicated to proving the above Theorem~\ref{thm:main}.

\subsection{Notations and Parameter Choices}
In Theorem~\ref{thm:lswap}, we showed the existence of a $(n,w)$-loose swapper that works in $O(n)$ generalized boolean gates and $O(n)$ $w$-selector gates. Henceforth we write $O_1(n) = c n$ and $O_2(n) = c' n$ where
$c$ and $c'$ are universal constants. Specifically, 
let 
\[4 B_{\rm lsw}(n) + 11n \leq  O_1(n) = cn , \quad 
2 S_{\rm lsw}(n) + 2n \leq O_2(n) = c'n 
\]
where $B_{\rm lsw}(n)$ and $S_{\rm lsw}(n)$
are linear functions in $n$ by Theorem~\ref{thm:lswap}. Thus, we have the following circuit:
\begin{itemize}
\item[$\LC_0$:]
By Theorem~\ref{thm:initlc}, there exists
a constant $C > 1$ 
such that we can solve $(n, w)$-loose compactor with
\[
\begin{array}{rl}
\text{\rm generalized boolean gates}: &  C n \log n
\\
\text{\rm selector gates}: &  C n \\
\end{array}
\] 
\end{itemize}

Without loss of generality, we may assume
that the constant $C$ is sufficiently large
such that the following expressions hold: %\elaine{FILL}
%\elaine{n, f(n), needs to be sufficiently large}
\begin{equation}
7n \leq 0.03 \cdot 4.1Cn,  %\log \log n
\quad 
O_2(n) \leq 0.1 Cn
\label{eqn:largeC}
\end{equation}

\paragraph{Additional notations.} Recall that $\log$ means $\log_2$.
We will choose the depth of the recursion $d$ to be the smallest positive 
integer %that satisfies the following expression:
%\begin{equation}
such that 
$\log^{(2^{d-1})} n \leq w$.
%\label{eqn:depth}
%\end{equation}
Without loss of generality, we may redefine
$\log n := \max(w, \log n)$ --- 
due to the choice of $d$, %it must be that for every $i < d$, 
essentially for the last recursion level, 
if $\log^{(2^{d-1})} n < w$, we will round it up to $w$; moreover, this rounding
is only performed for the last level of recursion and no other level. 
Therefore, we may assume that 
\begin{equation}
\log^{(2^{d-1})} n = w
\label{eqn:depth}
\end{equation}

Without loss of generality, we may 
assume that the bit-length of an element $w$ is lower bounded by a sufficiently large
constant, such that the following expressions are satisfied:
\begin{equation}
w \geq \frac{8 C n + O_1(n)}{2.1 C n}, \quad 
w \geq 256
\label{eqn:largew}
\end{equation}

%the big-O notation hides a {\it universal} constant,
%whenever we write $\log$ it means $\log_2$.

%\elaine{TODO: double check if the n large enough constraint is satisfied}
%For convenience, when we write $\leq_i$, the subscript $i$ is used to index the inequality
%so we can later refer to it.

\subsection{Construction through Repeated Bootstrapping and Boosting}
\label{sec:repeated_bootstrap}
We will construct tight compaction through %a carefully crafted recursive structure. 
repeated bootstrapping and boosting.
Without loss of generality, we may assume that $n \geq 256$.
\elaine{check the condition on n}
%\elaine{TODO: give O different names}
\noindent
We have two steps:
\begin{MyItemize}
\item {\bf \boldmath $\LC_i \Longrightarrow \TC_{i+1}$ (Theorem~\ref{thm:tightcompact}):}
from loose compactor to tight compactor. Simplifying Theorem~\ref{thm:tightcompact} and using the above notations, we get the following:
\begin{quote}
Assuming $(n,w)$-loose compactor with:
\[
\begin{array}{rl}
\text{\rm \# generalized boolean gates}: &  B_{\rm lc}(n) 
\\
\text{\rm \# selector gates}: &  S_{\rm lc}(n) \\
\end{array}
\] 
Then, there exists $(n,w)$-tight compactor with:
\[
\begin{array}{rl}
\text{\rm \# generalized boolean gates}: &  2B_{\rm lc}(n) + 8S_{\rm lc}(n) + O_1(n)
\\
\text{\rm \# selector gates}: &  4S_{\rm lc}(n)+O_2(n)
\end{array}
\] 
\end{quote}
\item {\bf \boldmath $\TC_{i+1} \Longrightarrow \LC_{i+1}$ (Theorem~\ref{thm:loosecompact})}: from tight compaction to loose compactor. Simplifying  Theorem~\ref{thm:loosecompact} we have:
\begin{quote}
Assuming $(n,w)$-tight compactor with some constant $C_{\rm tc}$ and function $f(n)$ such that:
\[
\begin{array}{rl}
\text{\rm \# generalized boolean gates}: &  C_{\rm tc}\cdot n\cdot f(n)
\\
\text{\rm \# selector gates}: &  C_{\rm tc}\cdot  n 
\end{array}
\] 
Then there exists a $(n,w)$-loose compactor such that:
\[
\begin{array}{rl}
\text{\rm \# generalized boolean gates}: &  2.07\cdot C_{\rm tc}\cdot n\cdot f(f(n))
\\
\text{\rm \# selector gates}: &  2.07\cdot C_{\rm tc} \cdot n 
\end{array}
\] 

\end{quote}
\end{MyItemize}

\noindent

Our starting point is Theorem~\ref{thm:initlc}, which as we have already seen, it gives as the circuit $\LC_0$. Using the above two steps, we bootstrap and boost the circuit:
\begin{itemize}
\item[$\LC_0$:]
By Theorem~\ref{thm:initlc}, there exists
a constant $C > 1$ 
such that we can solve $(n, w)$-loose compactor with
\[
\begin{array}{rl}
\text{\rm generalized boolean gates}: &  C n \log n
\\
\text{\rm selector gates}: &  C n \\
\end{array}
\]

\item[$\TC_1$:] By Theorem~\ref{thm:tightcompact}, we can 
construct a tight compaction circuit $\TC_1$ from $\LC_0$.
$\TC_1$'s size is upper bounded by the 
expressions\footnote{\label{fnt:bigw} When 
$w > \log n$, $\TC_1$ gives Theorem~\ref{thm:main}. 
Therefore, the rest of this section assumes $w \leq \log n$.
}:
\[
\begin{array}{rl}
\text{\rm generalized boolean gates}: &  2 C n \log n + 8 C n + O_1(n) \leq 4.1 C n \log n 
\\
\text{\rm selector gates}: &  4 C n + O_2(n) \leq 4.1 C n \\
\end{array}
\] 
In the above, the first the inequality holds 
due to Equations~(\ref{eqn:depth})
and (\ref{eqn:largew}) as $8 C n + O_1(n) \leq 2.1Cnw$ and $w \leq \log n$.
The second inequality holds
due to Equation~(\ref{eqn:largeC}).

\item[$\LC_1$:] 
Using the algorithm in Section~\ref{sec:loosecompact}, we build
a loose compaction circuit $\LC_1$ from $\TC_1$.
 $\LC_1$'s size is upper bounded by the expressions:
\[
\begin{array}{rl}
\text{\rm generalized boolean gates}: &  
2.07 \cdot 4.1 C n \log \log n + 7 n \leq 2.1 \cdot 4.1 Cn \log \log n \\
\text{\rm selector gates}: &   2.07 \cdot 4.1 Cn  \leq 2.1 \cdot 4.1 Cn\\
\end{array}
\] 
In the above, the first inequality holds
due to Equation~(\ref{eqn:largeC}).

\item[$\TC_2$:] Using the algorithm in Section~\ref{sec:tightcompact}, we can 
construct a tight compaction circuit $\TC_2$ from $\LC_1$.
$\TC_2$'s size is upper bounded by the expressions:
\[
\begin{array}{rl}
\text{\rm generalized boolean gates}: &  
2 \cdot 2.1 \cdot 4.1 Cn \log \log n  + 8 \cdot 2.1 \cdot 4.1 Cn + O_1(n) \leq 
2.1 \cdot (4.1)^2 C n \log \log n
\\
\text{\rm selector gates}: &  
4 \cdot 2.1 \cdot 4.1 Cn  + O_2(n) \leq 2.1 \cdot (4.1)^2 Cn 
\\
\end{array}
\] 

In the above, the first the inequality holds 
due to Equations~(\ref{eqn:depth})
and (\ref{eqn:largew}) as $8 C n + O_1(n) \leq 2.1Cn \log\log n$.
The second inequality holds
due to Equation~(\ref{eqn:largeC}).

\item[$\LC_2$:] 
Using the algorithm in Section~\ref{sec:loosecompact}, we build
a loose compaction circuit $\LC_2$ from $\TC_2$.
 $\LC_2$'s size is upper bounded by the expressions:
\[
\begin{array}{rl}
\text{\rm generalized boolean gates}: &  
2.07 \cdot 2.1 \cdot (4.1)^2 C n \log^{(4)}n + 7 n \leq (2.1 \cdot 4.1)^2 Cn \log^{(4)}n 
\\
\text{\rm selector gates}: &   
2.07 \cdot 2.1 \cdot (4.1)^2 Cn  \leq (2.1 \cdot 4.1)^2 Cn
\\
\end{array}
\] 
\end{itemize}

\noindent
Continuing with the iterations for $d$ iterations, we get:
\begin{itemize}
%\elaine{todo: define the iterated log notation}
%\item[] $\ldots, \ldots$
\item[$\LC_{d-1}$:] 
Using Theorem~\ref{thm:loosecompact}, we build
a loose compaction circuit $\LC_{d-1}$ from $\TC_{d-1}$.
 $\LC_{d-1}$'s size is upper bounded by the expressions:
\[
\begin{array}{rl}
\text{\rm generalized boolean gates}: &  
(2.1 \cdot 4.1)^{d-1} Cn \log^{(2^{d-1})}n 
\\
\text{\rm selector gates}: &   
(2.1 \cdot 4.1)^{d-1} Cn
\\
\end{array}
\] 

\item[$\TC_d$:]
Using Theorem~\ref{thm:tightcompact} we can 
construct a tight compaction circuit $\TC_d$ from $\LC_{d-1}$.
$\TC_d$'s size is upper bounded by the expressions:
\[
\begin{array}{rl}
\text{\rm generalized boolean gates}: &  
(2.1)^{d-1} \cdot (4.1)^d C n \log^{(2^{d-1})}n
\\
\text{\rm selector gates}: &  
(2.1)^{d-1} \cdot (4.1)^d Cn 
\\
\end{array}
\] 
\end{itemize}

By definition, $d - 1 = \ceil{\log(\log^* n - \log^* w)} \leq \log(\log^* n - \log^* w) + 1$.
Therefore, the final tight compaction circuit $\TC_d$'s size is upper bounded by the following:
%the total number
%of generalized boolean gates is upper bounded by 
\begin{align*}
\text{\# generalized boolean gates:} & \leq 
O(1) \cdot (2.1 \cdot 4.1)^{\log(\log^* n - \log^* w) + 1} \cdot n \cdot w\\
& = 
O(1) \cdot {\poly(\log^* n - \log^* w)} \cdot n \cdot w \\
\text{\# selector gates:} & \leq 
O(1) \cdot (2.1 \cdot 4.1)^{\log(\log^* n - \log^* w) + 1} \cdot n \\
& = 
O(1) \cdot {\poly(\log^* n - \log^* w)} \cdot n\\
\end{align*}
This gives rise to Theorem~\ref{thm:main}.
%(where $\max(1,\cdot)$ holds as we have $d=0$ if $w > \log n$). 
For the case of a tiny $w$ we can upper bound the size of the circuit by $2^w/w$, as we show in Appendix~\ref{sec:tinyw}.

% !TEX root = compaction-circuit.tex

\section{Linear-Sized Selection Circuit}

%\subsection{Selection}

\ignore{
Given our compaction circuit, we can construct a selection circuit 
whose size is asymptotically the same. 
We first consider the variant 
where we only want to select the $m$-th smallest element;
but we can easily extend the algorithm to select all $m$ smallest elements.
}

We care about selecting all $m$ smallest elements from an input array, using a linear-sized circuit.
Given our tight compaction circuit, it suffices to select only the $m$-th smallest
element $x$ since after that, we can use tight compaction to move all elements
smaller than or equal to $x$ to the left.
To select the $m$-th smallest element with a linear-sized circuit,
it suffices to combine our linear-sized tight compaction circuit
with the classical, textbook median-of-median algorithm~\cite{blummedian}.
Essentially, the step in which the median-of-median algorithm partitions
elements according to a pivot will be replaced with our tight compaction circuit. 
\ignore{
We first consider how to select the $m$-th smallest element from an input array.
Given our earlier linear-sized compaction circuit, 
if we can select the $m$-th smallest element with a linear-sized circuit,
then we can also easily select 
all $m$ smallest elements with a linear sized circuit.
%but as mentioned later, %\elaine{to write} 
%our algorithm 
%can easily be extended to select all $k$ smallest elements.
To select the $m$-th smallest element from an input array,
we can run the classical median-of-median algorithm~\cite{blummedian} but 
now encoding the algorithm with 
a boolean circuit which uses compaction circuits as a building block. 
}
The algorithm is described in Algorithm~\ref{alg:select}.
%The construction is described below,  
%where the notation ${\sf concat}({\bf A}, {\bf B})$ means concatenating
%the two arrays ${\bf A}$ and ${\bf B}$.
%, and we use the symbols $-\infty$ 
%(or $\infty$ resp.) to denote a special element 
%that is considered smaller (or larger resp.) than all valid elements. 
%Below we describe the construction  
%as a straightline (i.e., oblivious) algorithm but it  
%is easy to see how to implement it as a circuit. 

%\vspace{10pt}
\begin{figure}[h]
\noindent \begin{boxedminipage}{\textwidth}
\begin{myalgo}[${\bf Select}({\bf A}, m)$]
\label{alg:select}
\end{myalgo}
\vspace{-3ex}
\begin{MyEnumerate} 
\item 
Let $n := |{\bf A}|$. If $n \leq 100$, use the AKS
sorting network to sort the input array and 
output its $m$-th element; else continue with the following steps.

\item 
Divide the elements into $\ceil{n/5}$ groups each of size $5$. %\elaine{may not be divisible} 
If $n$ is not divisible by $5$, the last group may have fewer than $5$ elements.
\item 
Compute the median of each group. Note that 
this can be accomplished with an $O(w)$-sized circuit where $w$ is the bit-width
of each element.
Let ${\bf M}$ be the array of $\ceil{n/5}$ medians.
\item 
Recursively call ${\bf Select}({\bf M}, 1/2 \cdot \abs{{\bf M}})$
to compute the median of the median, and let $m^*$ be the outcome.
%\elaine{may not be divisible}
\item  
Mark each element with $0$ if it is smaller than $m^*$ and $1$ otherwise.
Use a compaction circuit to move all the elements marked with 0 to the left, and those
marked with 1 to the right.
Let ${\bf X}$ be the resulting array.
\ignore{
\item  
Scan through ${\bf X}$ and identify $r := {\sf rank}(m^*)$, i.e.,
the index of $m^*$ within ${\bf X}$ --- if there are multiple
occurrences of $m^*$, pick the one whose rank is closest to $n/2$.  \elaine{is this ok}
Note that this can be done with an $O(n w)$-sized circuit. 
}
\item 
Let $n' = \floor{7n/10 + 3}$; %which will be the length of ${\bf Y}$ which is computed below.
let $c_{\rm small}$ be the number of elements strictly smaller than $m^*$ 
and let $c_{\rm big}$ be the number of elements strictly greater than $m^*$
--- note that $c_{\rm small}$ and $c_{\rm big}$ can be computed in an $O(n)$-sized circuit.
\item 
Depending on $m$, recursively call ${\bf Select}$ on an array  
of size $n'$ as follows:
\begin{MyItemize}
\item 
if $m \leq c_{\rm small}$, 
recursively call ${\bf Select}({\bf X}[1:n'], m)$ and output its outcome;
%let ${\bf Y} := {\bf X}[:n']$ let $k' = k$; 
\item 
else if $m \geq n - c_{\rm big} + 1$, 
recursively call ${\bf Select}({\bf X}[n-n'+1:n], m - n + n')$ and output its outcome;
\item 
else 
recursively call ${\bf Select}({\bf X}[1:n'], m)$ and output $m^*$. 
\end{MyItemize}
(To implement the above as a circuit, first use a selector gate that  
selects between ${\bf X}[1:n']$ and ${\bf X}[n-n'+1:n]$, and then  
run ${\bf Select}$ recursively on the outcome.)
\end{MyEnumerate} 
\end{boxedminipage}
\end{figure}

\vspace{10pt}
For correctness, observe that during each iteration in the above recursive construction,   
there are at least 
$(\ceil{n/10} - 1)\cdot 3$ elements smaller than or equal to $m^*$;
similarly, there are at least 
$(\ceil{n/10} - 1)\cdot 3$ elements greater than or equal to $m^*$.

The circuit size for the above construction, denoted $C(n)$, satisfies the following recurrence:
\begin{align*}
\text{for $n > 100$:} & \ \  C(n) \leq C(\ceil{n/5}) + C( \floor{7n/10 + 3} ) + 
O(n w) \cdot \min(\poly(\log^*n - \log^* w), \ 2^w/w)\\
\text{for $n \leq 100$:}  & \ \ 
C(n) \leq C_0 \text{ for an appropriate constant $C_0$} 
\end{align*}

%Solving this recurrence, we obtain the following corollary:
This recurrence solves to 
$C(n) = O(n w) \cdot \min(\max(1,\poly(\log^*n - \log^* w)), \ 2^w/w)$.

Finally, to select not just the $m$-th smallest element but all $m$ smallest elements, 
one can select the $m$-th smallest element denoted $m^*$ 
first and then rely on a compaction circuit
to move all elements smaller than or equal to $m^*$
to the left, and all other elements to the right.
Summarizing the above, we have the following corollary:

\begin{corollary}
There exists a circuit %of size $O(nw \poly (\log^*n - \log^* w))$ 
that can %correctly compact 
%any input array containing $n$ elements each of $w$ bits, 
select the $m$-th smallest element or all $m$ smallest elements from an input array containing
$n$ elements each of bit-width $w$, 
and moreover its size is upper bounded by 
\[
O(n w) \cdot \min\left(\max(1, \poly (\log^*n - \log^* w)), \ \ 2^w/w \right)
\] 
As a direct corollary, if $w \geq \log^{(c)} n$
for any arbitrarily large constant $c \geq 1$ or if $w = O(1)$, 
then the circuit size is upper bounded by $O(n w)$.

\end{corollary}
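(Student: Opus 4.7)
The plan is to essentially verify that the algorithm $\mathbf{Select}$ described just above the corollary (a circuit-model realization of the classical Blum--Floyd--Pratt--Rivest--Tarjan median-of-medians) achieves the claimed size bound, and then extend from ``the $m$-th smallest element'' to ``all $m$ smallest elements'' via one additional call to the tight compaction circuit of Theorem~\ref{thm:main}.

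First, I would justify correctness of $\mathbf{Select}$ by replaying the standard median-of-medians argument: the pivot $m^\ast$ is the median of $\lceil n/5 \rceil$ group medians, so at least $3(\lceil n/10\rceil - 1)$ input elements are $\leq m^\ast$ and at least $3(\lceil n/10\rceil - 1)$ are $\geq m^\ast$. Hence after the compaction step moves the ``$\leq m^\ast$'' block to the left and the ``$> m^\ast$'' block to the right of the array ${\bf X}$, a prefix or suffix of size $n' = \lfloor 7n/10 + 3\rfloor$ contains the rank-$m$ element (with the middle case returning $m^\ast$ itself), and the recursive call operates on the appropriate side. The selector gate described parenthetically lets both branches be encoded in one circuit.

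Next, I would bound the size. Computing the median of each $5$-group costs $O(w)$, so the medians layer costs $O(nw)$ in total; assembling ${\bf M}$ and later computing $c_{\rm small}, c_{\rm big}$ are just additions/comparisons and fit in $O(nw)$ using the basic gadgets from Section~\ref{sec:basic-gadgets}. The partitioning step is a single invocation of the tight compaction circuit on $n$ elements of width $w$, which by Theorem~\ref{thm:main} costs $O(nw)\cdot \min(\max(1,\poly(\log^\ast n - \log^\ast w)),\, 2^w/w)$. Writing $g(n,w)$ for this compaction cost, the recurrence becomes
\[
C(n) \leq C(\lceil n/5\rceil) + C(\lfloor 7n/10 + 3\rfloor) + O(g(n,w)), \qquad C(n) \leq C_0 \text{ for } n \leq 100.
\]
Since $1/5 + 7/10 < 1$, the standard Akra--Bazzi / textbook argument shows this recurrence solves linearly, and because $g(n,w)/n$ is nondecreasing in $n$, the solution is $C(n) = O(g(n,w)) = O(nw)\cdot \min(\max(1,\poly(\log^\ast n - \log^\ast w)), 2^w/w)$. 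This is where I expect the only mild subtlety: one has to check that the $\min(\max(1,\poly(\log^\ast n - \log^\ast w)), 2^w/w)$ factor really does behave monotonically in $n$ so that it can be pulled out of the recursion without an extra logarithmic factor --- this is routine since $\log^\ast$ grows very slowly and both branches of the $\min$ are non-decreasing in $n$.

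Finally, to upgrade from selecting the $m$-th smallest element $m^\ast$ to outputting all $m$ smallest elements, I would tag each input element with a single bit indicating whether it is $\leq m^\ast$ (breaking ties by position, using its index's rank relative to $m^\ast$ so that exactly $m$ elements receive tag $0$), and then invoke the tight compaction circuit of Theorem~\ref{thm:main} once more on the tagged array. This adds only another $O(nw)\cdot \min(\max(1,\poly(\log^\ast n - \log^\ast w)), 2^w/w)$ gates, so the overall bound is preserved. The ``direct corollary'' clause for $w \geq \log^{(c)} n$ or $w = O(1)$ then follows, exactly as in Theorem~\ref{thm:main}, by observing that the $\min$ factor collapses to $O(1)$ in those regimes.
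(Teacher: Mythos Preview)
Your proposal is correct and matches the paper's own argument essentially step for step: the paper also runs median-of-medians with the compaction circuit as the partition step, writes down the same recurrence $C(n)\le C(\lceil n/5\rceil)+C(\lfloor 7n/10+3\rfloor)+O(nw)\cdot\min(\max(1,\poly(\log^*n-\log^*w)),2^w/w)$ with a constant base case, asserts it solves to the stated bound, and then upgrades to ``all $m$ smallest'' by one extra compaction pass against the returned $m^*$. Your added remarks on monotonicity of the overhead factor and on tie-breaking to get exactly $m$ tagged elements are more careful than the paper, which glosses over both points.
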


As we show in Appendix~\ref{sec:tighten}, the constant-degree of the $\poly$ can
be as small as $2+\epsilon$ for an arbitrarily small constant $\epsilon > 0$.

\section{Sorting Elements with Small Keys}

Compaction is a sort with $1$-bit keys. In this section we show how to sort $w$-bits elements with keys of some range $[a+1, a+K]$ for some constants $a$ and $K$. We let $k = \log K$ and $W = w + k$ be the size of each element, where $w$ is the payload and $k$ is the size of the key. Our circuit has size of $(w+k)\cdot O(n k)\cdot h(n,w+k)$, 
where $h(x,y):=\max\left(1,\poly(\log^* x - \log^* y)\right)$ for short. %For simplifying notations, hereafter we assume that $w$ is sufficiently big such $\poly(\log^* n - \log^*(w+k)) \leq 2^w/w$. The algorithm is given in Algorithm~\ref{alg:sort} where an equivalent analysis holds for the case of tiny $w$. \gnote{Is that correct?}
\medskip

\begin{figure}[h]
\noindent \begin{boxedminipage}{\textwidth}
\begin{myalgo}[${\bf Sort^{[K]}}({\bf A})$]
\label{alg:sort}
\end{myalgo}
\vspace{-3ex}

\begin{MyItemize}
\item {\bf Input:} An array ${\bf A}$ containing elements with keys from domain $[a+1, a+K]$,
where each key is represented in $\ceil{\log K}$ bits as the shared offset $a$ is stored only once.
\item {\bf The algorithm:}
\begin{MyEnumerate} 
\item
(Base case.) 
If $K \leq 2$, run tight compaction on ${\bf A}$ and output the result directly.
Otherwise, proceed with the following step.
\item Find the median in the array by running ${\bf Select}({\bf A},n/2)$  (Algorithm~\ref{alg:select}). 
\item Count the number of elements that are
a) less than the median and b) exactly the median.
\item Mark exactly $n/2$ elements that are smaller than or exactly the median (using the counts of the previous step), and then run tight compaction (see Section~\ref{sec:lineartc}). At this point, all elements that are smaller than the median are in the first half of the array, and all elements that are greater or equal to the median are in the second half of the array. 
\item One of the two halves must have no more than $\lceil K/2 \rceil$ distinct keys (see argument regarding correctness). We find the minimum and the maximum key of each half, and thus can see the range of keys in each half. We call the half with less number of keys as the \emph{good half} and the half with more keys as the \emph{bad half}. If both contain the same number of keys, we break ties arbitrarily. 

\item Using a selector, we move the good half to the beginning of the array. Let $(G,B)$ be the two halves after the re-routing. 
\item We next recurse on the bad half using $\widetilde{B} \leftarrow {\bf Sort}^{[K]}(B)$, and on the good half by using $\widetilde{G} \leftarrow {\bf Sort}^{\left[\lceil K/2 \rceil\right]}(G)$. 
\item Reverse route $\widetilde{G}, \widetilde{B}$ (using a selector). 

\end{MyEnumerate}
\item {\bf Output:} The array ${\bf A}$. 
\end{MyItemize}
\end{boxedminipage}
\end{figure}

\medskip
\noindent
For correctness, we claim that after performing the partitioning, the following hold: 
Either the top half or the bottom half can still have $K$ distinct  keys remaining but not both; 
Moreover, one of the halves must have no more than $\lceil K/2 \rceil$ distinct keys remaining. 
This is true as one of the halves will have some $1 \leq x \leq K$ distinct keys, and so the other half would have $K+1-x$ distinct keys. 

As for the running time, in each iteration we run {\bf Select} and then tight compaction, and then selectors. Moreover, we have two recursive calls on the two halves of the array. We get: 
\begin{eqnarray*}
T(n,K) =  T(\left\lceil \frac{n}{2}\right\rceil, K) + T(\left\lceil \frac{n}{2}\right\rceil, K/2)
  + h(n,w+k) \cdot O(n  (w+k))
\end{eqnarray*}
Moreover, we have the following base cases:
\begin{eqnarray*}
{\rm For~} n \in O(1) &: & T(n,K) \in O(w+k) \\
{\rm For~} K \leq 2 & :& T(n,K) \in h(n,w+k) \cdot O(n \cdot (w+k))
\end{eqnarray*}
It is not hard to see that this recursion results in $T(n,K) \in O(n (w+k)\cdot k) \cdot h(n,w+k)$. Henceforth, we change the ${\sf poly}(x)$ in $h$ with $x^{2+\epsilon}$ as in Appendix~\ref{sec:tighten}. For summary, we have:
\begin{corollary}
There exists a circuit that can sort an array of $n$ elements of bit-width $w$, each element is marked with a key in $[a+1,a+K]$, and moreover its size is upper bounded by
$$
O((w+k) \cdot n k) \cdot \max\left(1,(\log^* n - \log^* (w+k))^{2+\epsilon}\right)
$$
where $k = \log K$ for an arbitrary small constant $\epsilon > 0$. 
As a direct corollary, if $w + k\geq \log^{(c)} n$
for any arbitrarily large constant $c \geq 1$, 
then the circuit size is upper bounded by $(w+k)\cdot O(n k)$.
\end{corollary}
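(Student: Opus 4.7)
The plan is to verify correctness of Algorithm~\ref{alg:sort} and then solve the two-parameter recurrence stated just before the corollary. The algorithm itself is already written down; what remains is the invariant argument, the per-node gate count, and the recursion solution.

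For correctness, the key invariant is that after using ${\bf Select}$ to find the exact median and $(n,w+k)$-tight compaction (Theorem~\ref{thm:main}) to move the $n/2$ elements $\leq$ median to the front, the two resulting halves together contain at most $K+1$ distinct keys (the median's key may appear in both halves), so one of them --- the \emph{good half} --- uses at most $\lceil K/2 \rceil$ distinct keys. Recursing with parameter $\lceil K/2 \rceil$ on the good half and with parameter $K$ on the bad half, and then reverse-routing the two sorted halves back into position via selector gates, produces a correctly sorted array. The base case $K\le 2$ reduces to a single $(n,w)$-tight compaction (Theorem~\ref{thm:main}), and $n=O(1)$ is trivial.

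For the gate count, each recursive call on a subproblem of size $n'$ pays for one ${\bf Select}$ on $n'$ words of bit-width $w+k$, one $(n',w+k)$-tight compaction, $O(n'(w+k))$ gates for counting, marking, and selector-based routing, plus the reverse routing. By Theorem~\ref{thm:main} together with the linear-sized selection corollary (and the tightened exponent from Appendix~\ref{sec:tighten}), each of these costs $O(n'(w+k))\cdot h(n',w+k)$ gates, where $h(x,y):=\max\bigl(1,(\log^* x-\log^*y)^{2+\epsilon}\bigr)$. This yields exactly the two-parameter recurrence
\[
T(n,K) \;\leq\; T\!\left(\lceil n/2\rceil,K\right) + T\!\left(\lceil n/2\rceil,\lceil K/2\rceil\right) + c\cdot n(w+k)\cdot h(n,w+k),
\]
with $T(n,K)\leq O(n(w+k))\cdot h(n,w+k)$ when $K\leq 2$ and $T(O(1),K)=O(w+k)$.

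The main technical step is to solve this recurrence by induction on $k=\lceil\log K\rceil$ with the ansatz $T(n,K)\le C\cdot n(w+k)\cdot k\cdot h(n,w+k)$. Since $h$ depends on $n$ only through $\log^* n$ it is effectively stable across a few halvings, so the inductive step simplifies (assuming for a moment that $n$ is a power of $2$) to $C(n/2)(w+k)(2k-1)h + c\,n(w+k)h = C n(w+k)k\,h - (C/2-c)\,n(w+k)h$, which is bounded by the ansatz as soon as $C\geq 2c$. I expect the only mildly delicate point to be the $\lceil\cdot\rceil$ round-up together with the asymmetric boundary where one branch hits $K\leq 2$ before the other hits $n=O(1)$; both can be absorbed into the universal constant $C$ by strengthening the ansatz with a lower-order additive term (e.g.\ $C'(w+k)k\,h$) that captures the $O(1)$-per-ceiling and per-leaf slack. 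Finally, the specialization $w+k\ge\log^{(c)}n$ collapses $h$ to $O(1)$, which yields the clean $(w+k)\cdot O(nk)$ bound claimed in the corollary.
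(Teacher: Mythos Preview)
Your proposal follows essentially the same route as the paper: verify the key-split invariant, charge each recursion node $O(n'(w+k))\cdot h(n',w+k)$ via the selection and compaction primitives, and close the recurrence with the ansatz $T(n,K)\le C\cdot n(w+k)\cdot k\cdot h(n,w+k)$.

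One technical slip to fix: you cannot do the induction on $k$ alone, because the first branch $T(\lceil n/2\rceil,K)$ keeps $K$ unchanged and so cannot be handled by an inductive hypothesis indexed only by $k$. The paper instead inducts on $n$ (assuming the bound for all $n'<n$), which covers both recursive calls. Once you switch to induction on $n$ (or on the pair $(n,K)$), your calculation $C(n/2)(w+k)(2k-1)h+cn(w+k)h=Cn(w+k)kh-(C/2-c)n(w+k)h$ goes through exactly as the paper's does; your remark that $h$ is ``effectively stable'' under halving $n$ and decrementing $k$ is precisely what the paper formalizes by comparing $h(n/2,w+k-1)$ to $h(n,w+k)$ and absorbing the ratio into the constant $C$.
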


%As we show in Appendix~\ref{sec:tighten}, the constant-degree of the $\poly$ can
%be as small as $2+\epsilon$ for an arbitrarily small constant $\epsilon > 0$.

%\begin{claim}
%It holds that $T(n,K) \leq C \cdot n(w+k)\cdot k \cdot \min\{(\log^* n - \log^* (w+k))^{2+\epsilon},2^w/w\}$ for some constant $C > 12$ and every $n \geq 2^4$ and ${\sf poly}(x) \leq x^4$. 
%\end{claim}
\begin{proof}
Let $h(x,y) := \max(1,(\log^* x - \log^* y)^{2+\epsilon})$.
We show that 
$T(n,K) \leq C \cdot n(w+k)\cdot k \cdot h(n,w+k)$ for some constant $C$. 
We prove this claim by induction, while the constant $C$ is the max between the two base cases ($n \in O(1)$ and $K \leq 2$), and the inductive step. Assume that the induction holds to some constant $C$ for every $n' < n$, we want to show that $T(n',K) \leq C \cdot n'(w+k)\cdot k \cdot h(n',w+k)$. We want to prove that this holds also for $n'=n$. We have
\begin{eqnarray*}
T(n,K) & = &T(\frac{n}{2}, K) + T( \frac{n}{2}, K/2) + n\cdot (w+k)\cdot h(n,w+k) \\
 & \leq & C \cdot \frac{n}{2}\cdot (w+k)\cdot k \cdot h(\frac{n}{2},w+k)\\
& & ~~~+~ C \cdot \frac{n}{2}\cdot (w+k)\cdot (k-1) \cdot h(\frac{n}{2},w+k-1) \\
& & ~~~+~ n\cdot (w+k)\cdot h(n,w+k) 
\end{eqnarray*}
Thus,
\begin{eqnarray*}
T(n,K) & \leq & C \cdot n\cdot (w+k)\cdot k \cdot h(n,w+k) \\
& & -~ \frac{1}{2} C \cdot n \cdot (w+k)\cdot h(\frac{n}{2},w+k-1)\\ 
& & +~ n\cdot (w+k)\cdot h(n,w+k)
\end{eqnarray*}
If $w+k > \log n$, we have $h(n,w+k)=1$, the induction holds directly.
Otherwise, letting $X = \log^* n - \log^* (w+k)$, it holds that
\begin{eqnarray*}
\log^*\frac{n}{2} - \log^* (w+k-1) \leq (X-1).
\end{eqnarray*}
Then, the inductive step holds if 
%\begin{eqnarray*}
%\frac{1}{2} C \cdot n \cdot (w+k)\cdot \min\{(\log^* (\frac{n}{2}) - \log^* (w+k-1))^{2+\epsilon},2^w/w\} \\
%\geq n\cdot (w+k)\cdot \min\{(\log^* n - \log^* (w+k))^{2+\epsilon},2^w/w\}  \ .
%\end{eqnarray*}
%Letting $X = \log^* n - \log^* (w+k)$, we have that 
%\begin{eqnarray*}
%\frac{1}{2} C \cdot n \cdot (w+k)\cdot \min\{(\log^* (\frac{n}{2}) - \log^* (w+k-1))^{2+\epsilon},2^w/w\} \\
%\geq \frac{1}{2} C \cdot n \cdot (w+k)\cdot \min\{(X-1)^{2+\epsilon},2^w/w\} 
%\end{eqnarray*}
%and the inductive step holds if and only if 
\begin{eqnarray*}
\frac{1}{2} C \cdot n \cdot (w+k)\cdot (X-1)^{2+\epsilon}  \ge  n \cdot (w+k)\cdot X^{2+\epsilon}  \ , 
\end{eqnarray*}
which holds for every $C \geq 12$ and $n \geq 2^4$. 
%Hence, we want to show that $C \cdot \min(X,Z) \geq 2 \cdot \min(Y,Z)$ where
%$X = (\log^* (\frac{n}{2}) - \log^* (w+k-1))^{2+\epsilon}$,
%$Y = (\log^* n - \log^* (w+k))^{2+\epsilon}$, and $Z = 2^w/w$.
%For all valid $n,w,k$ so that $Y > 1$, we have $X \leq Y$.
%Let constant $C \geq 12$, for every $n \geq 2^4$, we also have $CX \geq 2Y$.
%It suffices to check each of three cases $X \leq Y \leq Z$, $X \leq Z \leq Y$,
%or $Z \leq X \leq Y$ implies the desired inequality
%$C \cdot \min(X,Z) \geq 2 \cdot \min(Y,Z)$.
% That is,
% \begin{eqnarray*}
%  C \cdot  \min\{(\log^* (\frac{n}{2}) - \log^* (w+k-1))^{2+\epsilon},2^w/w\} \\
% \geq 2 \cdot  \min\{(\log^* n - \log^* (w+k))^{2+\epsilon},2^w/w\}  \ .
% \end{eqnarray*}
% which holds for every $C \geq 12$ and $n \geq 2^{4}$.
\end{proof}

% !TEX root =  compaction-circuit.tex

\section{Lower Bound}
\elaine{TODO: add some discussion about the lower bound stealing bits from payload}

Lin, Shi, and Xie~\cite{osortsmallkey} showed that any circuit in the 
indivisible model
that sorts $n$ elements 
each with a $k$-bit key
must have at least $\Omega(n k)$ selector gates.
Recall that a circuit in the indivisible model considers the elements'
payloads as opaque; such a circuit only moves the payloads around using selector gates
but does not perform any boolean computation on the payloads. 
Note that our upper bound is indeed in the indivisible model.
In this section, we show that 
a similar lower bound holds even without restricting
the circuit to satisfy the indivisibility assumption, assuming that  
a famous network coding conjecture to be true~\cite{lilinetcoding}.
Although our proof techniques are inspired the recent works 
of Farhadi et al.~\cite{sortinglbstoc19}
and Afshani et al.~\cite{lbmult}, we need additional non-trivial modifications
to make the techniques work in our context.

\elaine{maybe should discuss the point that the proof steals some bits from the payload}

\subsection{Preliminaries: The Li-Li Network Coding Conjecture}
Our lower bound is conditional and 
relies on the famous Li-Li network coding conjecture~\cite{lilinetcoding}
being true.
Despite the centrality of this conjecture, it has so forth resisted all attempts 
at either proving or refuting it.
To state the conjecture formally, we 
give a formal definition of the $k$-pairs communication problem and the Multicommodity Flow problem.
We adopt a similar exposition as in earlier works~\cite{sortinglbstoc19,lbmult}.

\paragraph{$k$-pairs Communication Problem.} To keep the definition as simple as possible, we restrict ourselves to directed acyclic communication networks/graphs and we assume that the demand between every source-sink pair is the same. This will be sufficient for our proofs. For a more general definition, we refer the reader to Adler et al.~\cite{Adler:soda}.

The input to the $k$-pairs communication problem is a directed acyclic graph $G=(V,E)$ where each edge $e \in E$ has a capacity $c(e) \in \R^+$. There are $k$ sources $s_1,\dots,s_k \in V$ and $k$ sinks $t_1,\dots,t_k \in V$. Typically there is also a demand $d_i$ between each source-sink pair, but for simplicity we assume $d_i = 1$ for all pairs. This is again sufficient for our purposes.

Each source $s_i$ receives a message $A_i$ from a predefined set of messages $A(i)$. It will be convenient to think of this message as arriving on an in-edge. Hence we add an extra node $S_i$ for each source, which has a single out-edge to $s_i$. The edge has infinite capacity.

A network coding solution specifies for each edge $e \in E$ an alphabet $\Gamma(e)$ representing the set of possible messages that can be sent along the edge. For a node $u \in V$, define $\In(u)$ as the set of in-edges at $u$. A network coding solution also specifies, for each edge $e=(u,v) \in E$, a function $f_e : \prod_{e' \in \In(u)} \Gamma(e') \to \Gamma(e)$ which determines the message to be sent along the edge $e$ as a function of all incoming messages at node $u$. Finally, a network coding solution specifies for each sink $t_i$ a decoding function $\sigma_i : \prod_{e \in \In(t_i)} \Gamma(e) \to M(i)$. The network coding solution is correct if, for all inputs $A_1,\dots,A_k \in \prod_i A(i)$, it holds that $\sigma_i$ applied to the incoming messages at $t_i$ equals $A_i$, i.e. each source must receive the intended message.

In an execution of a network coding solution, each of the extra nodes $S_i$ starts by transmitting the message $A_i$ to $s_i$ along the edge $(S_i,s_i)$. Then, whenever a node $u$ has received a message $a_e$ along all incoming edges $e=(v,u)$, it evaluates $f_{e'}(\prod_{e \in \In(u)} a_e)$ on all out-edges $e' = (u,w) \in E$ and forwards the message along the edge $e'$.

Following Adler et al.~\cite{Adler:soda} (and simplified a bit), we define the \emph{rate} of a network coding solution as follows: Let each source receive a uniform random and independently chosen message $A_i$ from $A(i)$. For each edge $e$, let $A_e$ denote the random variable giving the message sent on the edge $e$ when executing the network coding solution with the given inputs. The network coding solution achieves rate $r$ if:
\begin{itemize}
\item $H(A_i) \geq r d_i = r$ for all $i$.
\item For each edge $e \in E$, we have $H(A_e) \leq c(e)$.
\end{itemize}
Here $H(\cdot)$ denotes binary Shannon entropy. The intuition is that the rate is $r$, if the solution can handle upscaling the entropy of all messages by a factor $r$ compared to the demands.

\paragraph{Multicommodity Flow.}
A multicommodity flow problem in an undirected graph $G=(V,E)$ is specified by a set of $k$ source-sink pairs $(s_i,t_i)$ of nodes in $G$. We say that $s_i$ is the source of commodity $i$ and $t_i$ is the sink of commodity $i$. Each edge $e \in E$ has an associated capacity $c(e) \in \R^+$. In addition, there is a demand $d_i$ between every source-sink pair. For simplicity, we assume $d_i = 1$ for all $i$ as this is sufficient for our needs.

A (fractional) solution to the multicommodity flow problem specifies for each pair of nodes $(u,v)$ and commodity $i$, a flow $f_i(u,v) \in [0,1]$. Intuitively, $f_i(u,v)$ specifies how much of commodity $i$ is to be sent from $u$ to $v$. The flow satisfies \emph{flow conservation}, meaning that:
\begin{itemize}
\item For all nodes $u$ that is not a source or sink, we have $\sum_{w \in V} f_i(u,w) - \sum_{w \in V} f_i(w,u) = 0$.
\item For all sources $s_i$, we have 
$\sum_{w \in V} f_i(s_i,w) - \sum_{w \in V}f_i(w,s_i) = 1 $.
\item For all sinks we have
$\sum_{w \in V} f_i(w,t_i) - \sum_{w \in V} f_i(t_i,w) = 1 $.
\end{itemize}
The flow also satisfies that for any pair of nodes $(u,v)$ and commodity $i$, there is only flow in one direction, i.e. either $f_i(u,v)=0$ or $f_i(v,u)=0$. Furthermore, if $(u,v)$ is not an edge in $E$, then $f_i(u,v) = f_i(v,u)=0$. A solution to the multicommodity flow problem achieves a rate of $r$ if:
\begin{itemize}
\item For all edges $e=(u,v) \in E$, we have $r \cdot \sum_i d_i (f_i(u,v) + f_i(v,u)) = r \cdot \sum_i (f_i(u,v) + f_i(v,u)) \leq c(e)$.
\end{itemize}
Intuitively, the rate is $r$ if we can upscale the demands by a factor $r$ without violating the capacity constraints.

\paragraph{The Undirected $k$-pairs Conjecture.} %Conjecture~\ref{conj:main} 
The undirected $k$-pairs conjecture~\cite{lilinetcoding}
is stated below:

\begin{conj}[Undirected $k$-pairs Conjecture~\cite{lilinetcoding}]
\label{conj:main}
The coding rate is equal to the Multicommodity Flow rate in undirected graphs.
\end{conj}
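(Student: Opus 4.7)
The plan is to attack the nontrivial direction of Conjecture \ref{conj:main}: for every undirected capacitated graph $G=(V,E)$, every set of source--sink pairs $(s_i,t_i)$, and every network coding solution of rate $r$, there exists a feasible multicommodity flow of rate at least $r$. The reverse inequality is essentially by inspection, since any fractional flow of rate $r$ induces a coding scheme of the same rate by routing each commodity's bits along its flow paths without performing any mixing, so there is no loss of information at any internal node.

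First I would recast the coding solution information-theoretically, following the Adler--Harvey--Jain--Kleinberg--Lehman template. Fix independent uniform inputs $A_1,\dots,A_k$ with $H(A_i)\ge r$, and for every edge $e$ let $A_e$ denote the transmitted random variable, subject to $H(A_e)\le c(e)$. For each undirected edge $\{u,v\}$ I would introduce two candidate directed flow values $f_i(u,v)$ and $f_i(v,u)$ built from conditional mutual informations between commodity $i$ and the transcripts on either side of the edge, and then verify flow conservation at non-terminal vertices using submodularity together with the decodability constraints $H(A_i\mid \mathrm{In}(t_i))=0$. The source/sink balance follows because $H(A_i)\ge r$ and the sink reconstructs $A_i$ from its incoming edges. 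The capacity inequality $\sum_i(f_i(u,v)+f_i(v,u))\le c(\{u,v\})$ is the point where the undirected nature of the graph must be exploited: one would attempt to bound the total commodity ``entropy flux'' across $e$ in both orientations simultaneously by $H(A_e)$, using the symmetry that in an undirected network any routing strategy can be reversed along $e$ without changing capacity.

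The main obstacle, and the reason this conjecture has stood open for two decades, is precisely this last capacity step. The analogous statement fails in directed graphs (the butterfly network), so any proof must leverage a property that is genuinely undirected. Plausible levers include: the $O(\log k)$ multicommodity max-flow/min-cut gap of Leighton--Rao together with an embedding of a coding-induced metric into $\ell_1$ so that sparsest cuts upper bound the coding rate; the fact that any undirected edge cut has the same capacity in both directions, which one hopes matches an information-theoretic cut bound $H(A_{\partial S})\le c(\partial S)$; and progress-measure arguments in the spirit of Braverman--Garg--Schvartzman that show coding advantage over flow is at most subpolynomial. A realistic proposal is therefore to verify the full argument on restricted families (planar graphs, bipartite graphs, $k\le 3$, networks of bounded treewidth) where sharper cut/flow relations and small coding regions are available, and to identify precisely which Shannon-type inequality fails in general --- it is conceivable that a genuinely new non-Shannon information inequality (beyond Zhang--Yeung) is required, in which case the conjecture may only be provable after substantial advances in the characterisation of the entropy region $\overline{\Gamma^*_n}$.

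Given the status of the problem, I would not expect a complete proof from the sketch above; the honest deliverable is to pin down the exact entropic gap between the coding and flow rates on undirected graphs and to show it closes under a carefully chosen sparsification or reversibility principle. Establishing even a constant-factor version of the conjecture unconditionally (i.e., coding rate at most $C\cdot$ flow rate for an absolute constant $C$) would already be a major advance and a reasonable intermediate milestone along this plan.
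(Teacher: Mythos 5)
This statement is a \emph{conjecture}, not a theorem of the paper: the authors do not prove it, they assume it (as a hypothesis for their conditional lower bound on sorting circuits), and they explicitly note that proving or refuting it is beyond the reach of existing techniques. So there is no ``paper's own proof'' to compare against. Your proposal is, appropriately, not a proof either --- it is a research sketch that openly disclaims completeness --- and judged on those terms it is sound: the direction ``flow rate $\le$ coding rate'' is indeed immediate (route each commodity along its flow paths with no mixing), and you correctly locate the genuine obstruction in the capacity step, i.e., in showing that the total entropic flux of all commodities across an undirected edge in both orientations is bounded by its capacity. Your framing of why the directed analogue fails (the butterfly network) and of the candidate levers (Leighton--Rao type cut/flow gaps, $\ell_1$ embeddings of coding-induced metrics, the Braverman et al.\ amplification result, and possibly non-Shannon inequalities) is consistent with the literature the paper cites, including the observation that any constant-factor gap would amplify to a polylogarithmic one. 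The one thing to be careful about is not to present the flow-conservation and decodability bookkeeping as if it were the substance of the argument: that part goes through in directed graphs too, so it cannot by itself yield the conjecture. In short, your write-up is an honest and accurate account of the state of an open problem, not a proof, and should not be spliced in as one.
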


This conjecture
implies the following for our setting: Given an input to the $k$-pairs communication problem, specified by a directed acyclic graph $G$ with edge capacities and a set of $k$ source-sink pairs with a demand of $1$ for every pair, let $r$ be the best achievable network coding rate for $G$. Similarly, let $G'$ denote the undirected graph resulting from making each directed edge in $G$ undirected (and keeping the capacities, source-sink pairs and a demand of $1$ between every pair). Let $r'$ be the best achievable flow rate in $G'$. Conjecture~\ref{conj:main} implies that $r \leq r'$. 

Having defined coding rate and flow rate formally, 
we also mention that the result of Braverman et al. \cite{braverman2016network} implies that if there exists a graph $G$ where the network coding rate $r$, and the flow rate $r'$ in the corresponding undirected graph $G'$, satisfies $r \geq (1+\eps)r'$ for a constant $\eps>0$, then there exists an infinite family of graphs $\{G^*\}$ for which the corresponding gap is at least $(\lg |G^*|)^c$ for a constant $c>0$. So far, all evidence suggest that no such gap exists, as formalized in Conjecture~\ref{conj:main}.

\subsection{Our Lower Bound}

\begin{theorem}[Restatement of Theorem~\ref{thm:introlb}]
%\elaine{add}
Suppose that the Li-Li network coding conjecture~\cite{lilinetcoding} is true. Moreover, suppose that 
each element's payload length $w > \log_2 n - k$, and the key length $k \leq \log_2 n$.
Then, any constant fan-in, constant fan-out boolean circuit that can 
sort $n$ elements each with a $k$-bit key 
and a $w$-bit payload must have 
size at least $\Omega(nk \cdot (w - \log_2 n +k))$.
\label{thm:lb}
\end{theorem}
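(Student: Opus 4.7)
The plan is to follow the general strategy of Farhadi et al.~\cite{sortinglbstoc19} and Afshani et al.~\cite{lbmult}, adapting it to restricted-key sorting. Starting from a sorting circuit $C$ of size $s$, I first convert it to a capacity-$1$ DAG $G_C$ with $|E(G_C)| = O(s)$, introducing $n$ super-sources $S_1,\dots,S_n$ (one per input position, feeding in the $w+k$ bits) and $n$ super-sinks $T_1,\dots,T_n$ (one per output position). Any correct execution of $C$ on an input produces, for each $i$, a routing of the payload at $S_i$ to some $T_{\pi(i)}$ that depends only on the keys. Thus $C$ realizes a network coding solution for each key-induced $n$-pairs problem $\{(S_i, T_{\pi(i)})\}_{i\in[n]}$, with rate essentially equal to the entropy of the payload.

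Next I extract the rate $r := w - \log_2 n + k$. Let $K = 2^k$ and partition the sorted output into $K$ windows of size $n/K$. Any balanced map $\phi\colon[n]\to[K]$ (choice of which bucket each input lands in) is induced by some key assignment; after sorting, the inputs with $\phi(i)=j$ occupy window $j$ in an order determined by the circuit's fixed tie-break rule, giving a specific permutation $\pi_\phi$. To turn bucket matchings into a rate argument, I ``steal'' $\log_2(n/K) = \log_2 n - k$ bits of each payload to serve as intra-window address tags, leaving $r = w - \log_2 n + k$ bits of free payload per source. These stolen bits are carried unmodified by the circuit (they do not affect the $k$-bit sort key), so for each balanced $\phi$ the circuit routes $r$ genuine payload bits from $S_i$ to $T_{\pi_\phi(i)}$ for every $i$. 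By the Li–Li conjecture (Conjecture~\ref{conj:main}), the undirected version of $G_C$ then admits a multicommodity flow of rate $r$ between each pair $(S_i, T_{\pi_\phi(i)})$, for every balanced bucket matching $\phi$.

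It remains to lower-bound the number of edges of any graph that supports this family of simultaneous flow problems. The family of balanced bucket matchings has cardinality $\binom{n}{n/K,\dots,n/K} = 2^{\Theta(nk)}$ (using $k\le\log_2 n$), so the family has entropy $\Theta(nk)$ rather than $\Theta(n\log n)$. A bisection-bandwidth / nested-separator argument in the spirit of Farhadi et al.\ then shows that any graph realizing a constant fraction of these matchings at rate $r$ must have $|E|\ge \Omega(nkr)$: informally, identifying which of the $K$ buckets a source routes to requires $k$ levels of bisection, and at each level the balanced nature of $\phi$ forces $\Omega(n)$ edge-disjoint paths of aggregate capacity $\Omega(nr)$. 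Combining everything, $s = \Omega(|E(G_C)|) \ge \Omega(nk(w-\log_2 n + k))$, which is Theorem~\ref{thm:lb}.

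The main obstacle I expect is the last step: adapting the entropy / bisection lower bound so that the $\log_2 n$ factor of the general sorting lower bound is correctly replaced by $k$. One has to argue that the restricted family of permutations (bucket matchings composed with a single, circuit-determined intra-bucket order) still forces expansion at every one of the $k$ nested bisection scales, despite the fact that the intra-bucket order is not adversarial. A second, more technical point is to verify that the bit-stealing reduction preserves the rate exactly at $w - \log_2 n + k$ and that the hypothesis $w > \log_2 n - k$ is exactly what makes this reduction nontrivial (so the free-payload entropy $r$ is positive); getting the constants right here while routing simultaneously all $n$ flows is where the proof becomes delicate.
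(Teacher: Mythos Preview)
Your high-level framing (convert the circuit to a unit-capacity DAG, steal $\log_2 n - k$ payload bits to pin down an intra-bucket order, leave $r = w - \log_2 n + k$ free bits, and invoke Conjecture~\ref{conj:main}) matches the paper exactly, including the reason the hypothesis $w > \log_2 n - k$ is needed. Where you diverge is in the combinatorial step that turns ``rate $r$ for a family of matchings'' into an edge lower bound.

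You propose to keep the whole family of balanced bucket matchings $\phi$ (entropy $\Theta(nk)$) and run a bisection/nested-separator argument \`a la Farhadi et al.\ to extract a factor $k$ instead of $\log n$. You correctly flag this as the main obstacle, and indeed it is not carried out in your write-up. The paper sidesteps this difficulty completely with a much more elementary argument that uses a \emph{single} key assignment rather than the whole family. The point is that the circuit has fan-out $2$, so from any payload-input-node $v$ the set of nodes reachable within depth $k/2$ has size at most $2^{k/2}$. Since the $2^k$ output windows are disjoint, at most $2^{k/2}$ of the $2^k$ possible keys for $v$ can have their window within distance $k/2$ of $v$. Averaging over a uniformly random balanced key assignment, each payload-input-node has distance $\ge k/2$ to its window with probability $\ge 1 - 2^{-k/2}$; by linearity of expectation there exists one fixed assignment ${\sf keys}^*$ under which a $1 - 2^{-k/2}$ fraction of payload-input-nodes are at distance $\ge k/2$ from their designated output. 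Hard-wire ${\sf keys}^*$ and the $\log_2 n - k$ stolen bits, apply Conjecture~\ref{conj:main} once, and the resulting multicommodity flow has $\Theta(n r)$ sources each needing a path of length $\ge k/2$, giving $\Omega(nkr)$ edges directly.

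So the paper replaces your entropy/bisection step (which you would still have to make rigorous for this restricted permutation family) with a one-line depth-vs-fan-out counting argument plus the probabilistic method. Your approach may well be completable, and would more closely mirror \cite{sortinglbstoc19}, but the paper's route is both shorter and avoids exactly the adaptation you were worried about. One small technical point: the paper does not rely on the circuit having a ``fixed tie-break rule''; instead it explicitly augments $G$ with per-key aggregator nodes that stably sort by the stolen bits, which cleanly pins down the permutation regardless of how the original circuit breaks ties.
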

\begin{proof}
Consider a fixed constant fan-in, constant fan-out boolean circuit; the
topology of the circuit induces a directed graph $G$. 
Without loss of genreality, we may assume that the graph $G$ has in-degree and out-degree 2,
since any constant fan-in, constant fan-out gate can be broken up
into constant number of gates with fan-in and fan-out 2.
Given an input array $I$, 
let ${\sf keys}(I)$
denote the sequence 
of all elements' keys in the input $I$,
and ${\sf payloads}(I)$
denote the sequence of all payloads in the input $I$.
Every element in the input array $I$ 
is represented by $k + w$ input nodes in the graph $G$, 
$k$ input nodes for representing the key henceforth also called
{\it key-input-nodes},
and $w$ input nodes for representing the payload, henceforth
also called {\it payload-input-nodes}.

Without loss of generality, we may assume that $n \geq 2^k$ is a power of $2$.
We consider the set of input 
arrays where each of the $2^k$ distinct keys appears exactly $n/2^k$ times.
Therefore, for each element in the input array, its key will determine at most  
$n/2^k$ possible positions for the element to appear in the sorted output.
This means that for every payload-input-node $v$, depending
on the corresponding element's key denoted ${\sf key}$, 
the input bit assigned to $v$ must be transferred to 
an output node among $n/2^k$ possible choices --- henceforth
we use the notation
$O_{v, {\sf key}}$ to denote the set of $n/2^k$ possible choices of output nodes
for the payload-input-node $v$, determined by 
the corresponding element's key ${\sf key}$.
Let $D_{v, {\sf key}}$ denote the 
minimum distance from $v$ to 
any output node in $O_{v, {\sf key}}$. 
Observe that for any payload-input-node $v$, 
at least 
$2^k - 2^{k/2}$ choices of ${\sf key} \in \{0, 1\}^{k}$
will give  $D_{v, {\sf key}} \geq k/2$.
This is because in $k/2$ depth, $v$ can reach only $2^{k/2}$ nodes in $G$.

With this, we can conclude the following:

\begin{claim} 
Fix any subset $S$ of payload-input-nodes.
There exists
a choice of ${\sf keys}(I)$, henceforth denoted ${\sf keys}^*(I)$, 
where each of the $2^k$ distinct keys appears exactly $n/2^k$ times, 
such that the at least $1 - \frac{1}{2^{k/2}}$ fraction of the payload-input-nodes in $S$, 
denoted $v$,
must satisfy $D_{v, {\sf key}(v)} \geq k/2$
where ${\sf key}(v)$ is the key in ${\sf keys}(I)$ corresponding to $v$.
\label{clm:longpath}
\end{claim}
\begin{proof}
To see this, we can sample  ${\sf keys}(I)$
at random subject to the constraint that 
each of the $2^k$ distinct keys must appear exactly $n/2^k$ times.
For each fixed $v$, 
the probability that $D_{v, {\sf key}(v)} \geq k/2$
is at least \[\frac{2^k-2^{k/2}}{2^k} = 1 - \frac{1}{2^{k/2}}\]
Due to linearity of expectation, we conclude that 
in expectation, there is at least  
$1 - \frac{1}{2^{k/2}}$ fraction of such $v$'s in any fixed $S$ that satisfy 
$D_{v, {\sf key}(v)} \geq k/2$.  
\end{proof}

%We say that ${\sf keys}(I)$ form an input configuration.

%Suppose that we sample the keys in the input at random subject
%to the constraint that each of the $2^k$ distinct keys 
%appears exactly $n/2^k$ times.
%Further, we sample the payloads of the input array uniformly at random.
%Let $I$ be the sampled input array.
%When we run the sorting circuit on $I$, each element in the input
%will fall into some output position.
\ignore{
Suppose that we sample ${\sf keys}(I)$
at random
subject to the constraint that each of the $2^k$ distinct keys appears exactly $n/2^k$ times.
We sample the payloads of the input at random.
When we run the sorting circuit on $I$, each element in the input
will fall into some output position where each position is represented by $k + w$ 
output nodes.
It is not hard to see that 
if the $i$-th element of $I$ falls into the $j$-th output position, then each of the   
$w$ output nodes representing the $j$-th output element must have connectivity in $G$ to  
the corresponding input node 
of the $i$-th element.

We now prove that for an input $I$ sampled like the above,  
in expectation many input nodes representing payloads 
have a relatively long path to its respective output node. 
More specifically, we show that 
in expectation, at least $1/2$ fraction of the payload nodes in the input
are at least distance $k/2$ away from its corresponding output node.

Suppose that this is not the case for the sake of reaching a contradiction. 
%Since the graph $G$ has in-degree and out-degree at most $C$,
In depth at most $k$, any input node $v$ can be connected 
to at most $2^{k/2}$ nodes denoted $N^k(v)$.
Given a fixed input node $v$ corresponding to payload, 
its randomly chosen key also chooses a set of eligible $n/2^k$ output nodes for $v$.
The probability that $N^k(v)$ intersects one of the  
eligible output nodes 
is at most $\frac{1}{2^k} \cdot 2^{k/2}$
}

%For any fixed input wire encoding one bit of some element's payload,  
%there are at most $n/2^k$ 
%wires in the output that can correspond to its output location.

%elements in the input $I$ is at distance
%at least $k$ away from its output position.
%Let $C$ be an upper bound on the in-degree and out-degree of the boolean circuit.
%We know that in depth $k$, 
We now continue with the proof of Theorem~\ref{thm:lb}.

\paragraph{Augmenting the graph.} 
So far we have considered the directed graph $G$ that represents the circuit.
We now augment the graph $G$ into a new graph $G'$ as follows.
We add $2^k$ nodes, henceforth called {\it aggregators}.
We now add a directed edge  
from every output node
in $G$ corresponding to an element with the key $i$,
to the $i$-th aggregator.
The $i$-th aggregator node will now stably sort these 
elements (whose keys are $i$) 
based on the first $\log_2 n - k$ bits   
\elaine{this assumes w large}
of their payload, and output the sorted elements --- the corresponding output
nodes become the output nodes of 
the new graph $G'$. In total, $G'$ has 
$(k + w) n$ output nodes, corresponding to a total of $n$ elements; the output nodes
are ordered first by the element's keys, 
using the first $\log_2 n -k$ bits of the elements' payload to break ties.

Now suppose we fix the choice ${\sf key}^*(I)$ 
as mentioned above, 
for all input elements 
with the same key, we choose the first  
$\log_2 n-k$ bits of their payload 
based on the order in which they appear in the input, i.e., for 
some key $k$, the leftmost element  
with key $k$ receives $0$ as the  
first $\log_2 n-k$ bits of their payload, the second
leftmost element with key $k$ receives $1$
as the first $\log_2 n-k$ bits of their payload, and so on.
The remaining $w-\log_2  n + k$
bits of every element's payload is chosen at random.
We may consider the key bits and the first $\log_2 n-k$ bits 
of the payloads as being hard-wired into the network, and every edge has capacity 1 ---
henceforth we call this resulting network $G''$.
The resulting network solves a $k$-pairs communication problem --- which
source is paired with which destination is determined by ${\sf keys}^*(I)$
and the first $\log_2 n - k$ bits of each payload.
Moreover, the network routes $(w-\log_2 n + k) n$ input nodes 
to $(w-\log_2 n + k) n$ 
output nodes, where each input node receives a uniform random bit as input.

\paragraph{Applying the Conjecture~\ref{conj:main}.}
Now, consider the undirected version of $G''$ 
where each edge's capacity is still 1.
Based on Conjecture~\ref{conj:main}, 
the undirected version of $G''$ should solve the corresponding multi-commodity 
flow problem where each of the $(w-\log_2 n + k) n$ input nodes 
wants to route a commodity to each of the $(w-\log_2 n + k) n$ output nodes.
In the solution of the multi-commodity
flow problem, the $i$-th aggregator node in $G''$ must have 
$\frac{n}{2^k} \cdot (w-\log_2 n + k)$
amount of flow coming in; further, there is one unit of flow
corresponding to each of the last $w-\log_2 n+k$
payload bits for each element whose key is $i$.
Due to Claim~\ref{clm:longpath}, 
$1-1/2^{k/2}$
fraction of the input nodes in $G''$ 
has a path of length at least $k/2 + 1$ to 
its corresponding aggregator node.
%and the fact that each edge's capacity is 1, 
We conclude that 
$G''$ has  
at least $\frac{n}{2^k} \cdot (w-\log_2 n + k) \cdot 2^k \cdot (k/2 + 1)$
edges not including the edges between the aggregator nodes
and the $(w-\log_2 n + k) n$ output nodes.

Thus, the graph $G''$ has at least 
$  \frac{nk}{2} \cdot (w-\log_2 n + k)$
edges excluding the aggregator nodes and all its incident edges.
This means that the original circuit's size is at least 
$\Omega(nk \cdot (w-\log_2 n + k))$.
%edges excluding all the aggregator nodes 
%edges connected to the aggregator   

\end{proof}

Finally, as mentioned earlier, our lower bound requires
that $w \geq \log_2 n - k$. 
Technically, this is because in our proof, we steal 
$\log_2 n -k$ bits from the payload to fix an ordering 
among the elements with the same key.
Our lower bound shows the near optimality of our construction for sufficiently large $w$.
For small $w$, it is an interesting open question whether a better upper bound exists ---
however, 
to answer this question would necessarily require us to consideration 
of algorithms that are {\it not} in the indivisible
model 
due to the $\Omega(nk)$ lower bound on the number of selector gates
for any algorithm in the indivisible model~\cite{osortsmallkey}.
%It is an interesting open problem 
%whether for small $w$, 
%there is a better upper bound 

\section{Conclusion and Future Work}
\label{sec:conclusion}

In this paper, we showed a theoretical generalization 
of the AKS sorting circuit. We show that for sorting
$n$ elements each described with a $k$-bit key and a $w$-bit payload, 
a circuit of size $O(n (k + w) \cdot k) \cdot \poly(\log^* n - \log^*(k+w))$
suffices. Specifically, when $k = o(\log n)$, our circuit size is asymptotically
better than AKS (ignoring $\poly\log^*$ terms).
As a special case and stepping stone to our main result,
we also show that compaction and selection can be computed with linear-sized circuits. 
We also show that our result is nearly optimal for every choice of $k$ 
as long as $k = O(\log n)$.

Our work leaves open the following future directions:
\begin{enumerate}
\item 
An obvious open question is to get rid of the extra $\poly\log^*$ terms
in the circuit size.
\item 
Another open question concerns the depth of the circuit.
In this work, we cared mostly about minimizing the circuit size, but 
not the depth.
Therefore, an open question is, {\it 
can we sort $n$ elements each described with a $k$-bit key and a $w$-bit payload,
with a circuit of size $O(nk \cdot (k+w))$
and depth $O(\log n)$?
}
We note that $\Omega(\log n)$
depth is necessary even for compaction, i.e.,  1-bit sorting. 
Observe that a compaction circuit 
can compute the logical-or of $n$ bits, which is known to have a $\Omega(\log n)$-depth
lower bound even on a Concurrent-Read-Exclusive-Write (CREW) PRAM~\cite{orlbpram}; and 
clearly, a circuit of depth $d$ can be simulated by a CREW PRAM of depth $d$. 

%\elaine{to add: why is log n depth necessary?}
\item 
Our construction currently has an enormous constant. {\it Can we attain asymptotically the same
result but with smaller concrete constants?}

\item 
Finally, as mentioned earlier, another question is whether 
better upper bounds 
exist for small $w$ --- as mentioned, if such upper bounds existed, they cannot
be in the indivisible model due to 
the $\Omega(n k)$ lower bound 
on the number of selector gates in the indivisible model~\cite{osortsmallkey}.
\end{enumerate}

%\nocite{opqkasper,ods}

\section*{Acknowledgments}
Elaine Shi would like to thank Bruce Maggs for explaining the  
AKS sorting network~\cite{aks}, 
for numerous extremely helpful discussions regarding 
the elegant Arora, Leighton, Maggs self-routing permutation network~\cite{alm90}
and Pippenger's 
self-routing super-concentrator~\cite{selfroutingsuperconcentrator},
and for his invaluable feedback on the positioning and 
 presentation of this work.
We also would like to thank Ilan Komargodski 
for his insightful suggestions regarding the positioning of this work. 
We thank the SODA'21 reviewers for their thoughtful feedback.

This work is in part supported by 
an NSF CAREER Award under the award number CNS-1601879, 
a Packard Fellowship, an ONR YIP award, and a DARPA Brandeis award.
 Gilad Asharov is sponsored by the
Israel Science Foundation (grant No. 2439/20), 
and by the BIU Center for Research in Applied
Cryptography and Cyber Security in conjunction with the Israel 
National Cyber Bureau in the
Prime Minister's Office. This project has received funding from the 
European Union's Horizon 2020
research and innovation programme under the Marie Sk lodowska-Curie 
grant agreement No. 891234.

\nocite{optorama}
\bibliographystyle{alpha}
\bibliography{allcited}

\appendix

% !TEX root = compaction-circuit.tex

\section{Compaction Circuit for Tiny $w$}
\label{sec:tinyw}
We now describe a compaction circuit for very small $w$, e.g., when
$w$ is constant or slightly larger than a constant.
Recall that our Theorem~\ref{thm:introcompaction}
states that there is a circuit of
$O(n w) \cdot \min\left( \poly (\log^*n - \log^* w)), \ \ 2^w/w \right)$
size to compact $n$ elements each of bit-width $w$.
Section~\ref{sec:lineartc}
obtained the $O(n w) \cdot \poly (\log^*n - \log^* w)$ part of the result;
therefore this section shows
that there is a circuit of size
$O(n \cdot 2^w)$ that can compact $n$ elements each of bit-width $w$.
\elaine{changed the earlier thm to be consistent with this}

%Henceforth we imagine that each element in $\{0, 1, \ldots, 2^w-1\}$ 
%has two copies, a distinguished
%copy and a non-distinguished copy. 
Henceforth, we assume that each distinguished element is marked with the label 0
and each non-distinguished element is labeled with 1.
Given an element of bit-width $w$, we can define an {\it extended element} whose length is  
$w+1$ bits by concatenating the element's distinguished label and its payload.
Therefore, an extended element can take value from the domain $\{0, 1, \ldots, 2\cdot 2^w -1\}$.
To achieve compaction, we will actually view all elements as extended
elements and sort all of them. 
To achieve this, we will perform the following:

\ignore{
Note that an element $v$ of bit-width $w$ can only take values from $v \in \{0, 1, \ldots, 2^w-1\}$. 
For each $v \in \{0, 1, \ldots, 2^w-1\}$,
we count the number of occurrences for $v$ that are distinguished (to be stored
in ${\sf cnt}_0[v]$),
and number of occurrences for $v$ that are non-distinguished
(to be stored in ${\sf cnt}_1[v]$).
}

\ignore{
First, count the number of occurrences 
for each of the $2\cdot 2^w$ possible extended elements; 
the result is written down in an array denoted ${\sf cnt}[0:2\cdot 2^w-1]$ which
contains $2\cdot 2^w$ entries.
We can write down all these $2 \cdot 2^w$ counters 
with a circuit of $2^w \cdot 6n$ size by Fact~\ref{fct:countprelim}. 
}

First, for each extended element $v \in \{0, 1,\ldots, 2 \cdot 2^w -1\}$, 
count how many times extended elements
of value at most $v$ have appeared in the input array. 
The result is stored in an array called ${\sf psum}$.
%We now compute all prefix-sums 
%for the array ${\sf cnt}$; and this  
%can be accomplished an $O(2^w \log n)$-sized circuit.
%Let ${\sf psum}$ denote the outcome; 
Now, the length of ${\sf psum}$ is $2 \cdot 2^w$
and each entry of ${\sf psum}$ is encoded with $\log n + 1$ bits.
The $v$-th entry of ${\sf psum}$ stores how many extended elements 
have value at most $v$. 
To compute ${\sf psum}$ with an $O(2^w n)$-sized circuit,
it suffices to perform the procedure below.
\begin{enumerate}
\item 
For each extended element of a value $v$,
run the binary-to-unary conversion (Fact~\ref{fct:binary_to_unary}) 
to get the $(2\cdot 2^w-1)$-dimensional vector 
such that the head $v$ coordinates are 0s and the tail $2\cdot 2^w - 1-v$ coordinates are 1s.
This conversion yields $n$ vectors (each of $2\cdot 2^w -1$ bits)
and takes $O(n \cdot 2^w)$ generalized boolean gates.

\item 
For each $i \in \{0, 1,\ldots, 2 \cdot 2^w -1\}$,
count the number of 1s in the $i$-th coordinate of all $n$ vectors,
and let the result be the $i$-th entry of ${\sf psum}$.
Using Fact~\ref{fct:countprelim},
this counting takes $O(n \cdot 2^w)$ generalized boolean gates.
\end{enumerate}
% It is not hard to see that ${\sf psum}$ can be computed

From this array ${\sf psum}$, we can generate the sorted output array 
using a binary tree with $n$ leaves. Imagine that each output position is a leaf node in the tree; 
without loss of generality, we may assume that $n$ is a power of $2$.
Initially, the root holds ${\sf psum}$ and every entry
in ${\sf psum}$ takes $\log n + 1$ bits to encode. %:= {\sf cnt}_1 || {\sf cnt}_0$.
The root now prepares an array ${\sf psum}_L$ 
to send to the left child by setting 
every entry in ${\sf psum}$ greater than $n/2$ to %a special $\infty$ symbol. 
$n/2$. 
Therefore, every entry in ${\sf psum}_L$
takes one fewer bit to encode than the original ${\sf psum}$. 

Similarly, the root prepares an array ${\sf psum}_R$
to send to the right child
by setting every entry in ${\sf psum}$ 
smaller than or equal to $n/2$ to $0$, and 
every entry whose value is at least $n/2$ to the original value minus $n/2$.
Therefore, each entry in ${\sf psum}_R$
also takes one fewer bit to encode than the original ${\sf psum}$.

We continue this process at every level of the tree (where the root is assumed
to be at level 0).
Each node in level $i$ sends an array to its left child and right child, 
and the number of bits needed to encode each entry  
of the array is $\log n - i$. 
An array ${\sf psum}'$ 
received by a node in the tree always encodes the prefix sums within its own subtree;
specifically ${\sf psum}'[v]$ encodes how many times extended elements at most $v$
have appeared in its subtree.
%\elaine{the array is copied twice}

Finally, when a leaf node receives an incoming array ${\sf psum}'$, scan through the array to 
find the first index $v$  
such that  ${\sf psum}'[v]$ is non-zero 
and output $v$. 
The final sorted array 
is obtained by concatenating all leaves' output values from left to right.
%and the output element
%at this leaf is $(v \mod 2^w)$. 

The computation done by each node at level $i$ in the tree
can be implemented with an $O(2^w (\log n - i))$-sized circuit.

Summarizing the above, 
it is not hard to see that the entire computation  
can be implemented 
with an $O(2^w n)$-sized circuit.

%\elaine{TO FIX: not the same expression as in the theorem}

\section{Optimizations and Tightened Theorem Statement}
\label{sec:tighten}

In this section, 
we fine-tune the parameters in the construction 
of Section~\ref{sec:tightfromloose},~\ref{sec:looseFromTight}, and~\ref{sec:lineartc}
to minimize the polynomial $\poly(\cdot)$
in Theorem~\ref{thm:main}.
As stated in the following,
this $\poly(\cdot)$ can be chosen to nearly quadratic.

\begin{theorem}[Linear-sized tight compaction, tightened]
For any constant $\gamma > 0$,
there exists a constant fan-in, constant fan-out boolean circuit 
that solves $(n, w)$-tight compaction and the total
number of boolean gates is upper bounded by $\max\left(1,(\log^* n - \log^* w)^{2+\gamma}\right) \cdot O(n \cdot w)$.
\label{thm:optimized_result}
\end{theorem}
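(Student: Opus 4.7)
The plan is to tighten the multiplicative constants in the $\LC_i \to \TC_{i+1} \to \LC_{i+1}$ bootstrap of Section~\ref{sec:repeated_bootstrap}. In the current analysis the per-round blowup on selector gates is roughly $4 \cdot 2.07 \approx 8.5$: the factor of $4$ comes from the swapper's halving recursion in Theorem~\ref{thm:tightcompact} (one forward and one reverse loose-compactor per level, summed geometrically with ratio $1/2$), while the factor of $2$ comes from Theorem~\ref{thm:loosecompact} (one outer plus one layer of inner tight compactors). This yields exponent $\log 8.5 \approx 3.1$ after $d = \log(\log^* n - \log^* w)$ iterations. For exponent $2+\gamma$ the per-round blowup must be at most $2^{2+\gamma} = 4 \cdot 2^\gamma$, so my plan is to drive the swapper factor arbitrarily close to $2$ while keeping the loose-from-tight factor at $2$.

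The key step is to parameterize the loose compactor and loose swap by a constant $k = k(\gamma)$. I would build a strong loose swap $\mathsf{LSW}_k$ whose output has at most $n/(128 k)$ colored elements, and a strong loose compactor $\mathsf{LC}_k$ accepting sparsity $\leq 1/(128 k)$ and outputting $\approx n/k$ elements. Both are obtained by instantiating Pippenger's expander-based constructions of Sections~\ref{sec:lcprelim} and \ref{sec:lswprelim} with bipartite expander graphs whose constant degree and spectral gap depend on $k$; the {\sf ProposeAcceptFinalize} subroutine (Algorithm~\ref{alg:propose-accept-finalize}) and the 2-hop swap of Theorem~\ref{thm:lswap} still converge in $O(\log n)$ iterations, so both circuits have $O_k(n \log n)$ generalized gates and $O_k(n)$ selectors.

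Next I would re-run the swapper analysis (Algorithm~\ref{alg:swap}) with $\mathsf{LC}_k$ replacing $\LC$, so that each recursion level compresses by $1/k$ instead of $1/2$. The geometric sum across levels becomes $\sum_i (1/k)^i = k/(k-1)$, which approaches $1$ for large $k$; this tightens the ``$4 S_{\rm lc}$'' term in Fact~\ref{fct:swap} to ``$(2 k/(k-1)) S_{\rm lc}$''. In Theorem~\ref{thm:loosecompact}, re-tuning the dense-chunk threshold to $f(n)/(32 k)$ leaves the selector blowup at $2$ (both the outer and inner tight compactors contribute $\leq C_{\rm tc} n$ selectors) while the output compresses further. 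Composing, the per-round selector blowup becomes $4 k/(k-1) \leq 4 \cdot 2^\gamma$ as soon as $k \geq 2^\gamma/(2^\gamma - 1)$, a constant depending only on $\gamma$. The analogous per-round gate blowup is smaller (approaching $4/3$), so the gate contribution is dominated.

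Running the bootstrap for $d = \lceil \log(\log^* n - \log^* w) \rceil$ iterations then gives total selector blowup $(4 \cdot 2^\gamma)^d \leq (\log^* n - \log^* w)^{2+\gamma}$, and hence total circuit size $(\log^* n - \log^* w)^{2+\gamma} \cdot O_\gamma(nw)$, as stated in Theorem~\ref{thm:optimized_result}. The main technical burden is verifying the parameterized primitives: the density invariants across levels of the swapper must still hold ($\mathsf{LSW}_k$ restores density to $1/(128 k)$ before each $\mathsf{LC}_k$, whose output density is then $\leq 1/128$ for the next loose swap), and the $k:1$ expander families needed for $\mathsf{LSW}_k$ and $\mathsf{LC}_k$ must be obtained from standard spectral-expander constructions with tunable degree and spectral gap. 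Since $k$ depends only on $\gamma$, all $k$-dependent constants are absorbed into the $O_\gamma(\cdot)$ factor in the final bound.
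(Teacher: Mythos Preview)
Your approach is essentially the paper's: re-parameterize the bootstrap so that the loose compactor used inside the swapper compresses by a ratio $\epsilon_2$ (your $1/k$) arbitrarily close to $0$, driving the geometric sum in Fact~\ref{fct:swap} to $1/(1-\epsilon_2)$ and hence the per-round selector blowup to $4/(1-\epsilon_2)$; choosing $\epsilon_2 = 1 - 2^{-\gamma}$ (your $k \geq 2^\gamma/(2^\gamma-1)$) then yields exponent $2+\gamma$ after $d = \log(\log^* n - \log^* w)$ iterations.

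Two points are worth correcting. First, your stated parameters do not give compression to $n/k$: with loose-swap output sparsity $1/(128k)$ and dense-chunk threshold $f(n)/(32k)$, the fraction of dense chunks is $\epsilon_4 = \frac{1/(128k)}{1/(32k)} = 1/4$, so the dense chunks alone already contribute $n/4$ to the output regardless of $k$. To make Section~\ref{sec:looseFromTight} output $\leq n/k$ you need the loose swapper stronger still, with sparsity $O(1/k^2)$; the paper takes $\epsilon_3 = \epsilon_4 = \epsilon_2/2$ and $\epsilon_1 = \epsilon_3\epsilon_4 = (\epsilon_2/2)^2$. Second, you do not need to rebuild the expander-based loose compactor of Section~\ref{sec:lcprelim} with $k{:}1$ expanders. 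That circuit serves only as the base case $\LC_0$; at every subsequent bootstrap level the loose compactor is the one from Section~\ref{sec:looseFromTight} built out of tight compactors, and only that construction needs re-tuning. The paper makes exactly this observation in its closing remark.
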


\begin{proof}
The case $w > \log n$ is 
easy (see Footnote~\ref{fnt:bigw})
%identical to that of Theorem~\ref{thm:main}
%(which gives $\max(1,\cdot)$ term) 
so henceforth we focus on the case $w \le \log n$.
We begin with re-parameterizing the construction with some $\epsilon_1, \epsilon_2, \ldots$,
analyzing the constraints between the parameters,
and then choosing them appropriately to satisfy the given $\gamma > 0$.

The construction is re-parameterized as follows with 
our previous parameters noted.
\begin{itemize}
\item 
In Step~\ref{step:swap} of Algorithm~\ref{alg:swap}, 
call an $(n,w)$-loose swapper such that at most $\epsilon_1$ fraction
of resulting array remain colored.
(Previously $\epsilon_1 := 1/128$.)

\item 
In Section~\ref{sec:loosecompact}, we aim to construct an $(n,w)$-loose compactor
such that takes as input at most $\epsilon_1 \cdot n$ real elements
and outputs an array of $\epsilon_2 \cdot n$ elements.
We re-parameterize the construction in Section~\ref{sec:loosecompact}
as follows.
(Previously $\epsilon_2 := 1/2$.)
  \begin{itemize}
  \item 
  In Step~\ref{step:countperchunk} of Section~\ref{sec:loosecompact}, 
  define a chunk as \emph{sparse} if there are at most $\epsilon_3 \cdot f(n)$
  real elements in it; otherwise define it \emph{dense}.
  (Previously $\epsilon_3 := 1/32$.)

  \item 
  Hence, at most $\epsilon_4 \cdot \frac{n}{f(n)}$ chunks are dense,
  and at least $(1- \epsilon_4)$ fraction of chucks are sparse,
  where $\epsilon_4 := \epsilon_1 / \epsilon_3$.
  Use $\epsilon_4$ and $1-\epsilon_4$ as thresholds in 
  Step~\ref{step:compactperchunk} of Section~\ref{sec:loosecompact}.
  (Previously $\epsilon_4 := 1/4$.)
  \end{itemize}
\end{itemize}

We then analyze the constraints on the above parameters.
As noted in Theorem~\ref{thm:lswap}, for any constant $\epsilon_1 > 0$,
there is a linear sized $(n,w)$-loose swapper,
so $\epsilon_1$ is not constrained.
Then, by the above parameters $\epsilon_3$ and $\epsilon_4$,
the output size of Section~\ref{sec:loosecompact} is
$\epsilon_4 \cdot n + \epsilon_3 f(n) \cdot \ceil{(1- \epsilon_4)\cdot \frac{n}{f(n)}}$,
which aimed to be at most $\epsilon_2 \cdot n$.
For any constant $\epsilon_4 \in (0,1)$, $f(x) \leq \log x$,
there exists a constant $n_0 \in \N$ such that 
$\ceil{(1- \epsilon_4)\cdot \frac{n}{f(n)}} \leq \frac{n}{f(n)}$
for all $n > n_0$.
Hence, we need $\epsilon_3 + \epsilon_4 \leq \epsilon_2$
to achieve the $\epsilon_2 \cdot n$ output size.

Next, we inductively calculate the circuit size in the bootstrapping 
of Section~\ref{sec:repeated_bootstrap}.
The bootstrapping starts from Algorithm~\ref{alg:loose-compaction},
which takes $C(1) \cdot n \log_2 n$ generalized boolean gates
and $C(1) \cdot n$ number of $w$-selector gates for some constant $C(1)$.
Let $c_1 \cdot n$ be the number of generalized boolean gates
and $c_2 \cdot n$ be the number of $w$-selector gates
of the $(n,w)$-loose swapper.
After $d$ times of repeated bootstrapping,
assume for induction that the loose compactor takes
$C(d)\cdot n f(n)$ generalized boolean gates
and $C(d)\cdot n$ number of $w$-selector gates,
where $C(d)$ is a function of $d$ (rather than $n$).
To construct a tight compactor from such loose compactor,
we count and color elements, which takes $c_3 n$ boolean gates 
for some constant $c_3$ in Section~\ref{sec:tightcompact};
Then, we run the loose swapper and the $(n,w+1)$-loose compactor 
recursively, where problem size $n$ is decreasing by $\epsilon_2$.
Hence, the tight compactor takes following costs.
\begin{align*}
\mbox{generalized boolean gates: } &
\frac{1}{1- \epsilon_2} \cdot \big(2c_1n + C(d)\cdot n f(n) + (c_2+c_3+C(d))\cdot n\big)
\\
\mbox{$w$-selector gates: } &
\frac{1}{1- \epsilon_2} \cdot \big(c_2n + 2C(d)\cdot n\big)
\end{align*}
We continue to construct the $(d+1)$-bootstrapped loose compactor
by marking dense and sparse chunks,
which takes $c_4n$ boolean gates.
Afterwards, we run 1 instance of $(\ceil{n/f(n)}, f(n)\cdot w)$-tight compactor
and $\ceil{(1- \epsilon_4)\cdot \frac{n}{f(n)}}$ instances of 
$(f(n), w)$-tight compactor.
Let $\tilde n_1 := \ceil{n/f(n)}$ and $\tilde n_2 := \ceil{(1- \epsilon_4)\cdot \frac{n}{f(n)}}$ for short.
Then, the number of generalized boolean gates is
\begin{align*}
c_4n
& + \frac{1}{1- \epsilon_2} \cdot 
  \big(2c_1 \tilde n_1 + C(d)\cdot \tilde n_1 f(\tilde n_1) + (c_2+c_3+C(d))\cdot \tilde n_1\big) 
\\
& + \frac{1}{1- \epsilon_2} \cdot 
  \big(2c_1 f(n) + C(d)\cdot f(n) f(f(n)) + (c_2+c_3+C(d))\cdot f(n)\big) \cdot \tilde n_2
\\
& \leq \frac{1}{1- \epsilon_2} \cdot 
  \big(2C(d)\cdot n + C(d) \cdot f(f(n)) \cdot n + C(d) \cdot \tilde n_1 + c_5 n\big),
\end{align*}
where $c_5 = c_4 + 4c_1 + 2c_2+2c_3$ is a constant,
and the rounding up in $\tilde n_1$ and $\tilde n_2$ are absorbed by $\epsilon_4$ for all $n > n_0$.
Similarly, the number of $w$-selector gates is
\begin{align*}
\frac{1}{1- \epsilon_2} \cdot \big(c_2 \tilde n_1 + 2C(d)\cdot \tilde n_1\big) \cdot f(n)
+ \frac{1}{1- \epsilon_2} \cdot \big(c_2 f(n) + 2C(d)\cdot f(n)\big) \cdot \tilde n_2
\\
\leq \frac{1}{1- \epsilon_2} \cdot \big(4C(d)\cdot n + 2c_2 n\big).
\end{align*}
We choose $C(d+1) := C(d) \cdot \frac{4}{1- \epsilon_2} + \frac{c_5+2c_2}{1 -\epsilon_2}$
so that the number of generalized boolean gates is at most $C(d+1) \cdot f(f(n)) \cdot n$
and the number of $w$-selector gates is at most $C(d+1) \cdot n$.
As the base case $C(1)$ is a constant,
the recursion of $C(d)$ solves to $C(d) = O\left((\frac{4}{1 -\epsilon_2})^d\right)$.

% Given the function $f(x) \leq \log x$,
% suppose the $(n,w)$-tight compactor is implemented by a circuit
% with $C_\tc \cdot n \cdot f(n)$ generalized boolean gates
% and $C_\tc \cdot n$ number of $w$-selector gates.
% Then, it takes
% $O(n) + C_\tc \cdot n
%  + \ceil{(1- \epsilon_4)\cdot \frac{n}{f(n)}} \cdot C_\tc \cdot f(n) \cdot f(f(n))$ 
% generalized boolean gates, where each term comes from each corresponding step of
% Section~\ref{sec:loosecompact}.
% Also, the number of $w$-selector gates it takes is
% $0 + C_\tc \cdot n
%  + \ceil{(1- \epsilon_4)\cdot \frac{n}{f(n)}} \cdot C_\tc \cdot f(n)$.
% As $\ceil{(1- \epsilon_4)\cdot \frac{n}{f(n)}} \leq \frac{n}{f(n)}$ 
% for large enough $n$, 
% the multiplicative factor on $C_\tc$ grows by at most $2$.

% Afterwards, the bootstrapping continued with Algorithm~\ref{alg:swap},
% which is the key step to construct tight compactor from a loose compactor.
% Let $B_\lc(n)$ (and $S_\lc(n)$ resp.)
% be the number of generalized boolean gates (and $w$-selector gates resp.)
% of the $(n,w)$-loose compactor.
% Because the loose compactor now outputs
% an $\epsilon_2 \cdot n$ sized array,
% the number of generalized boolean gates consumed by the loose compactor is
% $2\left(B_\lc(n) + B_\lc(\epsilon_2 \cdot n) + B_\lc(\epsilon_2^2 \cdot n) + \dots \right)$,
% which is at most $\frac{2}{1- \epsilon_2} B_\lc(n)$.
% Similarly, the number of $w$-selector gates is at most 
% $\frac{2}{1- \epsilon_2} S_\lc(n)$.

Putting together, after repeated bootstrapping for $d$ times,
the multiplicative overhead is $\left(\frac{4}{1- \epsilon_2}\right)^d$,
which equals to $(\log^*n - \log^* w)^{2+\log_2 \frac{1}{1- \epsilon_2}}$
as $d = \log_2 (\log^*n - \log^* w)$.
Hence, for any $\gamma > 0$,
it suffices to choose $\epsilon_2 = \frac{2^\gamma - 1}{2^\gamma}$,
$\epsilon_3 = \epsilon_4 = \epsilon_2 / 2$,
and $\epsilon_1 = \epsilon_3 \cdot \epsilon_4$
to satisfy all the above constraints.

We remark that in Algorithm~\ref{alg:loose-compaction}, 
we need that the input to loose compaction is 
sparse enough (i.e., 1/128) so the {\sf ProposeAcceptFinalize} works.
However, any $\epsilon_1 \leq \epsilon_2$ works
in Section~\ref{sec:loosecompact} because we use tight compactor directly.
Hence, Algorithm~\ref{alg:loose-compaction} is only used in 
the base case of the repeated bootstrapping 
(and we do not tune its parameters).
\end{proof}

\end{document}